\newcommand{\Ex}[1]{\mathbb{E}\left[#1\right]}
\renewcommand{\Pr}[2][]{\mathbb{P}_{#1}\left[#2\right]}
\newcommand{\lat}{\ell}
\newcommand{\lavg}{L_{\text{av}}}
\renewcommand{\L}{\mathcal{L}}
\newcommand{\paths}{\mathcal{P}}
\newcommand{\e}{\mathrm{e}}
\newcommand{\expf}[1]{\e^{#1}}
\newcommand{\Oh}[1]{\mathcal{O}\left(#1\right)}
\newcommand{\DX}{\Delta\tilde x_e}
\newcommand{\EDX}{\Ex{\DX}}
\newcommand{\Dl}{\Delta\tilde\lat_e}
\newcommand{\IP}{\textsc{Imitation Protocol}\xspace}
\newcommand{\XP}{\textsc{Exploration Protocol}\xspace}
\newcommand{\PLS}{\textsf{PLS}\xspace}
\newcommand{\etal}{{et al.}\xspace}
\newcommand{\ie}{{i.\,e.}\xspace}
\newcommand{\calN}{\mathcal{N}}
\newcommand{\calR}{\mathcal{R}}
\newcommand{\Rin}{\mathcal{R}_{\mbox{\tiny in}}}
\newcommand{\Rout}{\mathcal{R}_{\mbox{\tiny out}}}
\newcommand{\Sin}{S^{\mbox{\tiny in}}}
\newcommand{\Sout}{S^{\mbox{\tiny out}}}
\newcommand{\Sinit}{S^{\mbox{\tiny init}}}
\newcommand{\NN}{\ensuremath{\mathbb{N}}}
\newcommand{\RR}{\ensuremath{\mathbb{R}}}
\newtheorem{definition}{Definition}
\newtheorem{theorem}{Theorem}
\newtheorem{corollary}[theorem]{Corollary}
\newtheorem{lemma}[theorem]{Lemma}
\newtheorem{fact}[theorem]{Fact}
\DeclareMathOperator{\opt}{opt}
\DeclareMathOperator{\poly}{poly}
\title{Concurrent Imitation Dynamics in Congestion Games\thanks{This work was in part supported by the DFG through German UMIC-excellence cluster at RWTH Aachen University.}}
\author{
  Heiner Ackermann \\ RWTH Aachen University \\ ackermann@cs.rwth-aachen.de \and
  Petra Berenbrink\thanks{Supported by an NSERC grant. Part of this work was done while author visited RWTH Aachen University.} \\ Simon Fraser University \\ petra@cs.sfu.ca \and
  Simon Fischer \\ RWTH Aachen University \\ fischer@cs.rwth-aachen.de \and
  Martin Hoefer\thanks{Supported by the German Academic Exchange Service (DAAD) within the PostDoc-Program.} \\ Stanford University \\ mhoefer@cs.rwth-aachen.de
}
\begin{document}

\maketitle

  \begin{abstract}
  Imitating successful behavior is a natural and frequently applied
  approach to trust in when facing scenarios for which we have little
  or no experience upon which we can base our decision. In this paper,
  we consider such behavior in atomic congestion games. We propose to
  study concurrent imitation dynamics that emerge when each player
  samples another player and possibly imitates this agents' strategy
  if the anticipated latency gain is sufficiently large. Our main
  focus is on convergence properties. Using a potential function
  argument, we show that our dynamics converge in a monotonic fashion
  to stable states. In such a state none of the players can improve
  its latency by imitating somebody else.

  As our main result, we show rapid convergence to approximate
  equilibria. At an approximate equilibrium only a small fraction of
  agents sustains a latency significantly above or below average. In
  particular, imitation dynamics behave like fully polynomial time
  approximation schemes (FPTAS). Fixing all other parameters, the
  convergence time depends only in a logarithmic fashion on the number
  of agents.
  

  Since imitation processes are not innovative they cannot discover
  unused strategies. Furthermore, strategies may become extinct with
  non-zero probability. For the case of singleton games, we show that
  the probability of this event occurring is negligible. Additionally,
  we prove that the social cost of a stable state reached by our
  dynamics is not much worse than an optimal state in singleton
  congestion games with linear latency function. Finally, we discuss
  how the protocol can be extended such that, in the long run,
  dynamics converge to a Nash equilibrium.
\end{abstract}


  \section{Introduction}

We study imitation dynamics that emerge if myopic players concurrently
imitate each other in order to improve on their own situation.  In
scenarios for which players have little or no experience upon which
they can base their decisions, or in which precise knowledge about the
available options and their consequences is absent, it is a good
strategy to \emph{imitate} successful behavior. Thus, it is not
surprising that such imitating behavior can frequently be observed,
and has already been studied intensively in economics and game
theory~\cite{Hofbauer/Sigmund:EGT:98,Weibull:EGT:95}.

We analyze such imitation dynamics in the context of symmetric
congestion games~\cite{Rosenthal:PureNash:73}. As an example of such a
game consider a network congestion game in which players strive to
allocate paths with minimum latency between the same source-sink pair
in a network. The latency of a path equals the sum of the latencies
of the edges in that path and the latency of an edge depends on the
number of players sharing it.

We consider a simple imitation rule according to which players strive
to improve their individual latencies over time by imitating others in
a concurrent and round-based fashion.  This \IP has several appealing
properties: it is simple, stateless, based on local information, and
is compatible with the selfish incentives of the players. The \IP
consists of a sampling and a migration step. First, each player
samples another player uniformly at random. Then he considers the
latency gain that he would have by adopting the strategy of the
sampled player, under the assumption that no-one else changes his
strategy. If this latency gain is not too small our player adopts the
sampled strategy with a \emph{migration probability} mainly depending
on the anticipated latency gain. The major technical challenge in
designing such a concurrent protocol is to avoid \emph{overshooting
  effects}. Overshooting occurs if too many players sample other
players currently using the same strategy, and if all of them migrate
towards it. In this case their latency might be greater than before
the migration. In order to avoid overshooting, the migration
probabilities
have to be defined appropriately without sacrificing the benefit of
concurrency. We propose to scale the migration probabilities by the
\emph{elasticity} of the latency functions in order to avoid
overshooting. The elasticity of a function at point $x$ describes the
proportional growth of the function value as a result of a
proportional growth of its argument. Note that in case of polynomial
latency functions with positive coefficients and maximum degree $d$
the elasticity is upper bounded by $d$.

A natural solution concept in this scenario is imitation-stability. A
state is \emph{imitation-stable} if no more improvements are possible
based on the \IP. We analyse convergence properties with respect to
this solution concept.

\subsection{Our Results}

As our first result we prove that the \IP succeeds in avoiding
overshooting effects and converges in a monotonic fashion
(Section~\ref{Sec:ExactConvergence}). More precisely, we show that a
well-known potential function (Rosenthal~\cite{Rosenthal:PureNash:73})
decreases on expectation as long as the system is not yet at an
imitation-stable state. Thus, the potential is a
\emph{super-martingale} and eventually reaches a local minimum,
corresponding to an imitation-stable state. Hence, as a corollary, we
see that an imitation-stable state is reached in pseudopolynomial
time.

Our main result, presented in Section~\ref{Section:ApproxStable},
however, is a much stronger bound on the time to reach approximate
imitation-stable states. What is a natural definition of approximately
stable states in our setting? By repeatedly sampling other agents, an
agent gets to know the average latency of the system. It is 
approximately satisfied, if it does not sustain a latency much larger
than the average. Hence, we say that a state is approximately stable if
almost all agents are almost satisfied. More precisely, we consider
states in which at most a $\delta$-fraction of the agents deviates by
more that an $\epsilon$-fraction (in any direction) from the average
latency. We show that the expected time to reach such a state is
polynomial in the inverse of the approximation parameters $\delta$ and
$\epsilon$ as well as in the maximum elasticity of the latency
functions, and logarithmic in the ratio between maximum and minimum
potential. Hence, if the maximum latency of a path is fixed, the time
is only logarithmic in the number of players and independent of the
size of the strategy space and the number of resources.

We complement these results by various lower bounds. First, it is
clear that pseudopolynomial time is required to reach exact
imitation-stable states. This follows from the fact that there exist
states in which all latency improvements are arbitrarily small,
resulting in arbitrarily small migration probabilities. Hence, already
a single step may take pseudopolynomially long. As a concept of
approximate stable states one could have required \emph{all} agents to
be approximately satisfied, rather than only all but a
$\delta$-fraction.  This, however, would require to wait a polynomial
number of rounds for the last agent to become approximately satisfied,
as opposed to our logarithmic bound. Finally, we consider sequential
imitation processes in which only one agent may move at a time. We
extend a construction from~\cite{Ackermann/etal:Combinatorial:06} to
show that there exist instances in which the shortest sequence of
imitations that leads to an imitation-stable state is exponentially
long.
 
The \IP has one drawback: It is not innovative in the following sense.
It might happen with small but non-zero probability that all players
currently using the same strategy $P$ migrate towards other strategies
and no other player migrates towards $P$. In this case, the knowledge
about the existence of strategy $P$ is lost and cannot be regained. For
singleton games, \ie, games in which each strategy is a singleton set,
in which empty links have latency zero, we show in
Section~\ref{sec:singleton} that the probability of this event
occurring in a polynomial number of rounds is negligible. This also has
an important consequence: The cost of a state to which the \IP
converges is, on expectation, not much worse than the cost of a Nash
equilibrium. More precisely, we show for the case of linear latency
functions that the expected cost of a state to which the \IP converges
is within a constant factor of the optimal solution.

We conclude with a discussion of a possible extension of the \IP in
Section~\ref{sec:exploration}. In cases, in which convergence to a
Nash equilibrium is required, it is possible to adjust the dynamics
and occasionally let players use an \XP. Using such a protocol,
players sample other strategies directly instead of sampling them by
looking at other players. We show that a suitable definition of such a
protocol and a suitable combination with the \IP guarantee convergence
to Nash equilibria in the long run.

To the best of our knowledge, this is the first work that considers
concurrent protocols for atomic congestion games that are not restricted
to parallel links and linear latency functions.

\subsection{Related Work}

Rosenthal~\cite{Rosenthal:PureNash:73} proves that every congestion
game possesses a Nash equilibrium, and that better response dynamics
converge to Nash equilibria. In these dynamics players have complete
knowledge, and, in every round, only a single player deviates to a
better strategy than it currently uses. Fabrikant
\etal~\cite{Fabrikant/etal:ComplexityPure:04}, however, observe that,
in general, from an appropriately chosen initial state it takes
exponentially many steps until players finally reach an equilibrium.
This negative result still holds in games with $\epsilon$-greedy
players, \ie, in games in which players only deviate if their latency
decreases by a relative factor of at least
$1+\epsilon$~\cite{Ackermann/etal:Combinatorial:06,
  Chien/Sinclair:Convergence:07,
  Skopalik/Voecking:Inapproximability:08}.  Moreover, Fabrikant
\etal~\cite{Fabrikant/etal:ComplexityPure:04} prove that, in general,
computing a Nash equilibrium is \PLS-complete. Their result still
holds in the case of asymmetric network congestion games.  In
addition, Skopalik and
V\"ocking~\cite{Skopalik/Voecking:Inapproximability:08} prove that
even computing an approximate Nash equilibrium is \PLS-complete. On
the positive side, best response dynamics converge quickly in
singleton and matroid congestion
games~\cite{Ackermann/etal:Combinatorial:06,Ieong/etal:FastCompact:05}.
Additionally, Chien and Sinclair~\cite{Chien/Sinclair:Convergence:07}
consider the convergence time of best response dynamics to approximate
Nash equilibria in symmetric games. They prove fast convergence to
approximate Nash equilibria provided that the latency of a resource
increases by at most a factor for each additional user. Finally,
Goldberg~\cite{Goldberg:RLSLoadBalancing:04} considers a protocol
applied to a scenario where $n$ weighted users assign load to $m$
parallel links and the latency equals the load of a resource. In this
protocol, randomly selected players move sequentially, and migrate to
a randomly selected resource if this improves their latency. The
expected time to reach a Nash equilibrium is pseudopolynomial. Results
considering other protocols and links with latency functions are
presented in~\cite{EvenDar/etal:ConvergenceTime:03} 

The social cost of (approximate) Nash equilibria in congestion games
has been subject to numerous studies. The most prominent concept has
been the price of anarchy~\cite{Koutsoupias/Papadimitriou:Worst:99},
which is the ratio of the worst cost of any Nash equilibrium over the
cost of an optimal assignment. Roughgarden and
Tardos~\cite{Roughgarden/Tardos:HowBad:02} conducted the first study of
general, non-atomic congestion games and showed a tight bound of 4/3
for the price of anarchy with linear latency functions. For atomic
games and linear latencies, Awerbuch
\etal~\cite{Awerbuch/etal:PriceUnsplittable:05} and Christodoulou and
Koutsoupias~\cite{Christodoulou/Koutsoupias:AnarchyFinite:05} show a
tight bound of 2.5. The special case of (weighted) singleton games has
been of particularly strong interest, and we refer the reader
to~\cite[chapter 20]{Nisan:AGT:07} for an introduction to the numerous
results. In terms of dynamics, Awerbuch
\etal~\cite{Awerbuch/etal:FastConvergence:08} consider the number of
best-response steps required to reach a desirable state, which has a
social cost only a constant factor larger than that of a social
optimum. They show that even in congestion games with linear latencies
there are exponentially long best-response sequences for reaching such
a desirable state. In contrast, Fanelli
\etal~\cite{Fanelli/etal:FastConvergence:08} show that for linear
latency functions there are also much faster best response sequences
that reach a desirable state after at most $\Theta(n \log\log n)$
steps.

Recently, concurrent protocols have been studied in various models and
under various assumptions. Even-Dar and
Mansour~\cite{EvenDar/Mansour:FastConvergence:05} consider concurrent
protocols in a setting where the links have speeds. However, their
protocols require global knowledge in the sense that the users must be
able to determine the set of underloaded and overloaded links. Given
this knowledge, the convergence time is doubly logarithmic in the
number of players. In~\cite{Berenbrink/etal:LoadBalancing:06} the
authors consider a distributed protocol for the case that the latency
equals the load that does not rely on this knowledge. Their bounds on
the convergence time are also doubly logarithmic in the number of
players but polynomial in the number of links.
In~\cite{Berenbrink/etal:ConvergenceWeighted:07} the results are
generalized to the case of weighted jobs. In this case, the
convergence time is only pseudopolynomial, \ie, polynomial in the
number of users, links, and in the maximum weight. Finally, Fotakis
\etal~\cite{Fotakis/etal:CGConvergence:08} consider a scenario with
latency functions for every resource. Their protocol involves local
coordination among the players sharing a resource. For the family of
games in which the number of players asymptotically equals the number
of resources they prove fast convergence to almost Nash equilibria.
Intuitively, an almost Nash equilibrium is a state in which there are
not too many too expensive and too cheap resources.
In~\cite{Fischer/etal:LocalInfo:08}, a load balancing scenario is
considered in which no information about the target resource is
available. The authors present an efficient protocol in which the
migration probability depends purely on the cost of the currently
selected strategy.

In~\cite{Fischer/etal:FastConvergence:06} the authors consider
congestion games in the Wardrop model, where an infinite population of
players carries an infinitesimal amount of load each. They consider a
protocol similar to ours and prove that with respect to approximate
equilibria it behaves like an FPTAS, \ie, it reaches an approximate
equilibrium in time polynomial in the approximation parameters and the
representation length of the instance (e.\,g., if the latency functions
are polynomials in coefficient representation). In contrast to our work
the analysis of the continuous model does not have to take into account
probabilistic effects.

Our protocol is based on the notion of imitation, a concept frequently
applied in evolutionary game theory. For an introduction to imitation
dynamics, see, e\,g., \cite{Hofbauer/Sigmund:EGT:98,Weibull:EGT:95}.

\section{Congestion Games and Imitation Dynamics}

In this section, we provide a formal description of our model. We
define congestion games in terms of networks, that is, the strategy
space of each player corresponds to the set of paths connecting a
particular source-sink pair in a network. However, our results are
independent of this definition and still hold in general, symmetric
congestion games. Furthermore, we introduce the slope and the
elasticity of latency functions, and give a precise definition of the
\IP.

\subsection{Symmetric Network Congestion Games} 

A symmetric network congestion game is a tuple $(G, (s,t),
\calN,(\lat_e)_{e\in E})$, where $G=(V,E)$ denotes a network with
vertices $V$ and $m$ directed edges $E$, and $s\in V$ and $t\in V$
denote a source and a sink vertex. Furthermore, $\calN$ denotes a set
of $n$ \emph{agents} or \emph{players}, and $(\lat_e)_{e\in E}$ a
family of non-decreasing and differentiable latency functions $\lat_e
\colon \RR_{\ge0} \rightarrow \RR_{\ge0}$. We assume that for all $e\in
E$, the latency functions satisfy $\lat_e(x)>0$ for all $x>0$. The
strategy space of all players equals the set of paths $\paths$
connecting the source $s$ with the sink $t$. If $G$ consists of two
nodes $s$ and $t$ only, which are connected by a set of parallel links,
then we call the game a \emph{singleton game}. A \emph{state} $x$ of
the game is a vector $(x_P)_{P \in \paths}$ where $x_P$ denotes the
number of players utilizing path $P$ in state $x$, and $x_e=\sum_{P \ni
e} x_P$ is the \emph{congestion} of edge $e\in E$ in state $x$. The
latency of edge $e$ in state $x$ is given by $\lat_e(x_e)$, and the
latency of path $P\in\paths$ is
\[\lat_P(x)=\sum_{e\in P}\lat_e(x_e)\enspace. \]
The latency of a player is the latency of the path it chooses.

For brevity, for all $P\in\paths$, let $1_P$ denote the $m$-dimensional
unit vector with the one in position $P$. In state $x$ a player has an
incentive to switch from path $P$ to path $Q$ if this would strictly
decrease its latency, \ie, if \[ \lat_P(x) > \lat_Q(x+1_Q-1_P)\enspace.
\] If no player has an incentive to change its strategy, then $x$ is at
a \emph{Nash equilibrium}. It is well
known~\cite{Rosenthal:PureNash:73}, that the set of Nash equilibria
corresponds to the set of states that minimize the potential function
\[ \Phi(x) = \sum_{e\in E} \sum_{i=1}^{x_e} \lat_e(i)\enspace. \]
In the following, let $\Phi^*=\min_x \Phi(x)$ be the minimum potential.
Note that due to our definition of the latency functions $\Phi^* > 0$.
For every path $P \in \paths$ let
\[ \lat_P^+(x) =\lat_P(x+1_P)\enspace. \]
Note that for every path $Q \in \paths$
\[ \lat_P^+(x) \ge \lat_P(x+1_P-1_Q) \enspace. \]
Additionally, let
\[ \lavg(x) =\sum_{P\in\paths} \frac{x_P}{n}\lat_P(x) \]
denote the average latency of the paths in state $x$, and let
\[ \lavg^+(x) = \sum_{P\in\paths} \frac{x_P}{n}\lat_P(x+1_P)
\enspace.\]
Finally, let $\ell_{\max} = \max_{x}\max_{P\in\paths} \lat_P(x)$ denote
the maximum latency of any path. Throughout this paper, whenever we
consider a fixed state $x$ we simply drop the argument $(x)$ from
$\Phi$, $\lat_P$, $\lat_P^+$, $\lavg$, and $\lavg^+$.

\subsection{The Elasticity and the Slope of Latency Functions}

To bound the steepness of the latency functions and the effect that overshooting
may have, we consider the elasticity of the latency functions. Let $d$ denote an
upper bound on the elasticity of the latency functions, i.\,e.,
\[ d \ge \max_{e\in E} \sup_{x\in(0,n]} \left\{\frac{\lat'_e(x)\cdot
x}{\lat_e(x)}\right\}\enspace. \]
Now given a latency function with elasticity $d$, it holds that for any $x$ and
$\alpha\ge 1$, $\lat_e(\alpha\,x) \le \lat_e(x) \cdot \alpha^d$ and for $0\le
\alpha < 1$, $\lat_e(\alpha\,x) \ge \lat_e(x) \cdot \alpha^d$. As an example, the
function $a\,x^d$ has elasticity $d$.

For almost empty resources, we will also need an upper bound on the
slope of the the latency functions.  Let $\nu_e$ denote the maximum
slope on almost empty edges, i.\,e., \[ \nu_e = \max_{x\in
\{1,\ldots,d\}}\{\lat_e(x) - \lat_e(x-1)\}\enspace.\] Finally, for
$P\in\paths$, let $\nu_P = \sum_{e\in P} \nu_e$ and choose $\nu$ such
that $\nu \ge \max_{P\in\paths} \nu_P$.

  \subsection{The Imitation Protocol}

Our \IP (Protocol~\ref{alg:imitation}) proceeds in two steps. First, a
player samples another agent uniformly at random. The player then
migrates with a certain probability from its old path $P$ to the
sampled path $Q$ depending on the anticipated relative latency gain
$(\lat_P(x)-\lat_Q(x+1_Q-1_P))/\lat_P(x)$ and on the elasticity of the
latency functions. Our analysis concentrates on dynamics that result
from the protocol being executed by the players in parallel in a
round-based fashion. These dynamics generate a sequence of states
$x(0), x(1),\ldots$. The resulting dynamics converge to a state that
is stable in the sense that imitation cannot produce further progress,
\ie, $x(t+1)=x(t)$ with probability $1$. Such a state is called an
\emph{imitation-stable state}.  In other words, a state is
imitation-stable if it is $\epsilon$-Nash with $\epsilon=\nu$ with
respect to the strategy space restricted to the current support. Here,
$\epsilon$-Nash means that no agent can improve its own payoff
unilaterally by more than $\epsilon$.

\begin{algorithm}[htbp]
  \caption{\IP, repeatedly executed by all players in parallel.}
  \label{alg:imitation}
  \begin{algorithmic}
    \STATE Let $P$ denote the path of the player in state $x$.
    \STATE Sample another
    player uniformly at random. Let $Q$ denote its path.
    \IF{$\lat_P(x) > \lat_Q(x+1_Q-1_P) + \nu$}
    \STATE with probability \[ \mu_{PQ} =
    \frac{\lambda}{d}\cdot\frac{\lat_P(x) -
    \lat_Q(x+1_Q-1_P)}{\lat_P(x)}\] migrate from path $P$ to path $Q$.
    \ENDIF
  \end{algorithmic}
\end{algorithm}

As discussed in the introduction, the main difficulty in the design of
the protocol is to bound overshooting effects. To get an intuition of
this problem, consider two parallel links of which the first has the
constant latency function $\lat_1(x)=c$ and the second has the latency
function $\lat_2(x)=x^d$. Recall that the elasticity of $\lat_2$ is
$d$. Furthermore, assume that only a small number of agents $x_2$
utilizes link $2$ whereas the majority of $n-x_2$ users utilizes link
$1$. Let $b = c - x_2^d > 0$ denote the latency difference between the
two links. A simple calculation shows that using the protocol without
the damping factor $1/d$, the expected latency increase on link $2$
would be $\Theta(b\cdot d)$, overshooting the balanced state by a
factor $d$.  For this reason, we reduce the migration probability
accordingly. The constant $\lambda$ will be determined later.

Note that the arguments in the last paragraph hold for the
\emph{expected} load changes. Our protocol, however, has to take care
of probabilistic effects, i.\,e., the realized migration vector may
differ from its expectation. Typically, we can use the elasticity to
bound the impact of this effect. However, if the congestion on an edge
is very small, i.\,e., less than $d$, then the number of joining
agents is not concentrated sharply enough around its expectation. In
order to compensate for this,
we add an additional requirement that agents only migrate if the
anticipated latency gain is at least $\nu$ and use this to bound
probabilistic effects if the congestion of the edge is less than $d$.
Let us remark that we will see below (Theorem~\ref{trm:empty}) that for
a large class of singleton games it is very unlikely, that an edge will
ever have a load of $d$ or less, so the protocol will behave in the
same way with high probability for a polynomial number of rounds even
if this additional requirement is dropped.

  
  \section{Imitation Dynamics in Games with General Strategy Spaces}
\label{Sec:ExactConvergence}

In this chapter, we consider \emph{imitation dynamics} that emerge if
in each round players concurrently apply the \IP. At first, we observe
\emph{that} imitation dynamics converge to imitation stable states
since in each round the potential $\Phi(x)$ decreases in expectation.
From this result we derive a pseudopolynomial upper bound on the
convergence time to imitation-stable states.

\subsection{Pseudopolynomial Time Convergence to Imitation-Stable States} 

Consider two states $x$ and $x'$ as well as a \emph{migration vector}
$\Delta x = (\Delta x_P)_{P\in\paths}$ such that $x'=x + \Delta x$.  We
may imagine $\Delta x$ as the result of one round of the \IP although
the following lemma is independent of how $\Delta x$ is constructed.
Furthermore, we consider $\Delta x$ to be composed of a set of
migrations of agents between pairs of paths, i.\,e., $\Delta x_{PQ}$
denotes the number of players who switch from path $P$ to path $Q$, and
$\Delta x_P$ denotes the total increase or decrease of the number of
players utilizing path $P$, that is,
\[ \Delta x_P = \sum_{Q\in\paths}(x_{QP}-x_{PQ})\enspace. \]
Also, let $\Delta x_e=\sum_{P\ni e}\Delta x_P$ denote the induced
change of the number of players utilizing edge $e\in E$. In order to
prove convergence, we define the \emph{virtual potential gain}
\[ V_{PQ}(x, \Delta x) = x_{PQ}\cdot (\lat_Q(x + 1_Q - 1_P) -
\lat_P(x)) \]
which is the sum of the potential gains each player migrating from path
$P$ to path $Q$ would contribute to $\Delta \Phi$ if each of them was
the only migrating player. Note that if a player improves the latency
of his path, the potential gain is negative. The sum of all virtual
potential gains is a very rough lower bound on the true potential gain
$\Delta \Phi(x, \Delta x) = \Phi(x+\Delta x) - \Phi(x)$. In order to
compensate for the fact that players concurrently change their
strategies, consider the \emph{error term} on an edge $e \in E$:
\[ F_e(x,\Delta x) = \left\{
  \begin{array}{ll}
    \displaystyle
    \sum_{u=x_e+1}^{x_e+\Delta x_e} \lat_e(u) - \lat_e(x_e+1) & \quad\text{if
    $\Delta x_e>0$} \\
    \displaystyle
    \sum_{u=x_e+\Delta x_e+1}^{x_e} \lat_e(x_e) - \lat_e(u) & \quad\text{if
    $\Delta x_e<0$} \\
    \displaystyle
    0 & \quad\text{if $\Delta x_e=0$} 
  \end{array}
\right.
\]
Subsequently, we show that the sum of the virtual potential gains and
the error terms is indeed an upper bound on the true potential gain
$\Delta \Phi(x, \Delta x)$. A similar result is shown
in~\cite{Fischer/Voecking:Stale:TCS:08} for a continuous model.

\begin{lemma}
  \label{trm:errorterm}
  For any assignment $x$ and migration vector $\Delta x$ it holds that
  \[
  \Delta \Phi(x, \Delta x)
  \, \le \,
  \sum_{P,Q\in\paths} V_{PQ}(x, \Delta x ) + \sum_{e\in E} F_e(x, \Delta x)
  \enspace.
  \]
\end{lemma}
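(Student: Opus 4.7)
The plan is to decompose both sides of the inequality edge-by-edge and then verify a strictly local inequality for each edge using only the monotonicity of $\ell_e$.

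First I would use the defining formula $\Phi(x)=\sum_{e\in E}\sum_{i=1}^{x_e}\ell_e(i)$ to write $\Delta\Phi(x,\Delta x)=\sum_{e\in E}\Delta\Phi_e$, where $\Delta\Phi_e=\sum_{i=x_e+1}^{x_e+\Delta x_e}\ell_e(i)$ if $\Delta x_e>0$ and symmetrically (with a minus sign) if $\Delta x_e<0$. By the definition of $F_e(x,\Delta x)$, this immediately yields the clean identities $\Delta\Phi_e=\Delta x_e\cdot\ell_e(x_e+1)+F_e$ when $\Delta x_e>0$, and $\Delta\Phi_e=\Delta x_e\cdot\ell_e(x_e)+F_e$ when $\Delta x_e<0$ (and $\Delta\Phi_e=F_e=0$ otherwise).

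Next I would rewrite the virtual potential gains edge-wise. Since $\ell_P(x)=\sum_{e\in P}\ell_e(x_e)$ and $\ell_Q(x+1_Q-1_P)=\sum_{e\in Q}\ell_e(x_e+[e\in Q]-[e\in P])$, one gets
\[
\sum_{P,Q\in\paths}V_{PQ}(x,\Delta x)=\sum_{e\in E}\sum_{P,Q\in\paths}x_{PQ}\bigl([e\in Q]\,\ell_e(x_e+[e\in Q]-[e\in P])-[e\in P]\,\ell_e(x_e)\bigr).
\]
For a fixed $e$ the pairs $(P,Q)$ split into four cases according to whether $e\in P$ and $e\in Q$. The pairs with $e\in P\cap Q$ or $e\notin P\cup Q$ contribute $0$. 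Writing $A_e$ for the total number of migrations leaving $e$ (pairs with $e\in P$, $e\notin Q$) and $B_e$ for the total number joining $e$ (pairs with $e\notin P$, $e\in Q$), the surviving terms give exactly $B_e\ell_e(x_e+1)-A_e\ell_e(x_e)$ on edge $e$. Moreover $\Delta x_e=B_e-A_e$.

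With these two decompositions, it remains to prove per edge $e$ that
\[
\Delta\Phi_e\;\le\;B_e\,\ell_e(x_e+1)-A_e\,\ell_e(x_e)+F_e.
\]
In the case $\Delta x_e>0$, the identity $\Delta\Phi_e=(B_e-A_e)\ell_e(x_e+1)+F_e$ reduces the inequality to $A_e\ell_e(x_e)\le A_e\ell_e(x_e+1)$, which holds by monotonicity of $\ell_e$. In the case $\Delta x_e<0$, the identity $\Delta\Phi_e=(B_e-A_e)\ell_e(x_e)+F_e$ reduces it to $B_e\ell_e(x_e)\le B_e\ell_e(x_e+1)$, again by monotonicity. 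The case $\Delta x_e=0$ (so $A_e=B_e$) reduces to $0\le A_e(\ell_e(x_e+1)-\ell_e(x_e))$, trivially true. Summing over $e\in E$ concludes.

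The only subtle step is the bookkeeping in the second paragraph: one must be careful that the "shared edge" case ($e\in P\cap Q$) cancels cleanly so that the $V_{PQ}$ sum collapses into the clean $B_e\ell_e(x_e+1)-A_e\ell_e(x_e)$ form with the right argument $x_e+1$ (not $x_e$) inside $\ell_e$. Once this decomposition is in place the per-edge inequality is immediate, and crucially the slack is exactly what the non-decreasing property of $\ell_e$ gives us for free.
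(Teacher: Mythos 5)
Your proof is correct, and it follows the same overall strategy as the paper's: expand both $\Delta\Phi$ and $\sum_{P,Q}V_{PQ}$ edge by edge and compare them using only the monotonicity of the latency functions. The execution, however, differs in one substantive and welcome way. The paper first relaxes each $V_{PQ}$ upward via $\lat_Q(x+1_Q-1_P)\le\sum_{e\in Q}\lat_e(x_e+1)$ and then manipulates the resulting pure edge sum; as printed, the second step of that chain, from $\sum_{P,Q}x_{PQ}\left(\sum_{e\in Q}\lat_e(x_e+1)-\sum_{e\in P}\lat_e(x_e)\right)$ down to $\sum_{e:\Delta x_e>0}\Delta x_e\lat_e(x_e+1)+\sum_{e:\Delta x_e<0}\Delta x_e\lat_e(x_e)$, actually runs in the wrong direction (the former dominates the latter), and what the final substitution into the exact expansion of $\Delta\Phi$ really requires is that this last expression be a \emph{lower} bound on $\sum_{P,Q}V_{PQ}$. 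Your version sidesteps this entirely: by keeping the exact argument $x_e+[e\in Q]-[e\in P]$ inside $\lat_e$, the shared-edge terms cancel and you obtain the identity $\sum_{P,Q}V_{PQ}=\sum_{e\in E}\bigl(B_e\lat_e(x_e+1)-A_e\lat_e(x_e)\bigr)$ with $A_e,B_e$ counting only genuine departures from and arrivals at $e$; the per-edge comparison against $\Delta\Phi_e=\Delta x_e\lat_e(x_e+1)+F_e$ (resp.\ $\Delta x_e\lat_e(x_e)+F_e$) then reduces to $A_e\lat_e(x_e)\le A_e\lat_e(x_e+1)$ (resp.\ $B_e\lat_e(x_e)\le B_e\lat_e(x_e+1)$), which is exactly monotonicity and points in the right direction throughout. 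Your explicit treatment of the $e\in P\cap Q$ cancellation is precisely the bookkeeping the paper glosses over, and it is what makes the argument airtight.
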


\begin{proof}
  We first express the virtual potential gain in terms of latencies on
  the edges. Clearly, 
  \begin{eqnarray}
    \nonumber
    \sum_{P,Q\in\paths} V_{PQ}(x,\Delta x) 
    &=&
    \sum_{P,Q\in\paths} x_{PQ}\cdot (\lat_Q(x + 1_Q - 1_P) - \lat_P(x))\\
    &\le&    
    \nonumber 
    \sum_{P,Q\in\paths} x_{PQ}\cdot \left(\sum_{e\in Q} \lat_e(x_e+1) - \sum_{e\in P} \lat_e(x_e)\right)\\
    &\le&     
    \label{eqn:vpg_edges}
    \sum_{e:\Delta x_e > 0} \Delta x_e \cdot \lat_e(x_e+1) + \sum_{e:\Delta x_e < 0} \Delta x_e \cdot \lat_e(x_e)
    \enspace.
  \end{eqnarray}
  The true potential gain, however, is
  \begin{eqnarray*}
    \Delta\Phi(x,\Delta x)
    &=&
    \sum_{e:\Delta x_e>0} \;\;\; \sum_{u=x_e+1}^{x_e+\Delta x_e} \lat_e(u) -
    \sum_{e:\Delta x_e<0} \;\;\; \sum_{u=x_e-\Delta x_e+1}^{x_e} \lat_e(u)\\
    &=&
    \sum_{e:\Delta x_e>0} \left(\Delta x_e\cdot\lat_e(x_e+1) + \sum_{u=x_e+1}^{x_e+\Delta x_e} (\lat_e(u) - \lat_e(x_e+1))\right) \\
    && +
    \sum_{e:\Delta x_e<0} \left(\Delta x_e\cdot\lat_e(x_e) + \sum_{u=x_e-\Delta x_e+1}^{x_e} (\lat_e(x_e) - \lat_e(u)) \right)
    \enspace.
  \end{eqnarray*}
  Substituting Equation~(\ref{eqn:vpg_edges}) for the left term of
  each sum and the definition of $F_e$ for the right term of each sum,
  we obtain the claim of the Lemma.
\end{proof}


In the following, we consider $\Delta x$ to be a migration vector
generated by the \IP rather than an arbitrary vector. In this case,
$\Delta x$ is a random variable and all probabilities and expectations
are taken with respect to the \IP. In order to prove that the
potential decreases in expectation, we derive a bound on the size of
the error terms. We show that the error terms reduce the virtual
potential gain by at most a factor of two, or, put another way, that
the true potential gain is at least half of the virtual potential
gain.

\begin{lemma} \label{trm:virtual_potential} 
  Let $x$ denote a state and let the random variable $\Delta x$ denote
  a migration vector generated by the \IP. Then,
  \begin{eqnarray*}
    \Ex{ \Delta \Phi(x,\Delta x) } & \le &
    \frac{1}{2} \sum_{P,Q\in\paths} \Ex{V_{PQ}(x,\Delta x)} \enspace.
  \end{eqnarray*}
\end{lemma}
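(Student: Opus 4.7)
The plan is to start from Lemma~\ref{trm:errorterm}, which gives the deterministic bound $\Delta\Phi(x,\Delta x) \le \sum_{P,Q}V_{PQ}(x,\Delta x) + \sum_{e\in E} F_e(x,\Delta x)$. Taking expectations over the randomness of the \IP, the claim reduces to showing
\[
\Ex{\sum_{e\in E} F_e(x,\Delta x)} \;\le\; -\tfrac{1}{2}\sum_{P,Q\in\paths}\Ex{V_{PQ}(x,\Delta x)}\enspace.
\]
Since migrations only occur when the latency gap is strictly positive, each nonzero $V_{PQ}$ is non-positive, so the right-hand side is a nonnegative ``budget'' against which the expected error terms must be charged.

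First I would write this budget out explicitly. In one round of the \IP, the number of migrations from $P$ to $Q$ is a sum of independent indicators of mean $(x_P\cdot x_Q/n)\cdot\mu_{PQ}$, where $\mu_{PQ} = (\lambda/d)\cdot(\lat_P - \lat_Q(x+1_Q-1_P))/\lat_P$. Hence
\[
-\Ex{V_{PQ}(x,\Delta x)} \;=\; \frac{\lambda}{d}\cdot\frac{x_P\,x_Q}{n}\cdot\frac{(\lat_P - \lat_Q(x+1_Q-1_P))^2}{\lat_P}
\]
whenever the threshold $\lat_P - \lat_Q(x+1_Q-1_P) > \nu$ is met, and is zero otherwise. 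This ``quadratic in the gap'' quantity is what we may spend on error terms.

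Next I would bound $\Ex{F_e(x,\Delta x)}$ for each edge separately. When $\Delta x_e>0$, the definition of $F_e$ together with monotonicity of $\lat_e$ gives $F_e \le \Delta x_e\cdot(\lat_e(x_e+\Delta x_e) - \lat_e(x_e+1))$, and the elasticity assumption then yields, as long as the overshoot satisfies $\Delta x_e \le x_e/d$, an estimate of order $(d/x_e)\cdot\lat_e(x_e)\cdot(\Delta x_e)^2$. Writing $\Delta x_e$ as a sum of independent indicators gives $\Ex{(\Delta x_e)^2} \le \Ex{\Delta x_e}^2 + \Ex{\Delta x_e}$, and $\Ex{\Delta x_e}$ itself unfolds into a sum of the $\Ex{x_{PQ}}$ quantities, each linear in $\mu_{PQ}$. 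Since $\mu_{PQ}$ already carries the damping factor $1/d$, the piece coming from $\Ex{\Delta x_e}^2$ can be matched, using Cauchy--Schwarz over the underlying pairs $(P,Q)$, against the $(\lat_P - \lat_Q(\cdot))^2$ budget with a constant prefactor tunable by $\lambda$. The case $\Delta x_e<0$ is symmetric.

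The hard part is the variance contribution $\Ex{\Delta x_e}$, which is only linear in the latency gap and would by itself prevent the quadratic bound needed. This is precisely what the additive threshold $\nu$ in the protocol is for: on an edge with congestion $x_e < d$, concentration for $\Delta x_e$ is too weak, but the migration criterion $\lat_P - \lat_Q(x+1_Q-1_P) > \nu \ge \nu_e$ forces every realized migration to contribute a deterministic virtual potential decrease of at least order $\nu_e$ on~$e$, which absorbs the linear error on that edge. On edges with $x_e \ge d$ the squared expectation dominates the variance and the previous regime alone suffices. Summing both regimes over all edges and all pairs $(P,Q)$ and choosing the constant $\lambda$ small enough delivers the factor $\tfrac12$.
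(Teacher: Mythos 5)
Your skeleton matches the paper's: start from Lemma~\ref{trm:errorterm}, charge the expected error terms $\Ex{F_e}$ against the quadratic budget $-\Ex{V_{PQ}}=\frac{\lambda}{d}\cdot\frac{x_P x_Q}{n}\cdot\frac{(\lat_P-\lat_Q(x+1_Q-1_P))^2}{\lat_P}$, use elasticity on well-loaded edges and the threshold $\nu$ on edges with $x_e\le d$. But there is a genuine gap at the heart of the argument. Your key estimate $F_e\lesssim (d/x_e)\,\lat_e(x_e)\,(\Delta x_e)^2$ is stated to hold ``as long as the overshoot satisfies $\Delta x_e\le x_e/d$'' --- yet $\Delta x_e$ is random and the event $\Delta x_e>x_e/d$ has positive probability. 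On that event the elasticity bound only gives $\Dl(\delta)\le\lat_e^+\bigl(\expf{d\delta/x_e}-1\bigr)$, which grows \emph{exponentially} in $\delta$, so no second-moment bound on $\Delta x_e$ controls $\Ex{F_e}$. Showing that this tail contributes negligibly --- by playing the Chernoff decay $\expf{-\frac14\EDX k\ln k}$ off against the growth $\expf{k\lambda I_{PQ}}$ and using the smallness of $\lambda$ --- is precisely the content of Cases 1 and 2a of the paper's proof; you assert the conclusion of that computation without performing it.

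A second problem is the linear (``variance'') term. The claim that for $x_e\ge d$ the squared expectation dominates the variance is false whenever $\Ex{\Delta x_e}<1$ (e.g.\ $x_e=d$ with small incentives), which is exactly the paper's Case~2. Moreover, in your per-edge aggregate accounting the term $\frac{d\lat_e^+}{x_e}\Ex{\Delta x_e}$ mixes migrations from pairs $(P,Q)$ with very different incentives, and charging it to the quadratic budget would require something like $\sum_{e\in Q}d\lat_e^+/x_e\lesssim\lat_P-\lat_Q^+$, which need not hold. The paper sidesteps this with a \emph{per-agent} decomposition: migrating agents are ordered by $\mu_{PQ}$, and each agent's marginal contribution $\Dl(\DX)$ is bounded via $\Ex{\DX}\le\lambda\,x_e\,I_{PQ}/d$, where $I_{PQ}$ is that agent's \emph{own} incentive (all agents ranked earlier have smaller $\mu$). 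This makes each agent's expected error on edge $e$ proportional to $\lat_e^+ I_{PQ}$, hence summable over $e\in Q$ to a $\lambda$-fraction of that same agent's virtual gain $\lat_P-\lat_Q^+$. Some such device (the ordering, or at least the factorial moment $\Ex{\Delta x_e(\Delta x_e-1)}$, which your cruder bound $F_e\le\Delta x_e\cdot(\lat_e(x_e+\Delta x_e)-\lat_e(x_e+1))$ discards) is needed; your Cauchy--Schwarz step only disposes of the squared-expectation piece. Your reading of the role of $\nu$ on low-congestion edges is essentially correct and matches the paper's Case~2b.
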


\begin{proof}

  For any given round, each term in $V_{PQ}$, $P,Q\in\paths$ and $F_e$, $e\in E$
  can be associated with an agent. Fix an agent $i$ migrating from, say, $P$ to
  $Q$. Its contribution to the $V_{PQ}(x,\Delta x)$ is $\lat_Q(x+1_Q-1_P) -
  \lat_P(x)$ (this is the same for all agents moving from $P$ to $Q$). It also
  contributes to $F_e$, $e\in P\cup Q$. The size of this term depends on the
  ordering of the agents. We will consider the migrating agents in ascending
  order of the migration probability $\mu_{P_jQ_j}$, where $P_j$ and $Q_j$ denote
  the origin and destination path of agent $j$, respectively. Ties are broken
  arbitrarily.

  Fix an edge $e\in E$ and let $A^+(e)$ and $A^-(e)$ denote the set of
  agents migrating to and away from $e\in E$, respectively.  Let
  $A(e)=A^+(e)\cup A^-(e)$.  Let $\DX$ denote the contribution to
  $\Delta x_e$ of agents in $A(e)$ which occur in our ordering with
  respect to $\mu_{PQ}$ before agent $i$.

  \begin{figure}[htpb]
    \begin{center}
      \psset{linewidth=1pt,xunit=1,yunit=1}
      \begin{pspicture}(0,0)(13,7)
        \psline{->}(0,0)(0,5)\uput[-90](0,5.6){$\lat_e(x)$}
        \psline{->}(0,0)(6,0)\uput[-90](6,-0.2){$x$}
        \psline{-}(2,0.1)(2,-0.1)
        \psline{-}(4.5,0.1)(4.5,-0.1)
        \uput[0](3.25,-0.2){$\DX$}
        \uput[0](2,-0.2){$x_e$}
        \psplot[linewidth=2pt,linecolor=red,plotpoints=100]{0}{5.5}{x x 0.1 mul mul}
        \psframe[fillstyle=vlines](4.5,0.4)(5,2.5)
        \psline[linestyle=dotted]{-}(0,0.4)(12,0.4)
        \psline{-}(2,0.4)(5,0.4)
        \psline{->}(7,0)(7,5)\uput[-90](7,5.6){$\lat_{e'}(x)$}
        \psline{->}(7,0)(12,0)\uput[-90](12,-0.2){$x$}
        \psplot[linewidth=2pt,linecolor=red,plotpoints=100]{7}{12}{x 7 sub 0.3 mul 3 add}
        \psframe[fillstyle=crosshatch](9.5,0.4)(10,3.9)
      \end{pspicture}
    \end{center}
    \caption{Potential gain of an agent migrating from edge $e'$
      towards edge $e$. The hatched area is the agent's virtual
      potential gain. The shaded area on the left is this agents
      contribution to the error term, caused by the $\DX$ agents
      ranking before the agent under consideration (with respect to
      $\mu_{PQ}$).}
    \label{fig:error_terms}
  \end{figure}
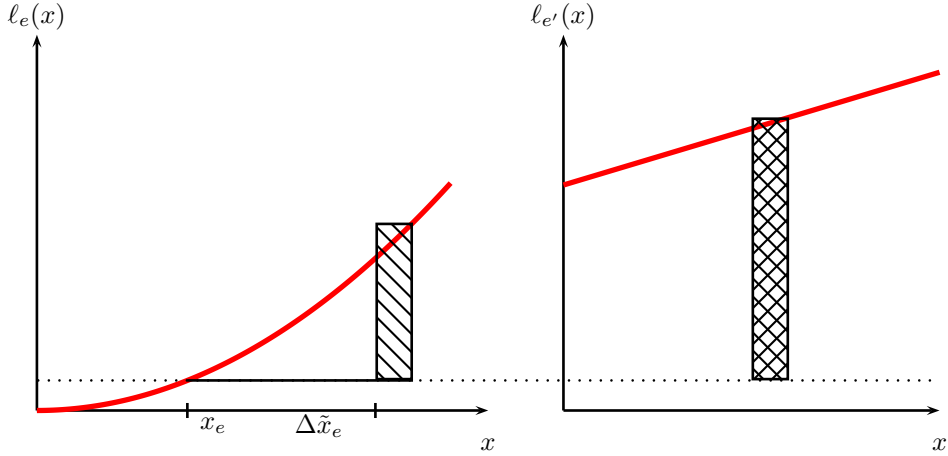
  Agent $i$'s contribution to $F_e(x,\Delta x)$ is $\Delta\tilde\lat_e(\DX)$
  where we define the error function $\Dl(\delta) = \lat_e(x_e+1+\delta) -
  \lat_e(x_e+1)$. For an illustration, see
  Figure~\ref{fig:error_terms}.  Note that there is an exception: If
  $e\in Q\cap P$, then the contribution of agent $i$ to $F_e$ is zero
  and there is nothing to show.  For brevity, let us write
  $\lat_e=\lat_e(x_e)$ and $\lat_e^+=\lat_e(x_e+1)$ as well as
  $\lat_P=\lat_P(x)$ and $\lat_Q^+=\lat_P(x_e+1_Q-1_P)$.  For $e\in
  Q\setminus P$ we show that
  \begin{eqnarray}
    \label{eqn:tilde_l_1}
    \Ex{\Delta\tilde\lat_e\left(\Delta\tilde x_e\right)}
    &\le& \frac18 \cdot (\lat_P - \lat_Q^+)\cdot \left(\frac{\lat_e^+}{\lat_Q^+}
    + \frac{\nu_e}{\nu_Q}\right) \enspace,
  \end{eqnarray}
  and for $e\in P\setminus Q$,
  \begin{eqnarray}
    \label{eqn:tilde_l_2}
    \Ex{\Delta\tilde\lat_e\left(\Delta\tilde x_e\right)}
    &\le& \frac18 \cdot (\lat_P - \lat_Q^+)\cdot \left(\frac{\lat_e}{\lat_P} +
    \frac{\nu_e}{\nu_P}\right)\enspace.
  \end{eqnarray}
  Thus, the expected sum of the error terms of an agent migrating from
  $P$ to $Q$ is at most
  \begin{eqnarray*}
    \frac{\lat_P-\lat_Q^+}{8}
    \left(\sum_{e\in P\setminus Q} \left(\frac{\lat_e}{\lat_P}+\frac{\nu_e}{\nu_P}\right)+
      \left(\sum_{e\in Q\setminus P} \frac{\lat_e^+}{\lat_Q^+}+\frac{\nu_e}{\nu_Q}\right)\right)
    &\le&
    \frac12 (\lat_P-\lat_Q^+)
    \enspace,
  \end{eqnarray*}
  \ie, half of its virtual potential gain, which proves the lemma.
  First, consider the case that $e\in Q$ where $Q$ denotes the
  destination path of agent $i$.

  For brevity, let us write $I_{PQ}=(\lat_P - \lat_Q^+)/\lat_P$ for the incentive
  to migrate from $P$ to $Q$. Again, consider the case that $e\in Q$ where $Q$
  denotes the destination path of agent $i$. Then, due to our ordering of the
  agents,
  \begin{equation}
    \label{eqn:exp_upper}
    \Ex{\Delta \tilde x_e}
    \le 
    n\cdot \frac{x_e}{n} \cdot \mu_{PQ}
    \le \frac{\lambda\cdot x_e\cdot I_{PQ}}{d}
    \enspace,
  \end{equation}
  implying
  \begin{equation}
    \label{eqn:x_lower}
    x_e 
    \ge 
    \frac{\Ex{\Delta \tilde x_e}\cdot d}{\lambda\cdot I_{PQ}}
    \enspace.
  \end{equation}
  Furthermore, due to the elasticity of $\lat_e$, and using $(1+1/x)^x \le \exp(1)$, we obtain
  \begin{eqnarray}
    \Delta\tilde\lat_e(\delta)
    \nonumber &\le& 
    \lat_e^+ \cdot\left(\frac{x_e+1+\delta}{x_e+1}\right)^d - \lat_e\\
    \nonumber &\le&
    \lat_e^+ \cdot\left(1+\frac{\delta}{x_e}\right)^d - \lat_e^+\\
    \label{eqn:delta_l_upper}
    &\le& \lat_e^+ \cdot \left(\expf{\frac{d\,\delta}{x_e}} - 1\right)
    \enspace.
  \end{eqnarray}
  Subsequently, we consider two cases.
  \begin{description}
  \item[Case 1: $\Ex{\DX} \ge \frac{1}{64}$.]  Substituting
    Inequality~(\ref{eqn:x_lower}) into Inequality~(\ref{eqn:delta_l_upper}), 
    we obtain for every $\kappa \in \RR-{\ge 0q}$
    \begin{eqnarray*}
      \Dl\left(\kappa \,\Ex{\DX}\right) &\le& 
      \lat_e^+\cdot \left(\expf{\kappa \,\lambda\,I_{PQ}} - 1\right) \enspace .
    \end{eqnarray*}
    Now, note that for every $k \in \NN$ and $\kappa \in [k,k+1]$
    \begin{eqnarray*}
      \Pr{\DX \ge \kappa \,\Ex{\DX}} & \le & \Pr{\DX \ge k \,\Ex{\DX}}
      \text{ \,\,\, and } \\
       \Dl(\kappa\,\Ex{\DX})& \le & \Dl((k+1)\,\Ex{\DX})
    \end{eqnarray*}
    hold. Applying a Chernoff bound (Fact~\ref{trm:chernoff} in the
    appendix), we obtain an upper bound for the expectation of
    $\Ex{\Delta\tilde\lat_e \left(\Delta\tilde x_e\right)}$ as
    follows.
    \begin{eqnarray*}
      \Ex{\Delta\tilde\lat_e\left(\Delta\tilde x_e\right)} 
      &\le& \sum_{k=1}^{\infty} \Pr{\DX \ge k\,\Ex{\DX}}
      \cdot \Dl((k+1)\,\Ex{\DX})\\
      &\le& \Delta\tilde\lat_e^+\left(5\,\Ex{\Delta\tilde x_e}\right) +
      \sum_{k=5}^{\infty} \Pr{\DX \ge k\,\Ex{\DX}}\cdot \Dl((k+1)\,\Ex{\DX})\\
      &\le& \lat_e^+\cdot \left(\expf{5\,\lambda\,I_{PQ}} - 1\right) +
      \sum_{k=5}^\infty \expf{-\frac14\,\Ex{\DX}\,k\,\ln k}\cdot
      \lat_e^+\cdot\,\left(\expf{(k+1)\,\lambda\,I_{PQ}} - 1\right)\\ 
      &\le& \lat_e^+\cdot \left(\expf{5\,\lambda\,I_{PQ}} - 1\right) +
      \sum_{k=5}^\infty \expf{-\frac14\,\Ex{\DX}\,k}\cdot
      \lat_e^+\cdot\,\left(\expf{2\,k\,\lambda\,I_{PQ}} - 1\right)\\ 
      &\le& \lat_e^+\cdot \left(\expf{5\,\lambda\,I_{PQ}} - 1\right) +
      \int_4^\infty \expf{-\frac14\,\Ex{\DX}\,u}\cdot
      \lat_e^+\cdot\,\left(\expf{2\,u\,\lambda\,I_{PQ}} - 1\right)\, du\\ 
      &=& \lat_e^+\cdot \left(\expf{5\lambda I_{PQ}} - 1 + 
        \expf{-\EDX}\frac{\expf{8\,\lambda\,I_{PQ}}-1 +
        \frac{8\,\lambda\,I_{PQ}}{\EDX}}{\frac{1}{4}\EDX - 2\,\lambda\,I_{PQ}}
        \right) \enspace.
    \end{eqnarray*}
    Now, due to Fact~\ref{trm:exp_lin} in the appendix (with $r=1$)
    and our assumption that $\Ex{\DX} \ge 1/64$, we obtain
    \begin{eqnarray*}
      \Ex{\Delta\tilde\lat_e\left(\Delta\tilde x_e\right)} 
      &\le&
      \lambda \cdot \lat_e^+\cdot I_{PQ}\cdot \left(5\,(\e-1) + \frac{8\,(\e-1)
      + 8\cdot 64}{\frac{1}{4\cdot 64} - 2\,\lambda}\right)\\ 
      &\le&
      c\cdot \lambda \cdot \lat_e^+\cdot \frac{\lat_P - \lat_Q^+}{\lat_P}\\
      &\le&
      c\cdot\,\lambda \cdot \lat_e^+\cdot \frac{\lat_P - \lat_Q^+}{\lat_Q^+}
    \end{eqnarray*}
    for some constant $c$. The first inequality holds if $\lambda<1/512$,
    proving Equation~(\ref{eqn:tilde_l_1}) if $\lambda$ is chosen small enough.
  \item[Case 2: $\Ex{\Delta\tilde x_e} < \frac{1}{64}$.] Again, in this case we
  can apply a Chernoff bound (Fact~\ref{trm:chernoff}) to upper bound
  $\Ex{\Delta\tilde\lat_e\left(\Delta\tilde x_e\right)}$.
  \begin{eqnarray*}
      \Ex{\Delta\tilde\lat_e\left(\Delta\tilde x_e\right)} 
      &\le& \sum_{k=1}^n \Pr{\DX = k} \cdot \Dl(k)\\
      &\le& \sum_{k=1}^n \Pr{\DX\ge\frac{k}{\Ex{\DX}}\,\Ex{\DX}}\cdot\Dl(k)\\ 
      &\le& \sum_{k=1}^n \expf{-k\,(\ln(k/\Ex{\DX})-1)} \cdot \Dl(k)\\
  \end{eqnarray*}
  
  There are two sub-cases:
    \begin{description}
    \item[Case 2a: $x_e > d$.] In order to bound the expected latency increase,
    we apply the elasticity bound on $\lat_e$: 
      \begin{eqnarray*}
        \Ex{\Dl(\DX)} &\le&  
        \sum_{k=1}^n \expf{- k\,(\ln(k/\Ex{\DX})-1)}\cdot
        \lat_e^+\cdot\left(\expf{\frac{k\,d}{x_e}}-1\right)\\
        &\le& 
        \lat_e^+\cdot 
          \sum_{k=1}^n \expf{- k\,(\ln(k) - \ln(\Ex{\DX})-1)} \cdot
          \left(\expf{\frac{k\,d}{x_e}}-1
        \right)\\
        &\le& 
        \lat_e^+\cdot 
          \sum_{k=1}^n \left(\Ex{\DX} \, (\expf{k}\, \Ex{\DX}^{k-1})\right)
          \,\expf{-k\,(\ln k)}\cdot \left(\expf{\frac{k\,d}{x_e}}-1
          \right)\\
        &\le& 
        \lat_e^+\cdot \Ex{\DX} \cdot \sum_{k=1}^n \expf{-k\,(\ln k)} \cdot 
        \left(\expf{\frac{k\,d}{x_e}}-1 \right)  
        \enspace.
      \end{eqnarray*}      
      Now, splitting up the sum, we define
      \begin{eqnarray*}
        L_1 &=& \Ex{\DX}
        \sum_{k=1}^{\left\lfloor\frac{8\,x_e}{d}\right\rfloor} \expf{-k\,(\ln
        k)}\cdot \left(\expf{\frac{k\,d}{x_e}}-1\right)\\
        &\le& \Ex{\DX} \frac{(\expf{8}-1)\,d}{8\,x_e}
        \sum_{k=1}^{\left\lfloor \frac{8\,x_e}{d} \right\rfloor}
        \expf{-k\,(\ln k)}\cdot k \\
        L_1 &\le& \frac{\expf{8}}{4} \cdot \Ex{\DX} \frac{d}{x_e}\\ 
        &\le& \frac{\expf{8}}{4} \cdot \lat_e^+ \cdot
        \lambda\,I_{PQ}\enspace,
      \end{eqnarray*}

      where the first inequality uses the observation that
      $\expf{\frac{k\,d}{x_e}} \le \expf{8}$ since $k \le \left\lfloor 8x_e/d
      \right\rfloor$, and Fact~\ref{trm:exp_lin} (with $r=8$). Additionally,
      where the third inequality uses the observation that $\sum_{k=1}^{\infty}
      \expf{-k\,(\ln k)}\cdot k \le 2$, and finally where the last
      inequality uses Inequality~(\ref{eqn:exp_upper}). 
      
      For the second part of the sum, let
      \begin{eqnarray*}
        L_2 
        &=& \Ex{\DX}
        \sum_{k=\left\lceil\frac{8\,x_e}{d}\right\rceil}^{\infty}
        \expf{-k\,(\ln k)} \cdot \left(\expf{\frac{k\,d}{x_e}}-1\right)\\ 
        &\le& \Ex{\DX}
        \sum_{k=\left\lceil\frac{8\,x_e}{d}\right\rceil}^{\infty}
        \expf{-k\,(\ln k) + \frac{k\,d}{x_e}}\\
        &=& \Ex{\DX}
        \sum_{k=\left\lceil\frac{8\,x_e}{d}\right\rceil}^{\infty}
        \expf{-k\,(\ln k - 1)} 
        \mbox{ \hspace{6ex} (since $x_e>d$)}\\
        &\le& \Ex{\DX}
        \sum_{k=\left\lceil\frac{8\,x_e}{d}\right\rceil}^{\infty}
        \expf{-\frac12\,k\,\ln k}
        \mbox{ \hspace{8ex} (since
        $k \ge \left\lceil\frac{8\,x_e}{d}\right\rceil \ge 8)$}\\
        &\le& \Ex{\DX} 
		\sum_{k=\left\lceil\frac{8\,x_e}{d}\right\rceil}^{\infty}
        \left(\frac{d}{8\,x_e}\right)^{\frac{1}{2} k} \enspace.
      \end{eqnarray*}
      Due to Fact~\ref{trm:geomSum} and since $x_e > d$              
      \begin{eqnarray*}
        L_2 &=& \Ex{\DX}
        \frac{\left(\frac{d}{8\,x_e}\right)^{\frac{8}{2}}}{1 -
        \sqrt{\frac{d}{8\,x_e}}}\\
        &\le& \Ex{\DX} \frac{d}{x_e}\\
        &\le& \lambda\,I_{PQ} \enspace.
      \end{eqnarray*}
      Reassembling the sum, we obtain
      \begin{eqnarray*}
        \Ex{\Dl(\DX)} &\le& \lat_e^+\cdot 
        \left(L_1 + L_2 \right)\\
        &\le& \lat_e^+\cdot \left(\frac{\expf{8}}{4} + 1 \right) \,
        \lambda\,I_{PQ} \enspace.
      \end{eqnarray*}
      Again, by the same arguments as at the end of Case~1 this proves
      Equation~(\ref{eqn:tilde_l_1}) if $\lambda$ is less than $
      1/(2\expf{8}+8)$.

    \item[Case 2b: $x_e \le d$.] In this case we separate the upper
      bound on $\Ex{\Dl(\DX)}$ into the section up to $d$ and above
      $d$.  For the first section we use the fact that each additional
      player on resource $e$ causes a latency increase of at most
      $\nu_e$ as long as the load is at most $d$. We define the
      contribution to the expected latency increase by the events that
      up to $d-x_e$ join resource $e$, i.\,e., afterwards the
      congestion is still at most $d$. In this case, we may use $\nu_e$
      to bound the contribution of each agent:
      \begin{eqnarray*}
        L_1 &\le& \sum_{k=1}^{d-x_e} \expf{-k\,\left(\ln\left(\frac{k}{\Ex{\DX}}\right) - 1\right)} \cdot k\,\nu_e\\
        &\le& \e\,\nu_e\,\Ex{\DX} + \nu_e\,\Ex{\DX}^2
        \sum_{k=2}^{d-x_e}\expf{-k\,(\ln(k) - 1)} \cdot k\\ 
        &\le& \e\,\nu_e\,\Ex{\DX}\cdot \left(1 + \frac{8\,\Ex{\DX}}{\expf{}}\right)\\
         &\le& 3\,\nu_e\,\Ex{\DX}\enspace,
      \end{eqnarray*}
      where the third inequality holds since
      $\sum_{k=2}^{d-x_e}\expf{-k\,(\ln(k) - 1)} \cdot k \le 8$, and where the
      last inequality holds since $\Ex{\DX} < 1/64$.

      For the contribution of the agents increasing the load on
      resource $e$ to above $d$ we use the elasticity constraint
      again. This time, we do not consider the latency increase with
      respect to $\lat_e^+(x_e)$ but with respect to $\lat_e(d)$:
      \begin{eqnarray*}
        L_2
        &=& \sum_{k=d-x_e+1}^{n} \expf{-k\cdot\left(\ln\left(\frac{k}{\Ex{\DX}}\right)-1\right)}\cdot \lat_e(d)\cdot\left(\expf{\frac{d\,(k-(d-x_e))}{d}}-1\right)\enspace.
      \end{eqnarray*}
      As in case (2a),
      \begin{eqnarray*}
        L_2
        &\le& 
        \lat_e(d)\cdot \Ex{\DX}\cdot \sum_{k=d-x_e+1}^{\infty}\expf{-k\,\ln k +
        k - (d-x_e)}\\ &=& 
        \lat_e(d)\cdot \Ex{\DX}\cdot
        \sum_{k=1}^{\infty}\expf{-(k+(d-x_e))\,\ln(k+(d-x_e)) + k}\\
        &=& 
        \lat_e(d)\cdot \Ex{\DX}\cdot \expf{-(d-x_e)}\cdot
        \sum_{k=1}^{\infty}\expf{-(k+(d-x_e))\,\ln(k+(d-x_e)) + k +
        d-x_e} \enspace.
      \end{eqnarray*}
      Consider the series in the above expression as a function of
      $u=(d-x_e)$ and denote it by $S(u)$. Note that $S(u)$ converges
      for every $u\ge 0$ and $S(u)\to 0$ as $u\to\infty$. In
      particular, $S(u)< 8$ for any $u\ge 0$, so  
      \begin{eqnarray*}
        L_2
        &\le& 
        8\,\lat_e(d)\cdot \Ex{\DX} \cdot \expf{-(d-x_e)}\\
        &\le& 
        8\,(\lat_e(x_e) + (d-x_e)\,\nu_e)\cdot \Ex{\DX} \cdot \expf{-(d-x_e)}\enspace.
      \end{eqnarray*}
      Since $(d-x_e)\cdot\expf{-(d-x_e)}<1/2$,
      \begin{eqnarray*}
        L_2 \le 4\,(\lat_e(x_e) + \,\nu_e)\cdot \Ex{\DX}\enspace.
      \end{eqnarray*}
      Altogether, 
      \begin{eqnarray*}
        \Ex{\Dl(\DX)} 
        &\le& L_1 + L_2\\
        &\le& 7\,\nu_e\,\Ex{\DX} + 4\,\lat_e(x_e)\,\Ex{\DX}\\
        &\le& 7\,\nu_e\,\Ex{\DX} +
        4\,\frac{\lambda\,x_e\,I_{PQ}}{d} \cdot \lat_e(x_e)\\
        &\le& \frac{7}{64}\,\nu\,\frac{\nu_e}{\nu_Q} +
        \frac{4\,\lambda\,x_e\,I_{PQ}}{d}\cdot\lat_e(x_e)
      \end{eqnarray*}
      where we have used Equation~(\ref{eqn:exp_upper}) for the third
      inequality, and the inequalities $\Ex{\DX}<1/64$ and $\nu \ge \nu_Q$ for
      the last step. Since $x_e\le d$ and $\lat_P-\lat_Q^+\ge\nu$,
      \begin{eqnarray*}
        \Ex{\Dl(\DX)} 
        &\le& \frac{1}{8}\,(\lat_P-\lat_Q^+)\,\frac{\nu_e}{\nu_Q} +
        \frac{4\,\lambda\,(\lat_P-\lat_Q^+)}{\lat_P}\cdot\lat_e(x_e)\\
      \end{eqnarray*}
      again proving Equation~(\ref{eqn:tilde_l_1}) if $\lambda\le 1/32$. 
    \end{description}    
  \end{description}
  Finally, the case $e\in P$ is very similar.
\end{proof}



Note that all migrating players add a negative contribution to the
virtual potential gain since they migrate only from paths with
currently higher latency to paths with lower latency. Hence, together
with Lemma~\ref{trm:virtual_potential}, we can derive the next
corollary.

\begin{corollary}
  Consider a symmetric network congestion game\/ $\Gamma$ and let $x$ and $x'$
  denote states of\/ $\Gamma$ such that $x'$ is a random state generated after
  one round of executing the \IP. Then, 
  \[ \Ex{\Phi(x')} \le \Phi(x) \] 
  with strict inequality as long as $x$ is not imitation-stable. Thus,
  $\Phi$ is a super-martingale.
\end{corollary}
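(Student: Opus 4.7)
The plan is to combine Lemma~\ref{trm:virtual_potential} with the observation that the \IP, by design, only triggers migrations that strictly decrease the moving player's latency. First I would invoke the lemma to obtain
\[
\Ex{\Delta \Phi(x,\Delta x)} \;\le\; \frac{1}{2} \sum_{P,Q\in\paths} \Ex{V_{PQ}(x,\Delta x)}.
\]
Next I would check the sign of each summand on the right. The random variable $x_{PQ}$ can be positive only when the migration condition $\lat_P(x) > \lat_Q(x+1_Q-1_P) + \nu$ holds; on that event $\lat_Q(x+1_Q-1_P) - \lat_P(x) < -\nu \le 0$, while if the condition fails then $x_{PQ} = 0$ almost surely. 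In either case $V_{PQ}(x,\Delta x) = x_{PQ}\cdot(\lat_Q(x+1_Q-1_P)-\lat_P(x)) \le 0$ pointwise, so $\Ex{V_{PQ}(x,\Delta x)} \le 0$ for every ordered pair $(P,Q)$. Hence $\Ex{\Delta\Phi(x,\Delta x)} \le 0$, which rearranges to the desired inequality $\Ex{\Phi(x')} \le \Phi(x)$.

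To upgrade to strict inequality when $x$ is not imitation-stable, I would argue as follows: by definition, non-stability means $\Pr{x' \ne x} > 0$, so there must exist an ordered pair $(P,Q)$ with $x_P \ge 1$, $x_Q \ge 1$, and $\lat_P(x) > \lat_Q(x+1_Q-1_P) + \nu$. For such a pair the sampling step selects a $P$-user picking a $Q$-user with probability at least $x_P x_Q / n^2 > 0$, and the subsequent migration occurs with probability $\mu_{PQ} > 0$; therefore $\Pr{x_{PQ} \ge 1} > 0$, and on that event $V_{PQ}(x,\Delta x) \le -\nu < 0$ (and $\le -(\lat_P - \lat_Q^+) < 0$ in general). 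Since every other summand is $\le 0$, this forces $\Ex{V_{PQ}(x,\Delta x)} < 0$ and thus $\Ex{\Delta\Phi(x,\Delta x)} < 0$.

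Finally, the super-martingale property follows at once from the Markov property of the \IP: conditioning on the history through time $t$, the statement $\Ex{\Phi(x(t+1)) \mid x(t)} \le \Phi(x(t))$ is exactly what was just established, and $\Phi$ is bounded below by $\Phi^* > 0$. I do not anticipate any serious obstacle in this argument; essentially all of the work has already been done in Lemma~\ref{trm:virtual_potential}, and the corollary is a sign check exploiting that the \IP never moves a player to a weakly worse path.
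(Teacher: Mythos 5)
Your argument is correct and is essentially the paper's own: the authors likewise derive the corollary by combining Lemma~\ref{trm:virtual_potential} with the observation that every migrating player contributes a strictly negative term to the virtual potential gain (since the protocol only triggers moves with $\lat_P(x) > \lat_Q(x+1_Q-1_P)+\nu$), so each $\Ex{V_{PQ}}\le 0$ and at least one is strictly negative when $x$ is not imitation-stable. Your write-up merely makes the sign check and the positive-probability-of-migration step explicit, which the paper leaves as a one-line remark.
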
 

It is obvious \emph{that} the sequence of states generated by the \IP
terminates at an imitation-stable state. From
Lemma~\ref{trm:virtual_potential} we can immediately derive an upper
bound on the time to reach such a state. However, since for arbitrary
latency functions the minimum possible latency gain may be very small,
this bound can clearly be only pseudo-polynomial. To see this, consider
a state in which only one player can make an improvement. Then, the
expected time until the player moves is inverse proportional to its
latency gain.

\begin{theorem} \label{trm:convergence}
  Consider a symmetric network congestion game in which all players
  use the \IP. Let $x$ denote the initial state of the dynamics. Then the
  dynamics converge to an imitation-stable state in expected time
  \[ \Oh{ \frac{d\,n\,\lat_{\max}\,\Phi(x)}{\nu^2}} \enspace. \]
\end{theorem}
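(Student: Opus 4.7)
The plan is to combine Lemma~\ref{trm:virtual_potential} with a standard drift/optional-stopping argument. Concretely, I would show that whenever $x$ is not yet imitation-stable the potential decreases in expectation by at least $\Omega(\nu^2/(n\,d\,\lat_{\max}))$, and then read off the bound from the initial potential.

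First, suppose $x$ is not imitation-stable. Then by definition there exist paths $P,Q\in\paths$ with $x_P\ge 1$, $x_Q\ge 1$ and $\lat_P(x)-\lat_Q(x+1_Q-1_P)\ge\nu$. I would isolate the single summand $V_{PQ}$ in the bound from Lemma~\ref{trm:virtual_potential}. A fixed player on $P$ samples a given player on $Q$ with probability $x_Q/n$ and then imitates with probability $\mu_{PQ}=\frac{\lambda}{d}\cdot\frac{\lat_P-\lat_Q^+}{\lat_P}$, so the expected number of $P\to Q$ migrations in one round is
\[
\Ex{x_{PQ}}\;=\;x_P\cdot\frac{x_Q}{n}\cdot\mu_{PQ}\;\ge\;\frac{1}{n}\cdot\frac{\lambda(\lat_P-\lat_Q^+)}{d\,\lat_{\max}}.
\]
Multiplying by the per-migration virtual gain $\lat_Q^+-\lat_P\le -\nu$ and using $\lat_P-\lat_Q^+\ge\nu$, I obtain
\[
\Ex{V_{PQ}(x,\Delta x)}\;\le\;-\frac{\lambda(\lat_P-\lat_Q^+)^2}{n\,d\,\lat_{\max}}\;\le\;-\frac{\lambda\,\nu^2}{n\,d\,\lat_{\max}}.
\]
Since every other summand $\Ex{V_{P'Q'}}$ is non-positive (players only imitate strategies with strictly smaller latency), Lemma~\ref{trm:virtual_potential} yields
\[
\Ex{\Phi(x')\mid x}-\Phi(x)\;\le\;\tfrac{1}{2}\Ex{V_{PQ}}\;\le\;-\frac{\lambda\,\nu^2}{2\,n\,d\,\lat_{\max}}.
\]

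Next, I would apply a standard drift/optional-stopping argument. Let $T$ denote the first round at which an imitation-stable state is reached and let $Y_t=\Phi(x(t))+t\cdot\frac{\lambda\,\nu^2}{2\,n\,d\,\lat_{\max}}\mathbf{1}_{t<T}$. The computation above shows $\Ex{Y_{t+1}\mid Y_t}\le Y_t$, so $Y_t$ is a nonnegative super-martingale. Applying the optional stopping theorem at $T$ (which is almost surely finite since the state space is finite and in every non-stable state the potential strictly decreases in expectation), we get
\[
\frac{\lambda\,\nu^2}{2\,n\,d\,\lat_{\max}}\,\Ex{T}\;\le\;\Ex{\Phi(x(T))-\Phi(x(0))}+\Ex{T}\cdot\frac{\lambda\,\nu^2}{2\,n\,d\,\lat_{\max}}\;-\;\Ex{\Phi(x(0))-\Phi(x(T))}\le\Phi(x),
\]
from which $\Ex{T}=\Oh{n\,d\,\lat_{\max}\,\Phi(x)/\nu^2}$ follows, matching the theorem statement.

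The only delicate step is the per-round drift bound: one has to make sure that a single non-stable pair $(P,Q)$ already produces a decrease of the claimed order, even though many other players move simultaneously and contribute further non-positive $V_{P'Q'}$ terms. Apart from that, the argument is a textbook potential-drift calculation, and the factor $\lambda$ is absorbed into the $\Oh{\cdot}$.
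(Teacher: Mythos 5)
Your proof follows essentially the same route as the paper's: in any non-imitation-stable state you isolate one admissible pair $(P,Q)$ to lower-bound the expected virtual potential gain in magnitude by $\lambda\nu^2/(n\,d\,\lat_{\max})$, halve it via Lemma~\ref{trm:virtual_potential}, and convert the per-round drift into an expected hitting time (the paper delegates this last step to Lemma~\ref{trm:martingalelike}, which encodes exactly your optional-stopping argument). The only blemish is your final displayed inequality chain, which is garbled as written (its middle expression is not in fact an upper bound for its left-hand side); the intended conclusion $\Ex{Y_T}\le Y_0=\Phi(x)$ is correct and gives the claimed bound.
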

\begin{proof}
  By definition of the $\IP$, the expected virtual potential gain in any
  state $x'$ which is not yet imitation-stable is at least
  \[ \Ex{\sum_{P,Q\in\paths} V_{PQ}(x',\Delta x')} \le -\nu \cdot
  \frac{\lambda}{d\,n} \cdot \frac{\nu}{\lat_{\max}}\enspace. \]
  Hence, also the expected potential gain $\Ex{\Delta \Phi(x')}$ in
  every intermediate state $x'$ of the dynamics is bounded from above
  by at least half of the above value. From this, it follows, that the
  expected time until the potential drops from at most $\Phi(x)$ to
  the minimum potential $\Phi^*$ is at most
  \[ \frac{d\,n\,\lat_{\max}(\Phi(x) - \Phi^*)}{\lambda\,\nu^2} \enspace. \] 
  Formally, this is a consequence of Lemma~\ref{trm:martingalelike}
  which can be found in the Appendix.
\end{proof}

It is obvious that this result cannot be significantly improved since
we can easily construct an instance and a state such that the only
possible improvement that can be made is $\nu$. Hence, already a
single step takes pseudopolynomially long. In case of polynomial
latency functions Theorem~\ref{trm:convergence} reads as follows.

\begin{corollary}
  Consider a symmetric network congestion game with
  polynomial latency functions with maximum degree $d$ and minimum and
  maximum coefficients $a_{\min}$ and $a_{\max}$, respectively.  Let
  $k=\max_{P\in\paths}{|P|}$. Then the dynamics converges to an
  imitation-stable state in expected time
  \[ \Oh{ d^2\,k^2\,n^{2d+2}\cdot
    \left(\frac{a_{\max}}{a_{\min}}\right)^2} \enspace. \]
\end{corollary}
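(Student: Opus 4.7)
The plan is to derive the corollary as a direct substitution into Theorem~\ref{trm:convergence}. Starting from the bound
\[
\Oh{\frac{d\,n\,\lat_{\max}\,\Phi(x)}{\nu^2}},
\]
all that is needed is to control each of $\lat_{\max}$, $\Phi(x)$, and $\nu$ in terms of the polynomial parameters $d$, $k$, $a_{\min}$, $a_{\max}$, $n$.

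For $\lat_{\max}$, I would use the fact that any polynomial latency function $\lat_e(x)=\sum_{i=0}^d a_i x^i$ with non-negative coefficients bounded by $a_{\max}$ satisfies $\lat_e(x)\le (d+1)\,a_{\max}\,n^d$ for every $x\le n$. Summing over the at most $k$ edges of any path gives $\lat_{\max}=\Oh{k\,d\,a_{\max}\,n^d}$. For $\Phi(x)$, the monotonicity of $\lat_e$ yields $\Phi(x)\le\sum_e x_e\,\lat_e(x_e)=\sum_P x_P\,\lat_P(x)\le n\,\lat_{\max}$, so $\Phi(x)=\Oh{k\,d\,a_{\max}\,n^{d+1}}$. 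For $\nu$, the key lower bound is obtained by observing $\nu\ge\nu_e\ge\lat_e(1)-\lat_e(0)=\sum_{i\ge 1}a_i\ge a_{\min}$, so $\nu^2=\Omega(a_{\min}^2)$.

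Substituting these three estimates into the expression of Theorem~\ref{trm:convergence} yields
\[
\Oh{\frac{d\,n\cdot k\,d\,a_{\max}\,n^d\cdot k\,d\,a_{\max}\,n^{d+1}}{a_{\min}^2}}
\;=\;\Oh{d^{2}\,k^{2}\,n^{2d+2}\,(a_{\max}/a_{\min})^2},
\]
after absorbing the remaining $d$-factor into the asymptotic notation (or, if one prefers, by using the slightly tighter leading-term bound $\lat_e(x)\le 2\,a_{\max}\,n^d$ valid for $n\ge 2$, which trades one $d$ for an absolute constant).

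There is no real obstacle: the corollary is a bookkeeping exercise on top of Theorem~\ref{trm:convergence}. The only care required is in the choice of bounds, to ensure that the exponents of $d$, $k$, $n$ and the ratio $a_{\max}/a_{\min}$ come out as claimed; in particular, one must not apply the elasticity bound to $\Phi(x)$ edge by edge (which would produce an unwanted $k^{d+1}$), but rather route the estimate through the social-cost inequality $\Phi(x)\le n\,\lat_{\max}$.
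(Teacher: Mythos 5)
Your derivation is correct and coincides with the paper's (implicit) proof: the corollary is stated as a direct instantiation of Theorem~\ref{trm:convergence}, and your three estimates $\lat_{\max}=\Oh{k\,a_{\max}\,n^{d}}$, $\Phi(x)\le n\,\lat_{\max}$ (via $\Phi(x)\le\sum_P x_P\lat_P(x)$), and $\nu\ge a_{\min}$ are exactly the bookkeeping required. The only caveat is that ``absorbing the remaining $d$-factor into the asymptotic notation'' is not legitimate when $d$ is a parameter, but your own alternative --- the geometric-series bound $\lat_e(x)\le 2\,a_{\max}\,n^{d}$ for $n\ge 2$ --- repairs this and in fact yields a single factor of $d$, which is dominated by the stated $d^{2}$.
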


Let us remark that all proofs in this section do not rely on the
assumption that the underlying congestion game is symmetric. In fact,
the lemma also holds for asymmetric congestion games in which each
player samples only among players that have the same strategy space.

  \subsection{Sequential Imitation Dynamics and a Lower Bound}\label{sequential}

In the previous section, we proved that players applying the \IP reach
an imitation-stable state after a pseudopolynomial number of
rounds. Recall that in this case each player decreases its latency by
at least $\nu$ if it were the only player to change its strategy. In
this section, we consider sequential imitation dynamics such that in
each round a single player is permitted to imitate someone
else. Furthermore, we assume that each player changes its path
regardless of the anticipated latency gain. Now, it is obvious that
sequential imitation dynamics converge towards imitation-stable states
as the potential $\Phi$ strictly decreases after every strategy
change. Hence, we focus on the convergence time of such dynamics.

For such sequential imitation dynamics we prove an exponential lower
bound on the number of rounds to reach an imitation-stable state. To
be precise, we present a family of symmetric network congestion games
with corresponding initial states such that every sequence of
imitation leading to an imitation-stable state is exponentially
long. To some extent, this results complements
Theorem~\ref{trm:convergence} as it presents an exponential lower
bound in a slightly different model. However, in this lower bound
$\nu$ is arbitrary large and almost every state is imitation-state
with respect to the \IP.

\begin{theorem}\label{Thm:SequentialImitation}
  For every $n \in \NN$, there exists a symmetric network congestion game with
  $n$ players, initial state $\Sinit$, polynomial bounded network size, and
  linear latency functions such that \emph{every} sequential imitation dynamics
  that start in $\Sinit$ is exponentially long.
\end{theorem}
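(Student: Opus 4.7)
The plan is to adapt the binary-counter construction of \cite{Ackermann/etal:Combinatorial:06}, which exhibits congestion games whose shortest best-response sequences to a pure Nash equilibrium are of length $2^{\Omega(k)}$, and to re-engineer it so that the resulting game is a \emph{symmetric} single-commodity network with \emph{linear} latency functions in which every move respects imitation, i.\,e., the target path is always occupied by some other player at the moment of the switch.

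The skeleton I would build is a single source $s$, a single sink $t$, and, for each bit $i\in\{1,\ldots,k\}$, a pair of parallel $s$-$t$ paths $P_i^0$ and $P_i^1$, together with a few shared auxiliary edges implementing the carry logic. All edges are equipped with linear latency functions $\lat_e(x)=a_e x+b_e$ whose coefficients, obtained by rescaling (and possibly subdividing) the already low-degree edges of \cite{Ackermann/etal:Combinatorial:06}, ensure that in any state encoding a bit vector $b=(b_1,\ldots,b_k)$ the unique latency-improving move is for one player on $P_i^0$ to migrate to $P_i^1$, where $i$ is the least index with $b_i=0$. To make this move available under imitation I would choose $\Sinit$ so that each of the paths $P_i^0$ and $P_i^1$ initially carries at least $c\ge 2$ players; the bit $b_i$ is then encoded by which of the two paths carries the \emph{majority} of the load rather than by exclusive occupation. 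The small constant surplus on every shared edge can be absorbed into the linear coefficients $a_e,b_e$ without disturbing the ordering of improving moves on which the counter relies, and every migration leaves both $P_i^0$ and $P_i^1$ non-empty so the next-step imitation always has a witness.

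With these ingredients the lower bound is immediate. As observed just before the theorem statement, every sequential imitation step strictly decreases Rosenthal's potential $\Phi$, so no state is ever revisited; by construction, the unique $\Phi$-decreasing move from bit vector $b$ is the counter's next-state transition; and the counter needs $2^{\Omega(k)}$ increments to reach its terminal configuration, which is the unique imitation-stable state. Taking $n=\Theta(k)$ players over a network of $\poly(k)$ size therefore yields a lower bound of $2^{\Omega(n)}$ on the length of every sequential imitation dynamics starting from $\Sinit$. The main obstacle, which I expect to be a careful but ultimately routine algebraic check, is to verify that after inserting the permanent tenants \emph{and} flattening the latencies to linear form, the strict order of path latencies demanded by the counter semantics is preserved in \emph{all} $2^k$ intermediate configurations; this requires some care because the tenants contribute additively to every shared edge and the linearity constraint restricts the amount by which the counter's coefficients may be retuned, but the slack available in \cite{Ackermann/etal:Combinatorial:06} appears sufficient to accommodate the perturbation.
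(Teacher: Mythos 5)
Your overall strategy --- take the exponential best-response lower bound of \cite{Ackermann/etal:Combinatorial:06}, plant permanent ``tenant'' players on every strategy the dynamics will ever use so that imitation always has a witness, and retune the latency functions so that the intended move order survives --- is the same one the paper uses. The difference is \emph{where} the surgery is performed, and that difference is not cosmetic. The paper deliberately does not work at the level of the symmetric network (it explicitly rejects the ``add a player per path'' idea because in general this introduces exponentially many players); instead it modifies the intermediate \emph{threshold games}, where every player has exactly two strategies $\Sin_i$ and $\Sout_i$. Each player $i$ is replaced by three copies $i_1,i_2,i_3$, an explicit offset of $\tfrac32\sum_{j\ne i}a_{ij}$ is added to the latency of $r_i$, and a short case analysis then shows that after any move of $i_3$ neither $i_1$ nor $i_2$ wants to follow, because all three copies sharing one strategy is strictly worse than staying put. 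Only after this is established does the paper invoke the fact that the \PLS-reductions of \cite{Ackermann/etal:Combinatorial:06} are embeddings in order to transport the construction to a symmetric network game with linear latencies.

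The gap in your proposal is that you defer exactly this step. The assertion that ``the small constant surplus can be absorbed into the linear coefficients without disturbing the ordering of improving moves,'' with the remaining verification labelled a routine algebraic check, is the entire mathematical content of the theorem: with $c\ge 2$ tenants on each of $P_i^0$ and $P_i^1$ and shared carry edges whose congestion changes as other bits flip, you must show (a) that in each of the $2^{\Omega(k)}$ intermediate states the unique improving imitation move is still the intended one, and (b) that no \emph{tenant} ever has an improving imitation move --- for instance a tenant on the minority path imitating the majority, which your majority encoding makes a live possibility. Point (b) is where the paper's offset trick does real work and where a generic ``retune the coefficients'' argument can fail. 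A secondary concern: the network you sketch (a clean pair of parallel paths per bit plus carry edges) is not the construction of \cite{Ackermann/etal:Combinatorial:06}; their symmetric network instance is obtained by chaining reductions from {\sf MaxCut} through threshold games, and it is not established that an instance of your simpler shape with linear latencies realizes an exponentially long unique-improving-move sequence at all. So the proposal assembles the right ingredients but omits the argument that makes them fit together.
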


Subsequently, we do not give a complete proof of the theorem but we
discuss how to adapt a series of constructions as presented
in~\cite{Ackermann/etal:Combinatorial:06} which shows that there
exists a family of symmetric network congestion games with the same
properties as stated in the above theorem such that \emph{every best
  response dynamics} starting in $\Sinit$ is exponentially long. To be
precise, they prove that in every intermediate state of the best
response dynamics \emph{exactly} one player can improve its
latency. Recall that in best response dynamics players know the entire
strategy space and that in each round one player is permitted the
switch to the best available path.

In the following, we summarize the constructions presented
in~\cite{Ackermann/etal:Combinatorial:06}. At first, a \PLS-reduction
from the local search variant of {\sf MaxCut} to threshold games is
presented. In a threshold game, each player either allocates a single
resource on its own or shares a bunch of resources with other players.
Hence, in a threshold game each player chooses between two strategies
only. The precise definition of these games is given below. Then, a
\PLS-reduction from threshold games to asymmetric network congestion
games is presented. Finally, the authors
of~\cite{Ackermann/etal:Combinatorial:06} show how to transform an
asymmetric network congestion game into a symmetric one such that the
desired properties of best response dynamics are preserved. All
\PLS-reductions are embedding, and there exists a family of instances
of {\sf MaxCut} with corresponding initial configurations such that in
every intermediate configuration generated by a local search algorithm
exactly one node can be moved to the other side of the cut. Therefore,
there exists a family of symmetric network congestion games with the
properties as stated above.

A naive approach to prove a lower bound on the convergence time of
imitation dynamics in symmetric network congestion games is as
follows. Building upon the lower bound of the convergence time of best
responses dynamics, a player for every path is added to the game. Then
the latency functions are adopted accordingly. However, in this case
we would introduce an exponential number of additional players. In
threshold games, however, the players' strategy spaces have size two
only. Hence, we could apply this approach to threshold games. In the
following, we present the details of this approach. It is then not
difficult to verify that the \PLS-reductions mentioned above can be
reworked in order to prove
Theorem~\ref{Thm:SequentialImitation}. However, note that this does
not imply that computing a imitation-stable state is \PLS-complete
since one can always assign all players to the same strategy which
obviously is an imitation-stable state.

\emph{Threshold games} are a special class of congestion games in
which the set of resources $\calR$ can be divided into two disjoint
sets $\Rin$ and $\Rout$. The set $\Rout$ contains exactly one
resource $r_i$ for every player $i \in \calN$. This resource has a
fixed latency $T_i$ called the \emph{threshold} of player $i$. Each
player $i$ has only two strategies, namely a strategy $\Sout_i =
\{r_i\}$ with $r_i \in \Rout$, and a strategy $\Sin_i \subseteq
\Rin$. The preferences of player $i$ can be described in a simple and
intuitive way: Player $i$ prefers strategy $\Sin_i$ to strategy
$\Sout_i$ if the latency of $\Sin_i$ is smaller than the threshold
$T_i$. \emph{Quadratic threshold games} are a subclass of threshold
games in which the set $\Rin$ contains exactly one resource $r_{ij}$
for every unordered pair of players $\{i,j\} \subseteq
\calN$. Additionally, for every player $i \in \calN$ of a quadratic
threshold game, $\Sin_i = \{r_{ij} \mid j\in \calN, j\neq
i\}$. Moreover, for every resource $r_{ij} \in \Rin$:
$\lat_{r_{ij}}(x) = a_{i,j} \cdot x$ with $a_{ij} \in \NN$, and for
every resource $r_i$: $\lat_{r_i}(x) = 1/2 \, \sum_{j \neq i} a_{ij}
\cdot x$ to $r_i$.

Let $\Gamma$ be a quadratic treshold game that has an initial state
$\Sinit$, such that every best response dynamics which starts $\Sinit$
is exponentially long, and every intermediate state has a unique
player which can improve its latency. Suppose now that we replace
every player $i$ in $\Gamma$ by three players $i_1, i_2$ and $i_3$
which all have the same strategy spaces as player $i$
has. Additionally, suppose that we choose new latency functions
$\lat'$ for every resource $r_i$ as follows: $\lat'_{r_i}(x) = 1/2
\sum_{j \neq i} a_{ij} \cdot x + 3/2 \sum_{j \neq i} a_{ij}$. Hence,
we add an additional offset of $3/2 \sum_{j \neq i} a_{ij}$.

Suppose now that we assign every player $i_1$ to $\Sout_i$, and every
player $i_2 $ to $\Sin_i$. For every possible strategy that the $i_3$
players can use, their latency increases by $2 \sum_{j \neq i}
a_{ij}$, compared to the equivalent state in the original game, in
which every player $i$ chooses the same strategy as player $i_3$ does.
Hence, if if we assign every player $i_3$ to the strategy chosen by
player $i$ in $\Sinit$ and if the players $i_1$ and $i_2$ were not
permitted to change their strategies, then we would obtain the desired
lower bound on the convergence time of imitation dynamics in threshold
games. However, since also $i_1$ and $i_2$ are permitted to imitate,
it remains to show that whenever player $i_3$ has changed its
strategy, then both $i_1$ and $i_2$ do not want to change their
strategies anymore. 

First, suppose that player $i_3$ switches from the strategy of player
$i_2$ to the strategy of player $i_1$. Obviously, player $i_1$ does
not want to change its strategy as otherwise $i_3$ would not have
imitated $i_1$. Suppose now that $i_2$, whose strategy is dropped by
$i_3$, also wants to imitate $i_1$. In this case all three players
would allocate $\Sout_i$, and hence have latency $3 \, \sum_{r \in
  j\neq i} a_{ij}$. However, if player $i_2$ would stay with strategy
$\Sin$ then its latency is upper bounded by $2 \, \sum_{r \in \Sin_i}
a_{ij}$.  Hence, players $i_1, i_2, i_3$ will never select $\Sout$ at
the same time.

Second, suppose that player $i_3$ switches from the strategy of player
$i_1$ to the strategy of player $i_2$. Now, player $i_2$ does not want
to change its strategy as otherwise $i_3$ would not have imitated
$i_2$. Suppose now that $i_1$, whose strategy is dropped by $i_3$,
also wants to imitate $i_3$.  In this case, the latency would
increase to at least $3 \, \sum_{r \in j\neq i} a_{ij}$, whereas player
$i_1$ would have latency $2 \, \sum_{r \in j\neq i} a_{ij}$ if it
would stay with strategy $\Sout$. Hence, players $i_1, i_2, i_3$ will
never select $\Sin$ at the same time.

By applying the argument that all three players never allocate the
same strategy at the same point in time we can conclude our
claim and Theorem~\ref{Thm:SequentialImitation} follows.


  \section{Fast Convergence to Approximate Equilibria}
\label{Section:ApproxStable}

Theorem~\ref{trm:convergence} guarantees convergence of concurrent
imitation dynamics generated by the \IP to an imitation-stable state in
the long run. However, it does not give a reasonable bound on the time
due to the small progress that can be made. Hence, as our main result,
we present bounds on the time to reach an 
\emph{approximate equilibrium}. Here we relax the definition of an
imitation-stable state in two aspects: We allow only a small minority
of agents to deviate by more than a small amount from the average
latency. Our notion of an approximate equilibrium is similar to the
notion used in \cite{Blum/etal:NoRegret:06,
Fischer/etal:FastConvergence:06, Fotakis/etal:CGConvergence:08}.  It is
motivated by the following observation. When sampling other players
each player gets to know its latency if it would adopt that players'
strategy. Hence to some extend each player can compute the average
latency $\lavg^+$ and determine if its own latency is above or below
that average.

\begin{definition}[($\delta$,$\epsilon$,$\nu$)-equilibrium]
  Given a state $x$, let the set of \emph{expensive} paths be
  $\paths^+_{\epsilon,\nu} = \{P\in\paths: \lat_P(x) > (1+\epsilon)\,\lavg^+ +
  \nu\}$ and let the set of \emph{cheap} paths be $\paths^-_{\epsilon,\nu} =
  \{P\in\paths: \lat_P(x) < (1-\epsilon)\,\lavg - \nu\}$.
  Let $\paths_{\epsilon,\nu} =
  \paths^+_{\epsilon,\nu}\cup\paths^-_{\epsilon,\nu}$. A configuration $x$ is at
  a \emph{($\delta$,$\epsilon$,$\nu$)-equilibrium} iff it holds that
  $\sum_{P\in\paths^{\epsilon,\nu}} x_P \le \delta\cdot n$.
\end{definition}

Intuitively, a state at ($\delta$,$\epsilon$,$\nu$)-equilibrium is a state in
which almost all agents are almost satisfied when comparing their own
situation with the situation of other agents. 
One may hope that it is possible to reach a state in which \emph{all}
agents are almost satisfied quickly . This would be a relaxation of
the concept of Nash equilibrium. We will argue below, however, that
there is no rapid convergence to such states.

\begin{theorem}
  \label{trm:convergence_bicriteria}
  For an arbitrary initial assignment $x_0$, let $\tau$ denote the
  first round in which the \IP reaches a
  ($\delta$,$\epsilon$,$\nu$)-equilibrium. Then,
  \[
  \Ex{\tau} =
  \Oh{\frac{d}{\epsilon^2\,\delta}\cdot\log\left(\frac{\Phi(x_0)}{\Phi^*}\right)}
  \enspace.
  \]
\end{theorem}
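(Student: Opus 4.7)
The plan is to use the Rosenthal potential $\Phi$ as a Lyapunov function and to establish that, whenever the current state $x$ is not at a $(\delta,\epsilon,\nu)$-equilibrium, the expected potential drop in one round of the \IP is at least a multiplicative factor $c\,\epsilon^{2}\delta/d$ of $\Phi(x)$. Given such multiplicative drift, a standard argument analogous to the one used in the proof of Theorem~\ref{trm:convergence} (via the martingale-like lemma from the appendix) immediately converts this into an expected hitting time of $\Oh{(d/(\epsilon^{2}\delta))\,\log(\Phi(x_{0})/\Phi^{*})}$ into the equilibrium set.

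To obtain the multiplicative drop, I assume $x$ is not in equilibrium, so $\sum_{P\in\paths_{\epsilon,\nu}}x_{P}>\delta n$, which means at least $\delta n/2$ agents lie either on expensive paths or on cheap paths. By symmetry I focus on the expensive case. Fix an agent $i$ on an expensive path $P$ and let $Y$ denote the random latency $\lat_{Q}^{+}$ observed on a uniformly sampled other player's path $Q$. Then $\Ex{Y}=\lavg^{+}$, and by the definition of expensive, $\lat_{P}-\Ex{Y}\ge\epsilon\lavg^{+}+\nu$. Agent $i$'s contribution to the expected virtual potential gain $\Ex{\sum_{P,Q}V_{PQ}}$ equals $-(\lambda/(d\lat_{P}))\,\Ex{(\lat_{P}-Y)^{2}\,\mathbf{1}[\lat_{P}-Y>\nu]}$. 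Splitting off the tail event $\{\lat_{P}-Y\le\nu\}$, whose positive contribution to $\Ex{\lat_{P}-Y}$ is at most $\nu$, I get $\Ex{(\lat_{P}-Y)\,\mathbf{1}[\lat_{P}-Y>\nu]}\ge\epsilon\lavg^{+}$; a one-line Cauchy--Schwarz step (using $\Pr[\,\cdot\,]\le 1$) then lower bounds the second moment by $(\epsilon\lavg^{+})^{2}$. This yields a per-agent expected virtual drop of order $\lambda\epsilon^{2}(\lavg^{+})^{2}/(d\lat_{P})$, and Lemma~\ref{trm:virtual_potential} transfers this bound (up to a factor $\tfrac{1}{2}$) to the true potential.

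Summing over the $\Omega(\delta n)$ expensive agents, together with the symmetric inflow argument that handles cheap paths (an agent on a typical path sampling a cheap path $Q$ sees an expected improvement of at least $\epsilon\lavg^{+}$ relative to its own latency), and comparing the total with the elementary edgewise bounds $\Phi(x)\le n\,\lavg$ and $\Phi(x)\ge n\,\lavg/(d+1)$ (the latter via the elasticity assumption), I conclude $\Ex{\Phi(x')}\le(1-c\,\epsilon^{2}\delta/d)\,\Phi(x)$ for an absolute constant $c>0$. Combined with $\Phi\ge\Phi^{*}$ at all times, multiplicative drift gives the claimed hitting time.

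The main obstacle I foresee is the cheap-path side: an agent migrating \emph{into} a cheap path has an incentive that depends on its own origin latency rather than on the average, so the sampling argument is slightly less clean than for expensive paths, and the $\nu$-threshold plus the possible blow-up between $\lat_{Q}$ and $\lat_{Q}^{+}$ on lightly loaded edges must be controlled. I expect these have to be handled by separately accounting for edges of load at most $d$, reusing the $\nu_{e}$ bound exactly as in Case~2b of the proof of Lemma~\ref{trm:virtual_potential}; this is also where one must verify that the constants in the drift inequality are independent of $\lat_{\max}/\lavg$, which is essential for the final bound to depend only on $d$, $\epsilon$, $\delta$ and $\log(\Phi(x_{0})/\Phi^{*})$.
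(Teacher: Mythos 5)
Your overall architecture --- establish a multiplicative expected drop $\Ex{\Phi(x')}\le(1-\Omega(\epsilon^2\delta/d))\,\Phi(x)$ whenever $x$ is not at a ($\delta$,$\epsilon$,$\nu$)-equilibrium, then invoke the martingale-type lemma from the appendix --- is exactly the paper's. The gap is in the quantitative step that produces the drop. For an agent on an expensive path $P$ you lower-bound the magnitude of its contribution by $\lambda\,\Ex{(\lat_P-Y)^2\,\mathbf{1}[\lat_P-Y>\nu]}/(d\,\lat_P)\ \ge\ \lambda\,(\epsilon\,\lavg^+)^2/(d\,\lat_P)$. The factor $1/\lat_P$ does not go away: summing over the $\Omega(\delta\,n)$ expensive agents and comparing with $\Phi\le n\,\lavg^+$ yields a relative drop of only $\Omega\bigl(\epsilon^2\,\delta\,\lavg^+/(d\,\lat_{\max})\bigr)$, i.e.\ your drift inequality carries an extra factor $\lavg^+/\lat_{\max}$ that is not controlled by any parameter appearing in the theorem. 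You do flag the need for constants independent of $\lat_{\max}/\lavg$ at the end, but you attribute it to the cheap-path side and to lightly loaded edges; in fact it already breaks your expensive-path computation, and nothing in your sketch removes it.

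The paper closes exactly this hole. It restricts the sampled destinations to paths $Q$ with $\lat_Q^+\le(1+\epsilon)\,\lavg^+$ (so the $\nu$-threshold of the protocol is automatically met, since $\lat_P>(1+\epsilon)\,\lavg^++\nu$), introduces the volume $T$ and the average ex-post latency $C$ of these destinations, and derives the tradeoff $T\ge\epsilon\,\lavg^+/((1+\epsilon)\,\lavg^+-C)$ from $\lavg^+\ge T\,C+(1-T)(1+\epsilon)\,\lavg^+$. After Jensen, the per-agent drop has the form $(\lambda/d)\,\lat_P\,T\,\bigl((\lat_P-C)/\lat_P\bigr)^2$; the leading $\lat_P$ is bounded below by $\lavg^+$, and --- this is the step you are missing --- the squared relative gain $\bigl((\lat_P-C)/\lat_P\bigr)^2$ is \emph{increasing} in $\lat_P$, so one may substitute its value at $\lat_P=(1+\epsilon)\,\lavg^+$ instead of leaving $\lat_P$ in a denominator. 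Combined with the tradeoff and $C\le\lavg^+$ this gives $\Omega(\epsilon^2\,\lavg^+/d)$ per agent with no $\lat_{\max}$ (and no $\nu$) dependence. A second, smaller issue: your cheap-path case is dismissed by ``symmetry,'' but it is not symmetric --- there the Jensen weights are $x_Q\,\lat_Q/n$ over the \emph{origins} with $\lat_Q\ge(1-\epsilon)\,\lavg$ and the tradeoff becomes $T\ge\epsilon\,\lavg/(C-(1-\epsilon)\,\lavg)$; the paper carries this case out separately. Until you incorporate the destination (resp.\ origin) restriction and the monotonicity substitution, the drift inequality, and hence the theorem, is not established.
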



\begin{proof}
  We consider a state $x(t)$ that is not at a
  ($\delta$,$\epsilon$,$\nu$)-equilibrium and derive a lower bound on
  the expected potential gain. There are two cases. Either at least
  half of the agents utilizing paths in $\paths_{\epsilon,\nu}$
  utilize paths in $\paths^+_{\epsilon,\nu}$ or at least half of them
  utilize paths in $\paths^-_{\epsilon,\nu}$.

  \begin{description}
  \item[Case 1:] Many agents use expensive paths, i.\,e., $\sum_{P\in
      \paths^+_{\epsilon,\nu}} x_P \ge \delta\,n/2$.  Let us define
    the volume $T$ and the average ex-post latency $C$ of potential
    destination paths, i.\,e., paths with ex-post latency at most
    $(1+\epsilon)\lavg^+$, by
    \[ 
    T = \sum_{Q:\lat^+_Q \le (1+\epsilon)\lavg^+} \frac{x_Q}{n}
    \quad\text{and}\quad
    C = \frac{1}{T} \sum_{Q:\lat^+_Q \le (1+\epsilon)\lavg^+} \frac{x_Q}{n} \lat^+_Q
    \enspace.
    \]
    Clearly,
    \[
    \lavg^+
    = \sum_{P}\frac{x_P}{n}\lat_P^+
    \ge T\cdot C +
    (1-T)\cdot(1+\epsilon)\,\lavg^+\enspace,
    \]
    and solving for $T$ yields
    \begin{equation}
      \label{eqn:c_t_tradeoff1}
      T \ge \frac{\epsilon\,\lavg^+}{(1+\epsilon)\,\lavg^+ - C} 
      \enspace.
    \end{equation}

    We now give a lower bound on the expected virtual potential gain
    given that the current state is not at a
    ($\delta$,$\epsilon$,$\nu$)-equilibrium. We consider only the
    contribution of agents utilizing paths in $\paths^+_{\epsilon,\nu}$
    and sampling paths with ex-post latency below
    $(1+\epsilon)\,\lavg^+$. Then,
    \begin{eqnarray*}
      \Ex{\sum_{P,Q} V_{PQ}}
      &\le& -\frac{\lambda}{d}\sum_{P\in \paths^+_{\epsilon,\nu}} x_P 
      \sum_{Q:\lat^+\le (1+\epsilon)\lavg^+} \frac{x_Q}{n}\cdot\frac{\lat_P-\lat_Q(x+1_Q-1_P)}{\lat_P}(\lat_P-\lat_Q(x+1_Q-1_P))\\
      &=& -\frac{\lambda}{d}\sum_{P\in \paths^+_{\epsilon,\nu}} x_P \lat_P
      \sum_{Q:\lat^+\le (1+\epsilon)\lavg^+} \frac{x_Q}{n}\cdot\left(\frac{\lat_P-\lat_Q^+}{\lat_P}\right)^2
      \enspace.
    \end{eqnarray*}
    Using Jensen's inequality (Fact~\ref{trm:jensen})
    and substituting $\lat_P \ge \lavg^+$ yields
    \begin{eqnarray*}
      \Ex{\sum_{P,Q} V_{PQ}}
      &\le& -\frac{\lambda}{d}\lavg^+\sum_{P\in \paths^+_{\epsilon,\nu}} x_P
      \left(\sum_{Q:\lat^+\le (1+\epsilon)\lavg^+} \frac{x_Q}{n}\cdot \frac{\lat_P-\lat_Q^+}{\lat_P}\right)^2 \cdot \frac{1}{\sum_{Q:\lat_Q^+\le(1+\epsilon)\lavg^+}\frac{x_Q}{n}}\enspace.
    \end{eqnarray*}
    Now we substitute $\lat_P\ge (1+\epsilon)\,\lavg^+$ and use the fact that the squared expression is monotone in $\lat_P$. Furthermore, we substitute the definition of $T$ and $C$ to obtain
    \begin{eqnarray*}
      \Ex{\sum_{P,Q} V_{PQ}}
      &\le& -\frac{\lambda}{d}\lavg^+\sum_{P\in \paths^+_{\epsilon,\nu}} x_P
      \left(\frac{T\,(1+\epsilon)\lavg^+-\sum_{Q:\lat^+\le (1+\epsilon)\lavg^+} \frac{x_Q\,\lat_Q^+}{n}}{(1+\epsilon)\lavg^+}\right)^2 \cdot \frac{1}{T}\\
      &\le& -\frac{\lambda}{d}\lavg^+\sum_{P\in \paths^+_{\epsilon,\nu}} x_P
      \left(\frac{T\,(1+\epsilon)\lavg^+-T\,C}{(1+\epsilon)\lavg^+}\right)^2 \cdot \frac{1}{T}\\
      &=& -\frac{\lambda}{d}\lavg^+\cdot
      \left(\frac{(1+\epsilon)\lavg^+-C}{(1+\epsilon)\lavg^+}\right)^2 \cdot T\cdot \sum_{P\in \paths^+_{\epsilon,\nu}} x_P
      \enspace.
    \end{eqnarray*}
    We can now use the tradeoff shown in Equation~(\ref{eqn:c_t_tradeoff1}),
    $C\le \lavg^+$, and $\sum_{P\in \paths^+_{\epsilon,\nu}} x_P >
    \delta\,n/2$ to obtain
    \begin{eqnarray*}
      \Ex{\sum_{P,Q} V_{PQ}}
      &\le& -\frac{\lambda}{d}\cdot\lavg^+\cdot
      \frac{(1+\epsilon)\lavg^+-C}{((1+\epsilon)\lavg^+)^2}\cdot \epsilon\,\lavg^+\cdot \sum_{P\in \paths^+_{\epsilon,\nu}} x_P\\
      &\le& -\frac{\lambda}{d}\cdot\epsilon \cdot \frac{\epsilon\,\lavg^+}{(1+\epsilon)^2} \cdot \frac{\delta\,n}{2}\\
      &\le& -\Omega\left(\frac{\epsilon^2\cdot \delta}{d} \cdot n\,\lavg^+\right)
      \enspace.
    \end{eqnarray*}
    Since $n\lavg^+ \ge \Phi$, we have by Lemma ~\ref{trm:virtual_potential}
    \[
    \Ex{\Phi(x(t+1))} 
    \le \Phi(x(t)) - \frac{1}{2}\Ex{\sum_{P,Q} V_{PQ}}
    \le \Phi(x(t)) \left(1 - \Omega\left(\frac{\epsilon^2\cdot \delta}{d}\right)\right)
    \enspace.
    \]

  \item[Case 2:] Many agents use cheap paths, i.\,e., $\sum_{P\in
      \paths^-_{\epsilon,\nu}} x_P \ge \delta\,n/2$. This time, we
    define the volume $T$ and average latency $C$ of paths which are
    potential origins of agents migrating towards
    $\paths^-_{\epsilon,\nu}$.
    \[ 
    T = \sum_{Q:\lat_Q \ge (1-\epsilon)\lavg} \frac{x_Q}{n}
    \quad\text{and}\quad
    C = \frac{1}{T} \sum_{Q:\lat_Q \ge (1-\epsilon)\lavg} \frac{x_Q}{n} \lat_Q
    \enspace.
    \]
    This time,
    \[ \lavg \le T\cdot C + (1-T)\cdot(1-\epsilon)\,\lavg \]
    implying
    \begin{equation}
      \label{eqn:c_t_tradeoff2}
      T
      \ge \frac{\epsilon\,\lavg}{C - (1-\epsilon)\,\lavg}
      \enspace.
    \end{equation}
    Similarly as in Case 1 we now give a lower bound on the contribution to the virtual potential gain caused by agents with latency at least $(1-\epsilon)\lavg$ sampling agents in  $\paths^-_{\epsilon,\nu}$.
    \begin{eqnarray*}
      \Ex{\sum_{P,Q} V_{PQ}}
      &\le& -\frac{\lambda}{d}\sum_{Q:\lat_Q\ge(1-\epsilon)\lavg}x_Q\,\lat_Q
      \sum_{P\in\paths^-_{\epsilon,\nu}}\frac{x_P}{n}\cdot\left(\frac{\lat_Q-\lat_P^+}{\lat_Q}\right)^2
      \enspace.
    \end{eqnarray*}
    we rearrange the sum, apply Jensen's inequality (Fact~\ref{trm:jensen}) to
    obtain
    \begin{eqnarray*}
      \Ex{\sum_{P,Q} V_{PQ}}
      &\le& -\frac{\lambda}{d}\sum_{P\in\paths^-_{\epsilon,\nu}}x_P
      \sum_{Q:\lat_Q\ge(1-\epsilon)\lavg}\frac{x_Q\,\lat_Q}{n}\cdot\left(\frac{\lat_Q-\lat_P^+}{\lat_Q}\right)^2\\
      &\le& -\frac{\lambda}{d}\sum_{P\in\paths^-_{\epsilon,\nu}}x_P
      \left(\sum_{Q:\lat_Q\ge(1-\epsilon)\lavg}\frac{x_Q\,\lat_Q}{n}\cdot\frac{\lat_Q-\lat_P^+}{\lat_Q}\right)^2\cdot\frac{1}{\sum_{Q:\lat_Q\ge(1-\epsilon)\lavg}\frac{x_Q\,\lat_Q}{n}}\\
      &=&  -\frac{\lambda}{d}\sum_{P\in\paths^-_{\epsilon,\nu}}x_P
      \left(\sum_{Q:\lat_Q\ge(1-\epsilon)\lavg}\frac{x_Q}{n}\cdot(\lat_Q-\lat_P^+)\right)^2\cdot\frac{1}{C\,T}\\
      &=&  -\frac{\lambda}{d}\sum_{P\in\paths^-_{\epsilon,\nu}}x_P \left(T\cdot(C-\lat_P^+)\right)^2\cdot\frac{1}{C\,T}\\
      &\le&  -\frac{\lambda}{d}\left(T\cdot(C-(1-\epsilon)\,\lavg)\right)^2\cdot\frac{1}{C\,T}\cdot\sum_{P\in\paths^-_{\epsilon,\nu}}x_P
      \enspace.
    \end{eqnarray*}
    Finally, using Equation~(\ref{eqn:c_t_tradeoff2}) and $C\,T\le\lavg$,
    \begin{eqnarray*}
      \Ex{\sum_{P,Q} V_{PQ}}
      &\le&  -\frac{\lambda}{d} \left(\epsilon\,\lavg\right)^2\cdot\frac{1}{C\,T}\cdot \sum_{P\in\paths^-_{\epsilon,\nu}}x_P\\
      &\le&  -\frac{\lambda\,\epsilon^2\,\lavg}{d}\delta n\,\\
      &\le&  -\Omega\left(\frac{\delta\,\epsilon^2\,\Phi}{d}\right)
      \enspace.
    \end{eqnarray*}
  \end{description}
In both cases, the potential decreases by at least a factor of
$(1-\Omega(\epsilon^2\,\delta/d))$ in expectation, which, by
Lemma~\ref{trm:martingalelike}, implies that the expected time to
reach a state with $\Phi(x(t))\le \Phi^*$ is at most the time stated
in the theorem.
\end{proof}



From Theorem~\ref{trm:convergence_bicriteria} we can immediately
derive the next corollary.

\begin{corollary}
  Consider a symmetric network congestion game with polynomial latency
  functions of maximum degree $d$ and with minimum and maximum
  coefficients $a_{\max}$ and $a_{\min}$, respectively. If all players use the
  \IP, then the expected convergence time of imitation dynamics to an
  ($\delta$,$\epsilon$,$\nu$)-equilibrium is upper bounded by 
  \[
  \Oh{\frac{d^2}{\epsilon^2\,\delta} \cdot \log\left(n\,m\frac{a_{\max}}{a_{\min}}\right)} 
  \enspace. 
  \]
\end{corollary}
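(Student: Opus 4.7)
The plan is to instantiate Theorem~\ref{trm:convergence_bicriteria} for the special case of polynomial latency functions, which only requires bounding the ratio $\Phi(x_0)/\Phi^*$ and recalling that the elasticity of polynomials of degree at most $d$ with positive coefficients is itself at most $d$ (as noted in the introduction). All other parameters in the bound $\Oh{(d/(\epsilon^2\delta)) \log(\Phi(x_0)/\Phi^*)}$ are already in the desired form, so the only real work is expressing $\log(\Phi(x_0)/\Phi^*)$ in terms of $n$, $m$, $a_{\max}$ and $a_{\min}$.

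For the upper bound on $\Phi(x_0)$, I would use that a polynomial of degree at most $d$ with nonnegative coefficients each bounded by $a_{\max}$ satisfies $\lat_e(x) \le (d+1)\,a_{\max}\,x^d$ for $x\ge 1$. Summing over the at most $m$ edges and the $x_e \le n$ terms of the partial sum in Rosenthal's potential, I obtain
\[
\Phi(x_0) \;\le\; \sum_{e\in E}\sum_{i=1}^{x_e} (d+1)\,a_{\max}\,i^d
\;\le\; m\cdot n\cdot (d+1)\,a_{\max}\,n^d
\;=\; \Oh{m\,d\,a_{\max}\,n^{d+1}}\enspace.
\]
For the lower bound on $\Phi^*$, observe that in any state at least one edge $e$ carries load $x_e \ge 1$, so $\Phi^* \ge \lat_e(1) \ge a_{\min}$ using that all positive coefficients are at least $a_{\min}$.

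Combining the two estimates yields
\[
\log\!\left(\frac{\Phi(x_0)}{\Phi^*}\right)
\;\le\; \log\!\left(\frac{m\,(d+1)\,a_{\max}\,n^{d+1}}{a_{\min}}\right)
\;=\; \Oh{d\cdot\log\!\left(n\,m\,\tfrac{a_{\max}}{a_{\min}}\right)}\enspace,
\]
where the factors $\log(d+1)$ and $(d+1)\log n$ are absorbed into the single $d\log n$ term. Plugging this into the bound of Theorem~\ref{trm:convergence_bicriteria} multiplies the leading $d/(\epsilon^2\delta)$ by an additional factor of $d$, producing exactly the claimed expression $\Oh{(d^2/(\epsilon^2\delta))\cdot\log(n\,m\,a_{\max}/a_{\min})}$.

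There is no real obstacle here: the argument is a routine substitution of instance-specific potential bounds into a previously established convergence theorem. The only minor subtlety is that the extra factor of $d$ in the final bound arises from the $(d+1)\log n$ contribution to $\log(\Phi(x_0)/\Phi^*)$, which one has to track carefully; apart from that, the corollary is a direct consequence of Theorem~\ref{trm:convergence_bicriteria}.
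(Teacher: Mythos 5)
Your proposal is correct and matches the paper's intent: the paper derives this corollary from Theorem~\ref{trm:convergence_bicriteria} without spelling out the details, and the intended argument is exactly your substitution of $\Phi(x_0)=\Oh{m\,(d+1)\,a_{\max}\,n^{d+1}}$ and $\Phi^*\ge a_{\min}$, with the extra factor of $d$ coming from the $(d+1)\log n$ term. No gaps.
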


Let us remark, that ($\delta$,$\epsilon$,$\nu$)-equilibria are
transient, i.\,e., they can be left again once they are reached, for
example, if the average latency decreases or if agents migrate towards
low-latency paths. However, our proofs actually do not only bound the
time until a ($\delta$,$\epsilon$,$\nu$)-equilibrium is reached for
the first time, but rather the expected total number of rounds in
which the system is not at a ($\delta$,$\epsilon$,$\nu$)-equilibrium.

Note that in the definition of ($\delta$,$\epsilon$,$\nu$)-equilibria
we require the majority of agents to deviate by no more than a small
amount from $\lavg^+$. This is because the expected latency of a path
sampled by an agent is $\lavg$, but the latency of the destination
path becomes larger if the agent migrates. We use $\lavg^+$ as an
upper bound in our proof, although we could use a slightly smaller
quantity in cases where the origin $Q$ and the destination $P$
intersect, namely $\lat_P(x+1_P-1_Q)$.  Using an average over $P$ and
$Q$ of this quantity rather than $\lavg^+$ would result in a slightly
stronger definition of
($\delta$,$\epsilon$,$\nu$)-equilibria. However, we go with the
definition as presented above for the sake of clarity of presentation.

\medskip

Let us conclude this section by showing that there are fundamental
limitations to fast convergence. One could hope to show fast
convergence towards a state in which \emph{all} agents are
approximately satisfied, i.\,e., $\delta=0$. However, any protocol
that proceeds by sampling either a strategy or an agent and then
possibly migrates, takes at least expected time $\Omega(n)$ to reach a
state in which all agents sustain a latency that is within a constant
factor of $\lavg^+$. To see this, consider an instance with $n=2\,m$
agents and identical linear latency functions. Now, let $x_1=3$,
$x_2=1$ and $x_i=2$ for $3\le i \le n$. Then, the probability that one
of the players currently using resource $1$ samples resource $2$ is at
most $\Oh{1/m}=\Oh{1/n}$. Since this is the only possible improvement
step, this yields the desired bound.


  \section{Imitation Dynamics in Singleton Games}
\label{sec:singleton}

In this section, we improve on our previous results and consider
imitation dynamics in the special case of singleton congestion
games. A major drawback of the \IP is that players who rely on this
protocol cannot explore the complete set of edge if the dynamics
start in a state in which some edges are unused.  Even worse, the
event that an edge becomes unused in later states, although it has
been used in the initial state, is not impossible. It is clear,
however, that when starting from a random initial distribution of
players among the edges, the probability of emptying an edge becomes
increasingly unlikely \emph{as the number of players increases}.

Subsequently, we formalize this statement in the following
sense. Consider a family of singleton congestion games over the
\emph{same} set of edges with latency functions without offsets. Then,
the probability that an edge becomes unused is exponentially small in
the number of players. To this end, consider a vector of continuous
latency functions $\L=(\lat_e)_{i\in[m]}$ with
$\lat_e:[0,1]\to\mathbb{R}_{\ge 0}$.  To use these functions for games
with a finite number of players, we have to normalize them
appropriately. For any such function $\lat \in \L$, let $\lat^n$ with
$\lat^n(x)=\lat(x/n)$ denotes the respective scaled function. We may
think of this as having $n$ agents with weight $1/n$ each. Note that
this transformation leaves the elasticity unchanged, whereas the step
size $\nu$ decreases as $n$ increases. For a vector of latency
functions $\L=(\lat_e)_{i\in[m]}$, let $\L^n=(\lat_e^n)_{i\in[m]}$.

\begin{theorem}
  \label{trm:empty}
  Fix a vector of latency functions $\L$ with $\lat_e(0)=0$ for all
  $i\in[m]$. For the singleton congestion game over $\L^n$ with $n$
  players, the probability that the \IP with random initialization
  generates a state with $x_e=0$ for some $i\in[m]$ within $\poly(n)$
  rounds is bounded by $2^{-\Omega(n)}$.
\end{theorem}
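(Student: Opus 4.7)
The plan is to combine a concentration bound on the random initial state with a one-step empty-edge bound, relying on the invariant that every edge carries $\Omega(n)$ players throughout the $\poly(n)$-round run. The base case is easy: each player picks an edge uniformly and independently, so $x_e(0)$ is $\mathrm{Bin}(n,1/m)$, and a Chernoff bound together with a union bound over the $m$ edges yields $\min_e x_e(0) \ge n/(2m)$ outside a $2^{-\Omega(n)}$ event.

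The key one-step fact is that for an edge $e$ with $x_e(t) = k$ to become empty at round $t+1$, all $k$ current players must migrate simultaneously. Since each migrates with probability at most $\lambda/d < 1$ and the migration trials of distinct players are conditionally independent given $x(t)$,
\[
\Pr{x_e(t+1) = 0 \mid x(t)} \;\le\; (\lambda/d)^{x_e(t)} ,
\]
which is $2^{-\Omega(n)}$ whenever $x_e(t) = \Omega(n)$. Combined with a union bound over the $T = \poly(n)$ rounds and the $m$ edges this will give the theorem, provided the invariant $\min_e x_e(t) = \Omega(n)$ can be maintained throughout.

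The maintenance step is the main technical obstacle, and here I plan a drift-plus-concentration argument. Given $x(t)$, the net change $x_e(t+1) - x_e(t)$ is a signed sum of at most $n$ independent migration indicators, so Hoeffding/Chernoff gives a per-round deviation of at most $\epsilon n$ from its conditional mean with failure probability $2^{-\Omega(n)}$, where $\epsilon$ can be made arbitrarily small. Furthermore, continuity of $\lat_e$ together with $\lat_e(0) = 0$ implies that when $x_e$ is small the scaled latency $\lat_e(x_e/n)$ is small, so (i) the migration condition $\lat_e(x_e/n) > \lat_Q((x_Q+1)/n) + \nu$ excludes most or all departures from $e$ (for instance, at $x_e = 1$ we have $\lat_e(1/n) \le \nu_e \le \nu$, so the lone player on $e$ cannot leave), while (ii) the same smallness makes $e$ very attractive to players on the higher-latency edges. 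This produces a strictly positive drift of $\Ex{x_e(t+1) - x_e(t) \mid x(t)}$ whenever $x_e(t)$ falls below the equilibrium load $x_e^{\ast} = \Theta(n)$ of the dynamics (which is itself $\Theta(n)$ by the same continuity argument). Combining the restoring drift with the per-round $\epsilon n$ concentration yields, by a mean-reverting induction on $t$, the invariant $\min_e x_e(t) \ge \gamma n$ for all $t \le T$ outside a further $2^{-\Omega(n)}$ event, and a final union bound gives the theorem. The hard part is precisely this invariant-preservation step: a pure martingale bound would let fluctuations grow as $\sqrt{T n}$ and thus cannot handle $T = \poly(n)$, so the argument must genuinely exploit the restoring nature of the drift around the equilibrium rather than treat the walk as an unbiased random walk.
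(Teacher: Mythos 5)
Your overall architecture --- a Chernoff bound on the random initialization, an invariant that every edge carries $\Omega(n)$ players, round-by-round maintenance of that invariant via a restoring drift, and a union bound over $\poly(n)$ rounds and $m$ edges --- is exactly the shape of the paper's proof, and you correctly identify that treating the load as an unbiased random walk cannot work. The gap is in the one step you yourself flag as hard: you never actually establish the restoring drift. Your argument is ``when $x_e$ is small, $\lat_e$ is small, so $e$ is attractive to players on higher-latency edges, hence positive drift.'' Attractiveness alone gives nothing: you must show that there are \emph{enough} players sitting on expensive edges (who would migrate \emph{to} $e$) compared with the players sitting on edges cheaper than $e$ (the only ones $e$'s occupants can migrate \emph{to}); a priori almost everyone could already be on edges slightly cheaper than $e$, in which case the drift is negative. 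The paper closes this with its key combinatorial step (Case 2b): choosing the threshold $y_e$ so that $\lat_e(y_e)<\opt_{\L}/4^d$, it shows by a redistribution argument against the fractional optimum --- crucially using the elasticity bound $d$, since increasing congestions by a factor $<4$ increases latencies by a factor $<4^d$ --- that the number $n^+$ of players with latency above $\opt_{\L}$ satisfies $n^+\ge 4\max\{n^-,x_e\}$, whence the expected influx to $e$ is at least twice the expected outflux. Nothing in your proposal plays the role of this lemma.

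Second, even granting a restoring drift, ``strictly positive'' is not quantitatively sufficient against your $\epsilon n$ per-round fluctuation: near the boundary of the invariant you would lose up to $\epsilon n$ per round unless the drift is itself $\Omega(n)$ there, so the induction does not close as stated. The paper avoids this by proving the sharper statement that in the low range ($y_e n/2 < x_e \le y_e n$) either no player can leave $e$ at all ($n^-=0$), or both influx and outflux have expectation $\Omega(\lambda x_e^2/(dn))=\Omega(n)$ with influx at least $1.5$ times the chosen threshold $T$ and outflux at most $T$, so by two Chernoff bounds the load does not decrease \emph{at all} except with probability $2^{-\Omega(n)}$; in the high range ($x_e>y_e n$) it only needs that the load cannot halve in one round since at most a $\lambda/d$ fraction leaves in expectation. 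Your claim that the drift is positive all the way up to ``the equilibrium load $x_e^*=\Theta(n)$'' is also stronger than what is true or needed --- the paper only needs, and only proves, nonnegative drift far below equilibrium, where the latency is at most $\opt_{\L}/4^d$.
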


\begin{proof}
  Let $d$ denote an upper bound on the elasticity of the functions in
  $\L$, and let $\opt_{\L}=\min_{y} \{\lavg(y)\}$ where the minimum is
  taken over all $y\in\{y'\in\mathbb{R}_{\ge 0}^{m}\mid \sum_e
  y'_e=1\}$. In other words, $\opt_{\L}$ corresponds to the minimum
  average latency achievable in a fractional solution. For any
  $e\in[m]$, by continuity and monotonicity, there exists an $y_e>0$
  such that $\lat_e(y_e) < \opt_{\L}/4^d$ and $y_e < 1/m$.
  
Consider the congestion game with $n$ players and fix an arbitrary
edge $e\in[m]$. In the following, we upper bound the probability that
the congestion on edge $e$ falls below $n\,y_e/2$. First, consider the
random initialization in which each resource receives an expected
number of $n/m$ agents.  The probability that $x_e < n\,y_e/2 \le
n/(2\,m)$ is at most $2^{-\Omega(n\,y_e)}$. Now, consider any assignment $x$
with $x_j>n\,y_j/2$ for all $e\in[m]$. There are two cases.

\begin{description}
  \item[Case 1: $x_e > y_e\, n$.] Since in expectation, our policy
  removes at most a $\lambda/d$ fraction of the agents from edge $e$, the
  expected load in the subsequent round is at least $(1-\lambda/d)\,x_e$.
  Since for sufficiently small $\lambda$ it holds that $1-\lambda/d \ge
  3/4$, we can apply a Chernoff bound (Fact~\ref{trm:chernoff}) in order to
  obtain an upper bound of $2^{-\Omega(x_e)}$ for the probability that the congestion on $e$ decreases
  to below $x_e/2 \ge y_e\,n/2$.
\item[Case 2: $ y_e\, n / 2 < x_e \le y_e\, n$.] Hence,
  $\lat_e^n(x_e)\le \opt_\L/4^d$. In the following, let $n^-$ denote
  the number of agents on edges $r$ with
  $\lat_r^n(x_r+1)<\lat_e^n(x_e)$, and let $n^+$ denote the number of
  players utilizing edges with latency above $\opt_\L$. There are two
  subcases:
    \begin{description}
    \item[Case 2a: $n^-=0$.] Then, the probability that an agent leaves
      edge $e$ is $0$.
    \item[Case 2b: $n^- \ge 1$.] We first show that $n^+\ge
      4\,\max\{n^-,x_e\}$. For the sake of contradiction, assume that
      $n^+<4\,n^-$.  Now, consider an assignment where all of these
      players are shifted to edges $r$ with latency
      $\lat_r^n(x_r)<\lat_e^n(x_e)\le \opt_\L/4^d$, where edge $r$
      receives $n^+ \cdot x_r/n^-$ (fractional) players. In this
      assignment, the congestion on all edges is increased by no more
      than a factor of $n^+/n^- < 4$. Hence, due to the limited
      elasticity, this increases the latency by strictly less than a
      factor of $4^d$. Then, all edges have a latency of less than
      $\opt_\L/4\cdot 4=\opt_L$ and some have latency strictly less
      than $\opt_L$, a contradiction. The same argument also holds if
      we consider only resource $e$ rather than all resources $r$
      considered above. Hence, also $n^+\ge 4\,x_e$.
      
      Now, consider the number of players leaving edge $e$. Clearly,
      \[ 
      \Ex{\Delta X_e^-} 
      \le x_e\cdot \frac{\lambda}{d}\sum_{r:\lat^n_r(x_r+1)<\lat^n_e(x_e)}\frac{x_r}{n}
      = x_e\cdot \frac{\lambda\,n^-}{d\,n}
      \enspace.
      \]
      All players with current latency at least $\opt_{\L}$ can
      migrate to resource $e$ since the anticipated latency gain is
      larger than $\nu$.  Hence, the number of players migrating
      towards $e$, is at least
      \begin{eqnarray*}
        \Ex{\Delta X_e^+} 
        & \ge &
        \sum_{r:\lat^n_r(x_r)\ge \opt_\L} x_r \cdot
        \frac{\lambda\,x_e\cdot(\lat^n_r(x_r)-\lat^n_e(x_e+1))}{n\,d\,\lat^n_r(x_r)} \\
        & \ge &
        \frac{\lambda \, x_e}{n \, d} \cdot \sum_{r:\lat^n_r(x_r) \ge
        \opt_\L} x_r\cdot\frac{\lat^n_r(x_r)-2^d \cdot \lat^n_e(x_e)}{\lat^n_r(x_r)}\\
        & \ge & 
        \frac{\lambda \, x_e}{n \, d} \cdot (1-\frac{1}{2^d}) \cdot n^+ \\
        & \ge & 2 \cdot x_e\cdot \frac{\lambda}{d\,n}\max\{n^-,x_e\}
        \enspace.
      \end{eqnarray*}
      The third inequality holds since $\lat^n_r\ge\opt_{\L}$ and
      $\lat^n_e\le\opt_\L/4^d$ and the last inequality holds since
      $d\ge 1$. For any $T\ge 0$ it holds that
      \begin{eqnarray*}
        \Pr{\Delta X_e \ge 0}
        &\ge&
        \Pr{(\Delta X_e^+ \ge T) \wedge (\Delta X_e^- \le T)}\\
        &\ge& \left(1-\Pr{\Delta X_e^+ < T}\right) \cdot \left(1-\Pr{\Delta X_e^- > T}\right)
        \enspace.
      \end{eqnarray*}
      Due to our lower bounds on $\Ex{\Delta X_e^+}$ and $\Ex{\Delta
      X_e^-}$ we can apply a Chernoff bound (Fact~\ref{trm:chernoff}) on
      these probabilities. We set
      $T=1.5\,\lambda\,\max\{x_e,n^-\}\,x_e/(d\,n)$ which is an upper
      bound on $\Ex{\Delta X_e^-}$ and a lower bound on $\Ex{\Delta
      X_e^+}$, so
      \begin{eqnarray*}
        \Pr{\Delta X_e^+ < T}
        &\le& 2^{-\Omega(T)}
        \le   2^{-\Omega(\lambda\,x_e^2/(d\,n))}
        \quad\text{and}\\
        \Pr{\Delta X_e^- > T} 
        &\le& 2^{-\Omega(T)} 
        \le   2^{-\Omega(\lambda\,x_e^2/(d\,n))}\enspace.
      \end{eqnarray*}

      Altogether,
      \begin{eqnarray}
        \nonumber
        \Pr{\Delta X_e \ge 0} &\ge& 
        \left(1 - 2^{-\Omega\left(\frac{\lambda\,x_e^2}{d\,n}\right)}\right)
        \cdot \left(1 - 2^{-\Omega\left(\frac{\lambda\,x_e^2}{d\,n}\right)}\right)\\
        &=& 1 - 2^{-\Omega\left(\frac{\lambda\,x_e^2}{d\,n}\right)} \enspace.
        \nonumber
      \end{eqnarray}
      Finally, since $x_e\ge n\,y_e/2$, $\Pr{\Delta X_e < 0} \le
      2^{-\Omega(\lambda\,n\,y_e^2/d)}=2^{-\Omega(x_e)}$. 
    \end{description}
\end{description}
In all cases, the probability that the edge becomes unused is bounded by
$2^{-\Omega(x_e)}=2^{-\Omega(n)}$. Hence, the same holds also for $m=\poly(n)$
edges and $\poly(n)$ rounds.
\end{proof} 

The proof does not only show that edges do not become empty with high
probability, but also that the congestion does not fall below any
constant congestion value. In particular, for the constant $d$ this
implies that with high probability the dynamics never reach case 2b of
the proof of Lemma~\ref{trm:virtual_potential}. This is the only place
where our analysis relies on the parameter $\nu$. Hence, for a large
number of players we can remove it from the protocol and the dynamics
converge to an exact Nash equilibrium.

\subsection{The Price of Imitation}

In the preceding section we have seen that it is unlikely that
resources become unused when the granularity of an agent decreases. If
the instance, i.\,e., the latency functions and the number of users,
is fixed, it is an interesting question, how much the performance can
suffer from the fact that the \IP is not innovative. We measure this
degradation of performance by introducing the \emph{Price of
  Imitation} which is defined as the ratio between the expected social
cost of the state to which the \IP converges, denoted $I_\Gamma$, and
the optimum social cost. The expectation is taken over the random
choices of the \IP, including random initialization.

We answer this question here for the case of linear latency functions of the
form $\lat_e(x)=a_e\,x$. Then, $d=1$ is an upper bound on the elasticity and
$\nu=a_{\max}=\max_{e\in E}\{a_e\}$. Choosing the average latency
$SC(x)=\sum_{e\in E} (x_e/n)\cdot\lat_e(x_e)$ as the social cost measure, we
show that the Price of Imitation is bounded by a constant.  It is, however,
obvious that the same also holds if we consider the makespan, i.\,e., the
maximum latency, as social cost function.

The performance of the dynamics can be artificially degraded by introducing an
extremely slow edge. Thus, $a_{\max}$ can be chosen extremely large such that
any state is imitation-stable. However, such a resource can be removed from
the instance without harming the optimal solution at all since it would not be
used anyhow. We will call such resources \emph{useless} and make this notion
precise below.

Let us first define some quantities used in the proof. For a set of
resources $M$, let $A_M= \sum_{e\in M} \frac{1}{a_e}$ and let
$A_\Gamma=A_{[m]}$. For $M\subseteq [m]$ let $\Gamma\setminus M$ denote
the instance obtained from $\Gamma$ by removing all resources in $M$. 
In the proof, we do not compare the outcome of the \IP to the optimum
solution, but rather to a lower bound, namely the optimal fractional
solution. The optimal fractional solution $\tilde x_e$ can be computed
as $\tilde x_e=n / (A_\Gamma \, a_e)$. For this solution, the latency
of all resources is $a_e\cdot \tilde x_e = n/A_\Gamma$.  A resource is
\emph{useless} if $\tilde x_e<1$. In the following, we assume that
there are no useless resources. Then, we can show that the social cost
at an imitation-stable state in which all resources are used, does not
differ by more than a small constant from the optimal social cost
(Lemma~\ref{trm:sc_bound_emitation_stable}) and that the Price of
Imitation is small. In fact, whereas $\tilde x_e\ge 1$ is required for
Lemma~\ref{trm:sc_bound_emitation_stable}, we here need a slightly
stronger assumption, namely that $x_e=\Omega(\log n)$.

\begin{theorem}
  Assume that for the optimal fractional solution, $\tilde x_e =
  \Omega(\log n)$ large enough. The price of imitation is at most
  $(3+o(1))$. In particular, for $\delta>0$, and any $n\ge
  n_0(\delta)$ for a large enough value $n_0(\delta)$ (which is
  independent of the instance),
  \[ I_\Gamma \le (3+\delta)\cdot \frac{n}{A_\Gamma}\enspace. \]
  \label{trm:price_of_imitation}
\end{theorem}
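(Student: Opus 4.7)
The plan is to combine three ingredients: a structural bound on the social cost at any imitation-stable state in which all edges are used; a high-probability argument, via a strengthening of Theorem~\ref{trm:empty} to the logarithmic-load regime, that no edge empties during the run of the \IP; and a trivial worst-case estimate to absorb the rare event.

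I would carry out the structural step first, since it is the combinatorial core. For singleton games with linear latencies $\ell_e(x) = a_e x$, $d=1$ and $\nu=a_{\max}$, so the imitation-stability condition reads, for every pair of used edges $e \ne e'$,
\[
a_e x_e \;\le\; a_{e'}(x_{e'}+1) + \nu \;\le\; a_{e'} x_{e'} + 2\,a_{\max}.
\]
Consequently all per-edge latencies lie in an interval of width at most $2\,a_{\max}$. Writing $L_{\max} = \max_e a_e x_e$, the lower bound $a_e x_e \ge L_{\max} - 2\,a_{\max}$ (valid for every $e$ since all edges are used) together with $\sum_e x_e = n$ gives $L_{\max} \le n/A_\Gamma + 2\,a_{\max}$. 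Since $SC(x)=(1/n)\sum_e x_e \cdot a_e x_e \le L_{\max}$ and the no-useless-resources assumption gives $\tilde x_{e^*} = n/(A_\Gamma a_{\max}) \ge 1$, that is $a_{\max} \le n/A_\Gamma$, I obtain $SC(x) \le 3\,n/A_\Gamma$ at every imitation-stable state with all edges used.

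For the probabilistic step, I would revisit the proof of Theorem~\ref{trm:empty} under the weaker hypothesis $\tilde x_e = \Omega(\log n)$ with a sufficiently large implicit constant. Cases~1 and 2a of that proof give a per-round probability of at most $2^{-\Omega(x_e)}$ for $x_e$ to drop by a constant factor, which is $n^{-\omega(1)}$ once $\tilde x_e$ is logarithmic with a large enough constant. A union bound over the pseudopolynomially many rounds of Theorem~\ref{trm:convergence} then guarantees that, except with probability $n^{-c}$ for any desired constant $c$, the \IP converges to an imitation-stable state with every $x_e \ge 1$, so that the structural bound applies. On the complementary event, the trivial estimate $SC \le a_{\max} n \le n^2/A_\Gamma$ together with the failure probability contributes at most $\delta\, n/A_\Gamma$ to $\Ex{I_\Gamma}$ for $n \ge n_0(\delta)$, yielding the stated bound $(3+\delta)\,n/A_\Gamma$.

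The principal technical obstacle will be extending Theorem~\ref{trm:empty} to the logarithmic regime. In Case~2b of its proof the tail is only $2^{-\Omega(\lambda\, n\, y_e^2/d)}$, which is $2^{-\Omega(x_e)}$ when $y_e$ is a positive constant (as in the original statement) but degrades to $2^{-\Omega(x_e^2/n)}$ when $y_e = \Theta(\log n/n)$, and thus is not small enough on its own when $x_e = \Theta(\log n)$. To handle this I would argue that Case~2b applies only when $x_e$ sits in a narrow sub-critical window $(y_e n/2, y_e n]$, for which the drift computed in the original proof is strictly positive, so that a drift/coupling argument confines the total time spent in this window to $O(1)$ expected rounds; summing the per-round failure probabilities over this short dwell time then keeps the aggregate failure probability at $n^{-\omega(1)}$, as required by the plan above.
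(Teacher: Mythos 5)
Your structural step is sound and essentially reproves Lemma~\ref{trm:sc_bound_emitation_stable} by a slightly different (and arguably cleaner) route: instead of arguing that all used latencies lie in a window of width $2a_{\max}$ and summing, the paper exhibits a resource $f$ with $x_f<\tilde x_f$ and derives a contradiction; both give $SC(x)\le 3n/A_\Gamma$ at a stable state in which \emph{all} edges are used. The genuine gap is in your probabilistic step, and it sits exactly where you flag it. Your plan needs the global statement ``with probability $1-n^{-c}$ no edge ever empties before convergence,'' and your patch for the sub-critical window does not deliver it. First, the quantity bounded in Case~2b is $\Pr{\Delta X_e<0}$, and in the logarithmic regime the bound $2^{-\Omega(x_e^2/n)}=2^{-\Omega(\log^2 n/n)}$ is $1-o(1)$, i.e.\ vacuous; summing a per-round ``failure'' probability that is close to $1$ over even $\Oh{1}$ rounds yields nothing. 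Second, the claim that the total time spent in the window $(y_en/2,\,y_en]$ is $\Oh{1}$ in expectation is unsupported: positive drift makes each excursion short, but over a polynomial horizon the walk can re-enter the window polynomially often. Third, and most importantly, once the invariant $x_e>ny_e/2$ is lost the congestion can continue to drift downward, and near $x_e=\Oh{1}$ the edge empties with constant per-round probability; what you actually need is a hitting-probability bound (a gambler's-ruin/supermartingale argument showing that a positively drifting walk started at height $\Omega(\log n)$ reaches $0$ with probability $2^{-\Omega(\log n)}$), not a dwell-time bound. Even then the resulting failure probability is $n^{-\Theta(C)}$ with $C$ the constant in $\Omega(\log n)$ --- not $n^{-\omega(1)}$ as you claim --- so the constant must be tuned against the union bound over rounds and edges and against the worst-case cost $n^2/A_\Gamma$. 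Finally, your ``bad event'' must also include ``convergence has not occurred by the union-bound horizon''; Lemma~\ref{trm:convergence_time_emitation_stable} only bounds the expected convergence time, Markov gives a polynomially small tail, and you need to check that this tail times the worst-case cost is $o(n/A_\Gamma)$, which your sketch does not do.

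It is worth noting that the paper's proof is architected precisely to avoid the statement you are trying to prove. Rather than showing that no edge ever empties, it only needs the \emph{per-round} emptying probability of Lemma~\ref{trm:multiple_failures}, and it absorbs the event that a set $M$ of edges does empty by an induction on $m$: the process restarts on $\Gamma\setminus M$, whose price of imitation is controlled by the induction hypothesis, giving the recurrence
\[
I_\Gamma\Bigl(1-\Oh{\tfrac1n}\Bigr)\;\le\;3\,\frac{n}{A_\Gamma}\;+\;\sum_{M\subset[m]}\;\prod_{e\in M}2^{-\Omega\left(\frac{n}{A_\Gamma a_e}\right)}\cdot I_{\Gamma\setminus M}\enspace,
\]
in which the slow-convergence event also disappears into the $\Oh{1/n}\cdot I_\Gamma$ term. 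This graceful degradation is what lets the hypothesis $\tilde x_e=\Omega(\log n)$ suffice. If you want to keep your all-or-nothing structure, you must either supply the missing hitting-probability argument for congestions between $\Oh{1}$ and $ny_e/2$, or adopt the paper's recursive bookkeeping.
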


We start by proving two lemmas.

\begin{lemma}
  Let $x$ be a state in which no agent can gain more than $a_{\max}$. Then,
  \[ 
  \frac{n}{A_\Gamma} 
  \le SC(x) \le
  3\frac{n}{A_\Gamma}
  \enspace.
  \]
  \label{trm:sc_bound_emitation_stable}
\end{lemma}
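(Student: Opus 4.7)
The plan is to prove the two inequalities separately. Write $L_e = a_e x_e$ for the latency on resource $e$ and $L^* = n/A_\Gamma$ for the optimal fractional latency; note that $SC(x) = \sum_e (x_e/n)\,L_e$.

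For the lower bound, I will apply Cauchy--Schwarz to the decomposition $n = \sum_e x_e = \sum_e \sqrt{a_e}\,x_e \cdot 1/\sqrt{a_e}$, obtaining
\[ n^2 \;\le\; \Big(\sum_e a_e x_e^2\Big)\Big(\sum_e 1/a_e\Big) \;=\; n\cdot SC(x)\cdot A_\Gamma, \]
which rearranges to $SC(x) \ge n/A_\Gamma$. This direction does not even require imitation stability; it is just the fact that the fractional assignment $\tilde x_e = n/(A_\Gamma a_e)$ minimizes the weighted sum.

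For the upper bound, the key observation is that the hypothesis pins the latencies of all used resources into a narrow window of width $2a_{\max}$. Indeed, for two resources $e,e'$ with $x_e \ge 1$, an agent on $e$ considering a switch to $e'$ would gain $L_e - a_{e'}(x_{e'}+1)$, which by hypothesis is at most $a_{\max}$; this rearranges to $L_e \le L_{e'} + a_{e'} + a_{\max} \le L_{e'} + 2a_{\max}$. Setting $L_{\min} = \min_{e:x_e\ge 1} L_e$ and using the identity $\sum_e L_e/a_e = \sum_e x_e = n$ (under the standing assumption that all non-useless resources are used), I get $L_{\min}\cdot A_\Gamma \le n$, hence $L_{\min} \le L^*$ and therefore $L_e \le L^* + 2a_{\max}$ for every used resource. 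This immediately yields $SC(x) \le L^* + 2a_{\max}$.

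The final step is to absorb $2a_{\max}$ into $L^*$ via the no-useless-resources assumption: the fractional optimum places $\tilde x_{e^\dagger} = n/(A_\Gamma a_{\max}) \ge 1$ agents on the slowest resource $e^\dagger$, so $a_{\max} \le n/A_\Gamma = L^*$, giving the desired $SC(x) \le 3L^*$. The main subtlety I anticipate is the treatment of empty resources: since imitation cannot repopulate them, imitation stability alone does not exclude $x_e = 0$, so I must rely on the standing assumption (spelled out in the paragraph immediately preceding the lemma) that all non-useless resources are used in $x$. In the application to Theorem~\ref{trm:price_of_imitation}, the stronger hypothesis $\tilde x_e = \Omega(\log n)$ together with Theorem~\ref{trm:empty} ensures this holds with overwhelming probability.
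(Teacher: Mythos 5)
Your proof is correct and follows essentially the same route as the paper's: both bound the latency of every used resource by $3n/A_\Gamma$ by playing the near-equilibrium hypothesis against a resource at or below its fractional share and then absorbing $a_{\max}\le n/A_\Gamma$ via the no-useless-resources assumption (the paper argues by contradiction using a resource $f$ with $x_f<\tilde x_f$, you locate the cheap resource via the identity $\sum_e L_e/a_e=n$, and you make the lower bound explicit with Cauchy--Schwarz where the paper simply cites the fractional optimum). Your caveat about empty resources is legitimate but is resolved in the application by the phase structure in the proof of Theorem~\ref{trm:price_of_imitation}, which restarts on $\Gamma\setminus M$ whenever a set $M$ of resources empties, so the lemma is only invoked in states where every resource of the current instance is in use.
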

\begin{proof}
  The lower bound has been proven above since $n/A_\Gamma$ is the
  social cost of an optimal fractional solution. Also note that, since
  there are no useless resources, $\tilde x_e\ge 1$ and hence
  $n/A_\Gamma\ge a_{\max}$.

  For the upper bound, consider a state $x$ in which no agent can gain
  more than $a_{\max}$. For the sake of contradiction assume that
  there exists a resource $e\in[m]$ with $\lat_e(x_e)>3\,n/A_\Gamma$.
  Since $x\neq \tilde x$ there exists a resource $f\neq e$ with $x_f <
  \tilde x_f$. In particular, $\lat_f(x_f+1) < n/A_\Gamma + a_{\max}
  \le 2\,n/A_\Gamma \le \lat_e(x_e) - a_{\max}$. The last inequality
  holds due to our assumption on $\lat_e(x_e)$ and since
  $n/A_\Gamma\ge a_{\max}$. Hence, any agent on resource $e$ can
  improve by $a_{\max}$ by migrating to $f$, a contradiction.
\end{proof}


\begin{lemma}
  \label{trm:convergence_time_emitation_stable}
  The \IP converges towards an imitation-stable state in
  time $\Oh{n^4\,\log n}$.
\end{lemma}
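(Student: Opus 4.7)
The plan is to obtain Lemma~\ref{trm:convergence_time_emitation_stable} as an essentially direct corollary of the pseudopolynomial bound in Theorem~\ref{trm:convergence}, specialized to linear latency functions on singleton strategies. For $\ell_e(x) = a_e x$, the elasticity of every latency function equals $d = 1$ and the minimum additive step is $\nu = a_{\max}$. Because every strategy is a single edge carrying at most $n$ agents, the maximum path latency trivially satisfies $\ell_{\max} \le a_{\max}\cdot n$.

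The remaining quantity to bound is the initial potential. From
$\Phi(x) = \sum_e a_e\, x_e(x_e+1)/2 \le a_{\max}\, n(n+1)/2$
for any assignment, we obtain $\Phi(x_0) = \Oh{a_{\max}\,n^2}$, irrespective of the starting configuration. Substituting these values into the bound of Theorem~\ref{trm:convergence}, the expected convergence time is
\[
\Oh{\frac{d\cdot n\cdot \ell_{\max}\cdot \Phi(x_0)}{\nu^2}}
= \Oh{\frac{n\cdot a_{\max}\, n\cdot a_{\max}\, n^2}{a_{\max}^2}} = \Oh{n^4},
\]
which is in particular $\Oh{n^4\log n}$ as claimed.

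I do not expect any serious obstacle: once the parameters $d$, $\nu$, $\ell_{\max}$, and $\Phi(x_0)$ are collected, the lemma is a one-line substitution into Theorem~\ref{trm:convergence}, and the extra $\log n$ factor in the statement simply leaves some slack. If one preferred to produce a genuine logarithmic factor (rather than slack in a polynomial bound), a two-phase argument would work: first invoke Theorem~\ref{trm:convergence_bicriteria} with $\delta = 1/n$ and a constant $\epsilon$ to reach a $(\delta,\epsilon,\nu)$-equilibrium in $\Oh{n\,\log(\Phi(x_0)/\Phi^*)} = \Oh{n\log n}$ rounds, and then observe that from such a near-balanced state only $\Oh{n^3}$ further rounds are needed to eliminate the few remaining dissatisfied agents, using the per-round potential drop of at least $\Omega(a_{\max}/n)$ obtained from Lemma~\ref{trm:virtual_potential} whenever at least one pair $(P,Q)$ with $\ell_P - \ell_Q \ge \nu$ still exists.
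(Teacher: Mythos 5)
Your main argument is correct, and it takes a genuinely different (and in fact simpler) route than the paper. You specialize the additive pseudopolynomial bound of Theorem~\ref{trm:convergence} with $d=1$, $\nu=a_{\max}$, $\lat_{\max}\le a_{\max}n$ and $\Phi(x_0)\le a_{\max}n(n+1)/2$, obtaining $\Oh{n^4}$ outright. The paper instead converts the guaranteed improvement of $a_{\max}$ into a \emph{multiplicative} potential drop: since any non-stable state admits a move of gain $a_{\max}\ge \Phi/n^2$ realized with probability at least $\lambda/n^2$, one gets $\Ex{\Phi(x(t+1))}\le \Phi(x(t))\bigl(1-\lambda/(2n^4)\bigr)$ and then applies Lemma~\ref{trm:martingalelike_factor}, which yields $\Oh{n^4\log(\Phi(x_0)/\Phi^*)}=\Oh{n^4\log n}$ after using the no-useless-resources assumption to bound $\Phi(x_0)/\Phi^*\le n\,a_{\max}/a_{\min}\le n^2$. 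Your route buys a slightly sharper bound ($\Oh{n^4}$ versus $\Oh{n^4\log n}$) and does not need the no-useless-resources assumption for this particular lemma; the paper's route is the template that generalizes to the multiplicative-decrease analyses used elsewhere (e.g.\ Theorem~\ref{trm:convergence_bicriteria}). One caveat: your optional two-phase sketch at the end is not needed and, as stated, its accounting is doubtful --- the per-round potential drop from a single available $a_{\max}$-improvement is only $\Omega(\lambda\,a_{\max}/n^2)$, not $\Omega(a_{\max}/n)$, and it is not clear that the residual potential above the imitation-stable level after phase one is only $\Oh{n\,a_{\max}}$ --- so you should drop that remark and rely on the direct substitution, which is complete on its own.
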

\begin{proof}
  Consider a state $x(t)$ in which there is at least one agent who can
  make an improvement of $a_{\max}$. Since its current latency is at
  most $n\cdot a_{\max}$ and the probability to sample the correct
  resource is at least $1/n$, the probability to do so is at least
  $\lambda\cdot (1/n)\cdot (a_{\max}/(n\,a_{\max})) = \lambda/n^2$ and
  the virtual potential gain of such a step is $a_{\max} \ge
  \Phi/n^2$. Hence, the expected virtual potential gain in state
  $x(t)$ is at least $\lambda\,\Phi(x(t))/n^4$. Hence, by
  Lemma~\ref{trm:virtual_potential},
  \[ \Ex{\Phi(x(t+1))} \le \Phi(x(t)) \cdot\left(1 - \frac{\lambda}{2\,n^4}\right)\enspace.\]
  Note that $\Phi^* \ge n\,a_{\min}$ and $a_{\max} \le n\, a_{\min}$
  by the assumption that no resource is useless.  Also, $\Phi(x(0))
  \le n^2\,a_{\max}$. Now, the theorem is an application of
  Lemma~\ref{trm:martingalelike_factor} in the appendix.
\end{proof}

Based upon the proof of Theorem~\ref{trm:empty} we can now bound the
probability that a resource becomes empty for the case of linear
latency functions more specifically.

\begin{lemma}
  \label{trm:multiple_failures}
  The probability that all resources of the subset $M\subseteq[m]$
  become empty in one round simultaneously is bounded from above by
  \[ \prod_{e\in M} 2^{-\Omega\left(\frac{n}{A_\Gamma\,a_e}\right)}\enspace.\]
\end{lemma}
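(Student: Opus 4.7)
The plan is to exploit the two-level independence structure of the protocol. For a single resource $e\in M$, the event $x'_e=0$ is contained in the event that every one of the $x_e$ agents currently on $e$ decides to migrate. Each such agent migrates with total probability
\[
\sum_{f\neq e}\frac{x_f}{n}\,\mu_{ef}
\;\le\;
\frac{\lambda}{d}\sum_{f\neq e}\frac{x_f}{n}
\;\le\;
\frac{\lambda}{d},
\]
since $\mu_{ef}\le\lambda/d$ by definition. Because the random choices of distinct agents are mutually independent, the probability that all $x_e$ users of $e$ leave is at most $(\lambda/d)^{x_e}$, which for linear latencies ($d=1$) and a fixed $\lambda\in(0,1/2]$ is bounded by $2^{-\Omega(x_e)}$.

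Next I would lift this per-resource bound to the whole of $M$ using the fact that the sets of agents initially occupying distinct resources $e,e'\in M$ are disjoint. Hence the events ``all agents on $e$ migrate'' for $e\in M$ are driven by disjoint pools of independent random choices and are themselves mutually independent. Multiplying yields
\[
\Pr[\,\forall e\in M:\,x'_e=0\,]
\;\le\;
\prod_{e\in M}(\lambda/d)^{x_e}
\;=\;
\prod_{e\in M} 2^{-\Omega(x_e)}.
\]
A crucial point is that the necessary condition ``all of $e$'s users leave'' already suffices here; we deliberately avoid the stronger condition ``no external agent arrives at $e$'', which would couple the resources through agents sitting outside $M$ and destroy independence.

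Finally I would translate the bound from the current loads $x_e$ to the target quantities $\tilde x_e=n/(A_\Gamma a_e)$. This is exactly the invariant established (with failure probability $2^{-\Omega(n)}$) inside the proof of Theorem~\ref{trm:empty}: starting from the random initialization and running the \IP for polynomially many rounds, the load of every resource stays at least a constant fraction of $\tilde x_e$. On this high-probability event, each factor $2^{-\Omega(x_e)}$ is at most $2^{-\Omega(n/(A_\Gamma a_e))}$, and the product collapses to the bound stated in the lemma. The main obstacle is identifying the right necessary condition for emptying $e$ that decouples the resources in $M$ across agents; once this is done, the rest is a one-line independence calculation combined with the load-preservation guarantee already proved in Theorem~\ref{trm:empty}.
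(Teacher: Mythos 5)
Your proof is correct, and at the two decisive steps it takes a genuinely different (and cleaner) route than the paper. The paper obtains its per-resource bound by recycling the case analysis from the proof of Theorem~\ref{trm:empty}, i.e.\ Chernoff bounds on the net flow into and out of $e$ (which in its Case~2 only yields $2^{-\Omega(x_e^2/n)}$, formally weaker than the stated $2^{-\Omega(n/(A_\Gamma a_e))}$ unless $\tilde x_e=\Omega(n)$), and then multiplies the per-resource bounds by asserting, without justification, the correlation inequality $\Pr{E'\mid E}\le \Pr{E'}$. You instead observe that emptying $e$ is contained in the event that every one of its $x_e$ occupants migrates, which by per-agent independence has probability at most $(\lambda/d)^{x_e}=2^{-\Omega(x_e)}$ with no concentration inequality needed; and since these departure events for distinct $e\in M$ are measurable with respect to disjoint pools of agents, they are exactly mutually independent, so the product bound requires no correlation claim. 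Both arguments then lean on the same load invariant maintained in the proof of Theorem~\ref{trm:empty} (each $x_e$ stays above $n y_e/2$ with $y_e=1/(A_\Gamma a_e)$ for polynomially many rounds, up to an exponentially small failure probability) to turn $2^{-\Omega(x_e)}$ into $2^{-\Omega(n/(A_\Gamma a_e))}$; you state this dependence explicitly, which is the right thing to do since the lemma as written quantifies over ``one round'' without fixing the state. Your version is more elementary and makes the multiplication step rigorous; the only thing the paper's route buys is reuse of bounds it has already derived.
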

\begin{proof}
  Recall the bounds on the probability that a resource $e\in[m]$
  becomes empty in the proof of Theorem~\ref{trm:empty}. Since we now
  consider linear latency functions, we may explicitly compute the
  value of $y_e=1/(A_\Gamma\,a_e)$. Recall the two cases and the
  failure probability in the initialization:
  \begin{description}
  \item[Initialization: ] Here, the error probability was at most
    $2^{-\Omega(n\,y_e)}=2^{-\Omega\left(\frac{n}{A_\Gamma\,a_e}\right)}$.
  \item[Case 1: $x_e > y_e\,n$.] Here, the error probability was at
    most $2^{-\Omega(x_e)} = 2^{-\Omega\left(\frac{n}{A_\Gamma\,a_e}\right)}$.
  \item[Case 2: $y_e\, n / 2 < x_e \le y_e\, n$.] Here, the error
    probability was at most $2^{-\Omega(x_e^2/n)} =
    2^{-\Omega\left(\frac{n}{(A_\Gamma\,a_e)^2}\right)}$.
  \end{description}
  In all cases, the probability that resource $i$ becomes empty is at
  most $2^{-\Omega\left(\frac{n}{A_\Gamma\,a_e}\right)}$.

  Furthermore, consider resources $e$ and $e'$ and let $E$ and $E'$
  denote the events that $e$ and $e'$ become empty, respectively. It
  holds that, $\Pr{E'\mid E} \le \Pr{E'}$. Therefore,
  $\Pr{E\cap E'} = \Pr{E}\cdot\Pr{E'\mid E} \le \Pr{E}\cdot\Pr{E'}$.
  Extending this argument to several resources yields the statement of
  the lemma.
\end{proof}

Using the above two lemmas, we can now prove the main theorem of this section.

\begin{proof}[Proof of Theorem~\ref{trm:price_of_imitation}]
  The proof is by induction on the number of resources $m$. Clearly,
  the statement holds for $m=1$, in which case there is only one
  assignment.  In the following we divide the sequence of state
  generated by the \IP into {\em phases} consisting of several rounds.
  The phase is terminated by one of the following events, whatever
  happens first:
  \begin{enumerate}
  \item A subset of resources $M$ becomes empty.   
  \item The \IP reaches an imitation-stable state.    
  \item The protocol enters round $\Theta(n^5\,\log n)$. 
  \end{enumerate}
  
If a phase ends because Event 1 occurs, we start a new phase for the
instance $\Gamma\setminus M$. If it ends because of Event 3, we start
a new phase for the original instance.

The probability for Event 1 is bounded by
Lemma~\ref{trm:multiple_failures}. Note that the probability is also
bounded for up to $\poly(n)$ many rounds. If a phase ends with Event 2
we have $I_\Gamma\le 3\,\frac{n}{A_\Gamma}$
(Lemma~\ref{trm:sc_bound_emitation_stable}). We bound the probability
of this event by $1$, which is trivially true.  Event 3 happens with a
probability at most $\Oh{1/n}$. This can be shown using
Lemma~\ref{trm:convergence_time_emitation_stable} and Markov's
inequality. Note that the expected social cost is still at most
$I_\Gamma$. Summing up over all three events, we obtain the following
recurrence:
  \[ I_\Gamma
  \le
  \sum_{M\subset[m]} \prod_{e\in M} 2^{-\Omega\left(\frac{n}{A_\Gamma\,a_e}\right)}\cdot I_{\Gamma\setminus M} +
  3\cdot \frac{n}{A_\Gamma} + 
  \Oh{\frac1n}\cdot I_\Gamma
  \]
  implying
  \[ I_\Gamma \cdot \left(1-\Oh{\frac1n}\right)
  \le
  3\cdot \frac{n}{A_\Gamma} + 
  \sum_{M\subset[m]} \prod_{e\in M} 2^{-\Omega\left(\frac{n}{A_\Gamma\,a_e}\right)}\cdot I_{\Gamma\setminus M}
  \enspace.
  \]  
  Substituting the induction hypothesis for $I_{\Gamma\setminus M}$,
  and introducing a constant $c$ for the constant
  in the $\Omega()$,
  \begin{eqnarray*}
    I_\Gamma \cdot \left(1-\Oh{\frac1n}\right)
    &\le&
    3\cdot \frac{n}{A_\Gamma} + 
    \sum_{M\subset[m]} \prod_{e\in M} 2^{-\frac{c\,n}{A_\Gamma\,a_e}}\cdot 4\,\frac{n}{A_{\Gamma\setminus M}}\\
    &=&
    3\cdot \frac{n}{A_\Gamma} + 
    4\,\frac{n}{A_\Gamma}\sum_{M\subset[m]} 2^{-\frac{c\,n\,A_M}{A_\Gamma}}\cdot \frac{A_\Gamma}{A_{\Gamma\setminus M}}
    \enspace.
  \end{eqnarray*}
  Now, by our assumption that for all $e\in M$, $\tilde
  x_e=n/(A_\Gamma\cdot a_e)\ge\Omega(\log n)$, we know that for all
  $e$, $1/a_e\ge c'\,A_\Gamma\cdot\log n/n$ for a constant $c'$ which
  we may choose appropriately. In particular, $A_{M}\ge |M|
  c'\,A_\Gamma\cdot\log n/n$ and $A_{\Gamma\setminus M}\ge
  c'\,A_\Gamma\cdot\log n/n$. Altogether,
  \begin{eqnarray*}
    I_\Gamma \cdot \left(1-\Oh{\frac1n}\right)
    &\le&
    \frac{n}{A_\Gamma}\left(3 +
    4\,\sum_{M\subset[m]} 2^{-c\,c'\,|M|\log n}\cdot \frac{n}{c'\,\log n}\right)\\
    &=&
    \frac{n}{A_\Gamma}\left(3 +
    4\sum_{k=1}^{m-1}{m\choose k} 2^{-c\,c'\,k\log n}\cdot \frac{n}{c'\,\log n}\right)\\
    &\le&
    \frac{n}{A_\Gamma}\left(3 +
    4\sum_{k=1}^{m-1}n^k\cdot 2^{-c\,c'\,k\log n}\cdot \frac{n}{c'\,\log n}\right)\\
    &\le&
    \frac{n}{A_\Gamma}\left(3 +
    4\sum_{k=1}^{m-1}2^{-(c\,c'-1)\,k\log n}\cdot \frac{n}{c'\,\log n}\right)\\
    &\le&
    \frac{n}{A_\Gamma}\left(3 +
    4\sum_{k=1}^{m-1} \frac{n^{-(c\,c'-1)\,k+1}}{c'\,\log n}\right)\\
    &\le&
    (3 + o(1))\,\frac{n}{A_\Gamma}
    \enspace,
  \end{eqnarray*}
  since the last sum is bounded by $o(n)$. This implies our claim.
\end{proof}



  \section{Exploring New Strategies}
\label{sec:exploration}

In Section~\ref{Sec:ExactConvergence}, we have seen that, in the long
run, the dynamics resulting from the \IP converges to an
imitation-stable state in pseudopolynomial time. The \IP and the
concept of an imitation-stable state have the drawback that the
dynamics can stabilize in a quite disadvantageous situation, e.g. when
all players play the same expensive strategy. This is due to the fact
that the strategy space is essentially restricted to the current
strategy choices of the agents. Strategies that might be attractive
and offer a large latency gain are ``lost'' once no player uses them
anymore.

A stronger result would be convergence towards a Nash equilibrium. In
the literature, several other protocols are discussed. For all of the
protocols we are aware of, the probability to migrate from one strategy
to another depends in some continuous, non-decreasing fashion on the
anticipated latency gain, and it becomes zero for zero gain. Hence, in
a setting with arbitrary latency functions which we consider here,
there always exist simple instances and states that are not at
equilibrium and in which only one improvement step is possible which
has an arbitrarily small latency gain. Hence, it takes
pseudopolynomially long, until an exact Nash equilibrium is reached.
Still, it might be desirable to design a protocol which reaches a Nash
equilibrium in the long run. There are several ways to achieve this
goal. We will discuss three of them here.

Theorem~\ref{trm:empty} states the following for a particular class of
singleton congestion games. With an increasing number of players it
becomes increasingly unlikely that useful strategies are lost. This
allows to omit the parameter $\nu$ from the protocol. If no strategies
are lost for a long period of time, the dynamics will converge towards
an exact Nash equilibrium.

Second, we may add an additional ``virtual agent'' to every strategy,
such that the probability to sample a strategy never becomes
zero. This has two implications on our analysis. On the one hand,
there is a certain base load on all resources, denoted by $x_e^0$. We
then need to have an upper bound on the elasticity of
$\lat_e(x-x_e^0)$ which may be larger than the elasticity of
$\lat_e(x)$ itself. Furthermore, we have to add $|\paths|$ virtual
agents, which leaves the analysis of the time of convergence unchanged
only if $n=\Omega(|\paths|)$.

As a third alternative, we can add an exploration component to the
protocol. With a probability of $1/2$, the agents can sample another
path uniformly at random rather than another agent. In this case,
however, the elasticity $d$ cannot be used as a damping factor anymore,
since the expected increase of congestion may be much larger than the
current load. Rather, we have to reduce the migration probability by a
factor $\min\left\{1,\frac{|\paths|\,\lat_{\min}}{\beta\,n}\right\}$
where $\beta$ is an upper bound on the maximum slope and $\ell_{\min} =
\min_{e\in E} \lat_e(1)$ is the minimum latency of an empty resource.

\begin{algorithm}[htbp]
  \caption{\XP, repeatedly executed by all players in parallel.}
  \label{alg:exploration}
  \begin{algorithmic}
    \STATE Let $P$ denote the path of the player in state $x$.
    \STATE Sample another path $Q \in \paths$ uniformly at random.
    \IF{$\lat_P(x) > \lat_Q(x+1_Q-1_P)$}
      \STATE with probability \[ \mu_{PQ} = 
      \min\left\{1,\lambda \cdot \frac{|\paths|\,\lat_{\min}}{\beta\,n}
      \cdot \frac{\lat_P(x) - \lat_Q(x+1_Q-1_P)}{\lat_P(x)}\right\}\] migrate
      from path $P$ to bin $Q$.
    \ENDIF
  \end{algorithmic}
\end{algorithm}

\begin{lemma} 
  \label{trm:Exploration:virtual_potential}
  Let $x$ denote a state and let $\Delta x$ denote a random
  migration vector generated by the \XP. Then,
  \begin{eqnarray*}
    \Ex{ \Delta\Phi(x,\Delta x) } & \le &
    \frac{1}{2} \sum_{P,Q\in\paths} \Ex{V_{PQ}(x,\Delta x)} \enspace.
  \end{eqnarray*}
\end{lemma}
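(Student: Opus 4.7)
The plan is to mirror the proof of Lemma~\ref{trm:virtual_potential}, substituting the slope bound $\beta$ for the elasticity bound $d$ and the uniform sampling probability $1/|\paths|$ for the imitation sampling probability $x_Q/n$. First, since Lemma~\ref{trm:errorterm} is proved for an arbitrary migration vector $\Delta x$ and makes no reference to the protocol, it applies verbatim to the vector produced by the \XP. Taking expectations gives
\[
\Ex{\Delta\Phi(x,\Delta x)} \,\le\, \Ex{\sum_{P,Q}V_{PQ}(x,\Delta x)} + \Ex{\sum_{e}F_e(x,\Delta x)},
\]
and because the guard of the \XP forces every migrating agent to move to a strictly cheaper path, each $V_{PQ}$ term is non-positive. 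Hence it suffices to prove $\Ex{\sum_e F_e} \le -\tfrac12\Ex{\sum_{P,Q}V_{PQ}}$.

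Next, I would reuse the per-agent accounting scheme from the proof of Lemma~\ref{trm:virtual_potential}: order migrating agents in ascending order of their migration probability $\mu_{PQ}$ (with arbitrary tie-breaking) and attribute to each agent $i$ with source $P$ and destination $Q$ the quantity $\Dl(\DX)$ for every $e\in(P\setminus Q)\cup(Q\setminus P)$, where $\DX$ is the load change on $e$ caused by agents ranked earlier than $i$. It then suffices to show that the expected sum of these per-agent contributions is at most $\tfrac12(\lat_P-\lat_Q^+)$, i.e., half the agent's virtual potential gain, and then sum over agents.

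The key quantitative change is as follows. Under the \XP, each earlier-ranked agent samples any given path $R$ with probability $1/|\paths|$ and migrates with probability at most $\mu_{PQ}$, so
\[
\EDX \,\le\, n\cdot\frac{|\{R\ni e\}|}{|\paths|}\cdot\mu_{PQ} \,\le\, \frac{\lambda\,|\{R\ni e\}|\,\ell_{\min}}{\beta}\cdot I_{PQ},
\]
where $I_{PQ}=(\lat_P-\lat_Q^+)/\lat_P$ as in the proof of Lemma~\ref{trm:virtual_potential}. Since the slope is globally at most $\beta$, we have the linear bound $\Dl(\delta)\le\beta\,\delta$, which replaces the elasticity estimate used there. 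Combining this with a Chernoff concentration argument on $\DX$ (replaying the split into the regimes $\EDX\ge 1/64$ and $\EDX<1/64$ from Cases~1 and~2 of Lemma~\ref{trm:virtual_potential}) yields $\Ex{\Dl(\DX)} = \Oh{\lambda\,|\{R\ni e\}|\,\ell_{\min}\,I_{PQ}}$. Using $\ell_{\min}\le\lat_Q^+$, this is at most $\Oh{\lambda\,|\{R\ni e\}|\,(\lat_P-\lat_Q^+)}$, the \XP analogue of Inequalities~(\ref{eqn:tilde_l_1})--(\ref{eqn:tilde_l_2}). Summing over $e\in(P\setminus Q)\cup(Q\setminus P)$ and choosing $\lambda$ sufficiently small (as in the $\lambda<1/512$ condition in Lemma~\ref{trm:virtual_potential}) gives the desired per-agent bound $\tfrac12(\lat_P-\lat_Q^+)$.

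The main obstacle will be cleanly absorbing the combinatorial factor $|\{R\ni e\}|$ summed over the edges of $P$ and $Q$ so that the resulting bound remains independent of $|\paths|$; in the \IP this factor was implicitly absorbed by the ratio $x_e/n$ together with the $\lat_e^+/\lat_Q^+$ and $\nu_e/\nu_Q$ apportioning, whereas for the \XP the analogous absorption must be provided by the damping factor $|\paths|\,\ell_{\min}/(\beta n)$ baked into $\mu_{PQ}$, which is precisely why that factor was chosen. Some additional care is needed in replaying the Chernoff step, because unlike in the \IP the expected load change on $e$ is no longer naturally scaled by $x_e$, so the two concentration regimes must be re-derived from the slope bound rather than the elasticity bound.
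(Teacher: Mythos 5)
Your overall strategy matches the paper's: apply Lemma~\ref{trm:errorterm} to the \XP-generated migration vector, reuse the per-agent error accounting and the ordering by migration probability from the proof of Lemma~\ref{trm:virtual_potential}, and re-derive the bound on $\Ex{\Dl(\DX)}$ with the slope bound $\beta$ and uniform sampling in place of the elasticity bound and proportional sampling. The one place you genuinely diverge is the concentration step, and there you make the argument harder than the paper does: the entire Chernoff/two-regime apparatus of Lemma~\ref{trm:virtual_potential} exists only because the elasticity estimate bounds $\Dl(\delta)$ by a convex exponential in $\delta$, so that $\Ex{\Dl(\DX)}$ cannot be controlled by $\EDX$ alone. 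Once you have the linear bound $\Dl(\delta)\le\beta\,\delta$, linearity of expectation gives $\Ex{\Dl(\DX)}\le\beta\,\EDX$ immediately; substituting $\EDX\le n\cdot(1/|\paths|)\cdot\mu_{PQ}$ with the \XP's migration probability then yields $\Ex{\Dl(\DX)}\le\lambda\,\lat_{\min}\,I_{PQ}\le\lambda\cdot(\lat_e^+/\lat_Q^+)\cdot(\lat_P-\lat_Q^+)$, which is Equation~(\ref{eqn:tilde_l_1}) for $\lambda$ small enough. No case distinction on $\EDX$ and no tail bound appear in the paper's proof; this is precisely why it calls the adaptation ``quite simple.'' As for the combinatorial factor $|\{R\ni e\}|$ that you single out as the main obstacle: you are right that the damping factor $|\paths|\,\lat_{\min}/(\beta\,n)$ only cancels a single $1/|\paths|$ sampling probability, not the multiplicity of paths through $e$; but the paper does not engage with this either --- it bounds the probability that an earlier-ranked agent contributes to $\DX$ by $1/|\paths|$ times the migration probability, implicitly treating that multiplicity as one. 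So on this point your proposal is no less complete than the published argument, and the resolution you are hoping to find there is absent.
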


\begin{proof}
  Recall that Lemma~\ref{trm:errorterm} states the following for every
  state $x$ and every migration vector $\Delta x$
  \[
  \Delta \Phi(x, \Delta x)
  \, \le \,
  \sum_{P,Q\in\paths} V_{PQ}(x, \Delta x ) + \sum_{e\in E} F_e(x, \Delta x)
  \enspace.
  \]
  Now, in order to proof Lemma~\ref{trm:Exploration:virtual_potential}, we
  apply the same approach as in the proof of
  Lemma~\ref{trm:virtual_potential}. Hence, it remains to adapt the upper
  bound on $\Ex{\Dl(\DX)}$ to the \XP. Note that this is quite simple, since
  due to the linearity of expectation,
\begin{eqnarray*}
  \Ex{\Dl(\DX)} &\le& \beta\,\Ex{\DX} \\
  &\le& \beta\,n \cdot \lambda \cdot \frac{\lat_{\min}\,|\paths|}{\beta\,n}
  \cdot \frac{1}{|\paths|} \cdot \frac{\lat_{P} - \lat_Q^+}{\lat_P}\\ &\le& \lambda
   \cdot \frac{\lat^+_{e}}{\lat^+_Q}\cdot (\lat_{P} - \lat_Q^+)\enspace,
  \end{eqnarray*}
  where we have substituted the migration probability of the protocol
  and the fact that there are at most $n$ agents that may sample a
  path containing $e$. This proves Equation~(\ref{eqn:tilde_l_1}) if
  $\lambda$ is chosen small enough. With opposite signs, the same
  argument holds if $e\in P$.
\end{proof}

Since we have omitted the parameter $\nu$ from the protocol, we now
need a lower bound on the minimum improvement that is possible when
the system is not yet at an imitation-stable state in order to give an
upper bound on the convergence time. Formally, let
\[ \kappa = \min_{x}\min_{\begin{array}{c}P,Q\in\paths\\\lat_p(x) >
 \lat_Q(x+1_Q-1_P)\end{array}} \{\lat_P(x) - \lat_Q(x+1_Q-1_P)\}\enspace. 
\]

\begin{theorem}
  Consider a symmetric network congestion game in which all players
  use the \XP. Let $x$ denote the initial state of the dynamics. Then the
  dynamics converge to a Nash equilibrium in expected time
  \[ \Oh{\frac{\Phi(x)\,\beta\,n\,\lat_{\max}}{\lat_{\min}\,\kappa^2}}\enspace. \]
\end{theorem}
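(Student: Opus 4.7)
The plan is to mirror the proof of Theorem~\ref{trm:convergence}, using Lemma~\ref{trm:Exploration:virtual_potential} in place of Lemma~\ref{trm:virtual_potential} and quantifying the smallest possible expected virtual potential drop whenever the current state is not a Nash equilibrium. Since the \XP no longer relies on the threshold $\nu$, the relevant lower bound on any single improving deviation is now $\kappa$ by definition.

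First, I would fix an arbitrary state $x$ that is not a Nash equilibrium. By definition, some player, say on path $P$, can strictly improve by moving to some path $Q$, and by the definition of $\kappa$ this player's true latency gain $\lat_P(x)-\lat_Q(x+1_Q-1_P)$ is at least $\kappa$. Under the \XP, this player samples exactly $Q$ with probability $1/|\paths|$, and conditional on that the migration probability is
\[
\mu_{PQ} \;\ge\; \lambda\cdot\frac{|\paths|\,\lat_{\min}}{\beta\,n}\cdot\frac{\kappa}{\lat_{\max}},
\]
provided the $\min\{1,\cdot\}$ in the protocol does not clip (if it does, $\mu_{PQ}=1$, which only makes things better). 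Multiplying the sampling probability, the migration probability, and the (negative) per-agent virtual contribution $\lat_Q(x+1_Q-1_P)-\lat_P(x)\le-\kappa$, the single player alone contributes at most
\[
-\,\frac{1}{|\paths|}\cdot\lambda\cdot\frac{|\paths|\,\lat_{\min}}{\beta\,n}\cdot\frac{\kappa}{\lat_{\max}}\cdot\kappa
\;=\;
-\,\lambda\cdot\frac{\lat_{\min}\,\kappa^2}{\beta\,n\,\lat_{\max}}
\]
to $\Ex{\sum_{P,Q}V_{PQ}(x,\Delta x)}$. All other players' contributions are non-positive, so this is also a valid upper bound on the whole expected virtual potential gain.

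Second, I would invoke Lemma~\ref{trm:Exploration:virtual_potential} to pass from the virtual potential gain to the true expected potential change, losing only a factor of $1/2$:
\[
\Ex{\Delta\Phi(x,\Delta x)} \;\le\; -\,\frac{\lambda}{2}\cdot\frac{\lat_{\min}\,\kappa^2}{\beta\,n\,\lat_{\max}}.
\]
Finally, I would apply the appendix's martingale-style stopping lemma (Lemma~\ref{trm:martingalelike}, as used in the proof of Theorem~\ref{trm:convergence}) to conclude that the expected number of rounds until $\Phi$ has decreased from $\Phi(x)$ to its minimum $\Phi^*$ — at which point no improving deviation exists and the state is a Nash equilibrium — is at most
\[
\frac{\Phi(x)-\Phi^*}{\tfrac{\lambda}{2}\cdot\tfrac{\lat_{\min}\,\kappa^2}{\beta\,n\,\lat_{\max}}}
\;=\;
\Oh{\frac{\Phi(x)\,\beta\,n\,\lat_{\max}}{\lat_{\min}\,\kappa^2}},
\]
yielding the claim.

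The one subtlety — which I expect to be the only real obstacle — is the handling of the $\min\{1,\cdot\}$ clip in $\mu_{PQ}$ and the implicit assumption that the dynamics actually reach an \emph{exact} Nash equilibrium rather than merely an imitation-stable one. The clip is benign because whenever it activates, $\mu_{PQ}=1$ dominates the lower bound above. For the second point, the key observation is that under \XP every path is sampled with positive probability, so no strategy can be ``lost,'' and imitation-stability with respect to the full path set $\paths$ coincides with Nash. Everything else is a direct adaptation of the argument already worked out for Theorem~\ref{trm:convergence}.
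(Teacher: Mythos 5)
Your proposal is correct and follows essentially the same route as the paper: lower-bound the expected virtual potential gain of the single guaranteed improving move by $\frac{1}{|\paths|}\cdot\lambda\cdot\frac{|\paths|\,\lat_{\min}}{\beta\,n}\cdot\frac{\kappa}{\lat_{\max}}\cdot\kappa$, halve it via Lemma~\ref{trm:Exploration:virtual_potential}, and finish with the martingale stopping lemma. Your added remarks on the $\min\{1,\cdot\}$ clip and on why \XP-stability coincides with Nash are details the paper leaves implicit, but they do not change the argument.
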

\begin{proof}
  In every state which is not a Nash equilibrium there exists an agent
  currently utilizing path $P\in\paths$ and a path $Q\in\paths$ such
  that $\lat_Q \le \lat_P-\kappa$. Hence, the expected virtual
  potential gain is at least
  \[ \Ex{V_{PQ}}
  \le -\frac{1}{|\paths|}\cdot
  \frac{\lambda\,|\paths|\,\lat_{\min}}{\beta\,n}\cdot\frac{\kappa}{\lat_P}\cdot\kappa \le -\frac{\lambda\,\lat_{\min}}{\beta\,n}\cdot\frac{\kappa^2}{\lat_{\max}}
  \enspace,
  \]
  and the true potential gain is at least half of this.  Again,
  Lemma~\ref{trm:martingalelike} yields the expected time until the
  potential decreases from at most $\Phi$ to $\Phi^*\ge 0$.
\end{proof}

It is obvious that an analogue of Lemmas~\ref{trm:virtual_potential}
and~\ref{trm:Exploration:virtual_potential} also holds for any
protocol that is a combination of the \IP and the \XP, e.\,g., a
protocol in which in every round, every agent executes the one or the
other with probability one half. Then, in order to bound the value of
$\Ex{\Dl(\DX)}$, we must make a case differentiation based on whether
proportional or uniform sampling dominates the probability that other
agents migrate towards resource $e$. Such a protocol combines the
advantages of the \IP and the \XP: In the long run, it converges to a
Nash equilibrium, and reaches an approximate equilibrium as quickly as
stated by Theorem~\ref{trm:convergence_bicriteria} (up to a factor of
$2$).


  \section{Conclusion}

We have proposed and analyzed a natural protocol based on imitating
profitable strategies for distributed selfish agents in symmetric
congestion games.  If agents use our \IP, the resulting dynamics
converge rapidly to approximate equilibria, in which only a small
fraction of players have latency significantly above or below the
average. In addition, in finite time the dynamics converges to an
imitation-stable state, in which no player can improve its latency by
more than $\nu$ by imitating a different player.  The \IP and the
concept of an imitation-stable state have the drawback that dynamics
can stabilize in a quite disadvantegous situation, e.g. when all
players play the same expensive strategy. This is due to the fact that
the strategy space is essentially restricted to the current strategy
choices of the agents. Strategies that might be attractive and offer
large latency gain are ``lost'' once no player uses them anymore. For
singleton congestion games we showed that this event becomes unlikely
to occur as the number of players increases. Then, by removing
parameter $\nu$ from the protocol, the dynamics become likely to
converge to Nash equilibria. Another approach to avoid losing
strategies is to include exploration of the strategy space. Towards
this end, we can use an \XP, in which players sample from the strategy
space directly and then migrate with a certain probability. If every
player uses a suitably designed \XP (or any random combination of \XP
and \IP), then the dynamics are always guaranteed to converge to a Nash
equilibrium. However, acquiring information about possible strategies
and their benefits might be a complex and costly process in practice,
and hence such an action should be invoked only rarely. In addition,
exploration requires small migration probabilities, because the danger
of overshooting is more severe. Thus, on the downside, if the \XP is
used exclusively, this results in significantly larger convergence
times.


\bibliographystyle{plain}
\bibliography{strings,all_references,simon}

\clearpage
 
\begin{appendix}
  \section{Appendix}

\subsection{Useful Facts}

Throughout the technical part of this paper, we will apply the following two 
Chernoff bounds. \begin{fact}[Chernoff, see~\cite{Hagerup/Rueb:GuidedTour:90}]
  \label{trm:chernoff}
Let $X$ be a sum of Bernoulli variables. Then, $
  \Pr{X \ge k\cdot \Ex{X}}  
\le \expf{-\Ex{X}\,k\cdot(\ln k - 1)} $, and, for $k\ge 4 > \expf{4/3}$, $
  \Pr{X \ge k\cdot \Ex{X}}  
\le \expf{-\frac14\,\Ex{X}\,k\,\ln k} $. Equivalently, for $k\ge 4\,\Ex{X}$, $
  \Pr{X \ge k}
\le \expf{-\frac14\,k\,\ln (k/\Ex{X})} $.
\end{fact}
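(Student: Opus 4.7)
The plan is to apply the standard exponential moment method (Chernoff's original trick). Write $X = \sum_{i=1}^n X_i$ where each $X_i$ is an independent Bernoulli variable with parameter $p_i$, and set $\mu = \Ex{X} = \sum_i p_i$. For any $t>0$, Markov's inequality applied to the nonnegative random variable $\expf{tX}$ gives
\[
  \Pr{X \ge k\mu} \;=\; \Pr{\expf{tX} \ge \expf{tk\mu}} \;\le\; \expf{-tk\mu}\cdot\Ex{\expf{tX}}.
\]
By independence, $\Ex{\expf{tX}} = \prod_i \Ex{\expf{tX_i}} = \prod_i (1 - p_i + p_i \expf{t})$, and using the elementary inequality $1+y\le \expf{y}$ with $y = p_i(\expf{t}-1)$, this product is at most $\expf{\mu(\expf{t}-1)}$.

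Next I would optimize over $t$. The natural choice $t = \ln k$ (the minimizer of $\mu(\expf{t}-1) - tk\mu$) turns the bound into
\[
  \Pr{X \ge k\mu} \;\le\; \expf{\mu(k-1) - k\mu \ln k} \;=\; \expf{-\mu k(\ln k - 1) - \mu} \;\le\; \expf{-\mu k(\ln k - 1)},
\]
since $-\mu\le 0$. This gives the first bound of the Fact. The second bound is then a purely algebraic weakening: for $k\ge 4$, one has $\ln k \ge \ln 4 > 4/3$, so $\tfrac34 \ln k \ge 1$ and therefore $\ln k - 1 \ge \tfrac14 \ln k$; plugging this into the exponent yields $\Pr{X\ge k\mu}\le \expf{-\tfrac14 \mu k \ln k}$, which is the second inequality. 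The quantity $\expf{4/3}\approx 3.79$ is mentioned precisely so that the threshold $k\ge 4$ visibly implies the required $\ln k \ge 4/3$.

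For the third, "equivalent" form, I would simply reparametrize. Writing $k' = k\mu$ (so the hypothesis $k\ge 4$ becomes $k'\ge 4\mu$), the second bound reads
\[
  \Pr{X \ge k'} \;\le\; \expf{-\tfrac14 \mu \cdot (k'/\mu)\cdot \ln(k'/\mu)} \;=\; \expf{-\tfrac14 k' \ln(k'/\mu)},
\]
which is the third inequality after renaming $k'$ back to $k$.

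There is no real obstacle here: the only slightly delicate step is verifying that the numerical threshold $k\ge 4$ is enough to absorb the $-\mu$ term in the exponent via $\ln k - 1 \ge \tfrac14 \ln k$, but this is immediate from $4 > \expf{4/3}$. Everything else is the textbook Chernoff derivation (Markov's inequality on the MGF, the bound $1+y\le \expf y$, and choosing $t = \ln k$), so for a formal writeup I would simply cite the standard reference rather than reproduce these lines in full.
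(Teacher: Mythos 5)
Your derivation is correct: the exponential-moment/Markov argument with $t=\ln k$, the bound $1+y\le\expf{y}$, the absorption of the $-\mu$ term via $\ln k-1\ge\frac14\ln k$ for $k\ge 4>\expf{4/3}$ (assuming, as you rightly do, that the Bernoulli summands are independent), and the reparametrization $k'=k\,\Ex{X}$ are all sound. The paper gives no proof of this Fact and simply cites the standard reference, so your writeup is exactly the textbook argument being deferred to.
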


\medskip

The following fact yields a linear approximation of the exponential function.
\begin{fact}
  \label{trm:exp_lin}
For any $r>0$ and $x\in [0,r]$, it holds that $(\expf{x} - 1) \le x\cdot 
\frac{\expf{r}-1}{r}$.
\end{fact}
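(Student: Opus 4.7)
The plan is to prove Fact~\ref{trm:exp_lin} as a direct consequence of the convexity of the exponential function. Writing $f(x) = \e^{x} - 1$, we have $f''(x) = \e^{x} > 0$, so $f$ is convex on $[0,r]$. Moreover $f(0) = 0$, which is what makes the secant through the origin align nicely with the claimed linear upper bound.

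Concretely, for any $x \in [0,r]$ set $t = x/r \in [0,1]$, so that $x = (1-t)\cdot 0 + t \cdot r$. Applying convexity,
\[
f(x) \;\le\; (1-t)\, f(0) + t\, f(r) \;=\; \frac{x}{r}\,(\e^{r} - 1),
\]
which is exactly the claimed inequality. The endpoint cases $x = 0$ and $x = r$ give equality, matching the fact that a chord touches a convex function at its endpoints.

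As an alternative one-line verification, the function $g(x) = x\,(\e^{r}-1)/r - (\e^{x} - 1)$ satisfies $g(0) = g(r) = 0$ and $g''(x) = -\e^{x} < 0$, so $g$ is concave on $[0,r]$ and hence $g \ge 0$ on the whole interval. There is no real obstacle here: the only thing to be a little careful about is to state the convexity argument for the correct orientation (chord above graph, rather than below), which is immediate from $f'' > 0$. No probabilistic ingredients or appeals to earlier lemmas are needed.
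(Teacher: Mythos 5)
Your proof is correct and takes essentially the same route as the paper: both argue that $\e^x-1$ is convex and therefore lies below the chord through $(0,0)$ and $(r,\e^r-1)$, which is exactly the linear bound $x\,(\e^r-1)/r$. You simply spell out the convex-combination step (and add a second verification) that the paper leaves implicit.
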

\begin{proof}
The function $\exp(x)-1$ is convex and it goes through the points $(0,0)$ and 
$(r,\expf{r}-1)$, as does the function $x\cdot \frac{\expf{r}-1}{r}$.
\end{proof}

\medskip

\begin{fact} \label{trm:geomSum}
  For every $c \in ]0,1[$ it holds
  \begin{eqnarray*}
  \sum_{k=0}^{\infty} c^k &=& \frac{c}{1-c} \\
  \sum_{k=l}^{\infty} c^k &=& \frac{c^l}{1-c} \\
  \end{eqnarray*}

\end{fact}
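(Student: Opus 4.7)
The plan is to prove both identities by the standard telescoping argument for geometric series, noting that the elementary derivation automatically yields the answers claimed (modulo an apparent index convention issue between the two parts, which I will flag).

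For the first identity, I would fix $c\in\,]0,1[$ and consider the finite partial sum $S_N=\sum_{k=0}^{N}c^k$. Multiplying through by $c$ gives $cS_N=\sum_{k=1}^{N+1}c^k$, and subtracting the two expressions makes all intermediate terms cancel, leaving $(1-c)S_N = 1 - c^{N+1}$. Dividing by $1-c$ (permissible since $c\neq 1$) and letting $N\to\infty$, the term $c^{N+1}$ vanishes because $|c|<1$, so the limit of $S_N$ exists and equals $\tfrac{1}{1-c}$. To match the stated right-hand side $\tfrac{c}{1-c}$, one reads the summation as starting from $k=1$ (an index shift of one), in which case the same telescoping, applied to $\sum_{k=1}^{N}c^k = c\sum_{k=0}^{N-1}c^k$, produces exactly $\tfrac{c}{1-c}$ in the limit.

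For the second identity, I would factor $c^l$ out of every term and reindex: writing $\sum_{k=l}^{\infty}c^k = c^l\sum_{j=0}^{\infty}c^j$ (with $j=k-l$), and invoking the geometric-series limit established above on the inner sum over $j\ge 0$, yields $c^l\cdot\tfrac{1}{1-c} = \tfrac{c^l}{1-c}$, as claimed. Setting $l=1$ here recovers $\tfrac{c}{1-c}$, which is consistent with the first identity under the shifted-index reading.

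The only genuine obstacle is the bookkeeping mismatch between the two stated formulas: substituting $l=0$ into the second gives $\tfrac{1}{1-c}$, whereas the first asserts $\tfrac{c}{1-c}$, so one of the two formulas implicitly reindexes. Since the fact is invoked in the proof of Lemma~\ref{trm:virtual_potential} only through tail sums $\sum_{k\ge l}(d/(8x_e))^{k/2}$ with $l\ge 4$, the quantitatively useful statement is the second one, and both can be proved in one stroke by the telescoping computation above followed by the factor-out-$c^l$ reduction. No auxiliary estimates or concentration bounds are needed; the proof is essentially two lines of algebra plus the observation $c^{N+1}\to 0$.
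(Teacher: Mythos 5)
Your proof is correct and is the standard telescoping derivation; the paper states this fact without any proof, so there is nothing to compare against. You are also right to flag the indexing error in the first displayed identity: as written, $\sum_{k=0}^{\infty} c^k$ equals $\frac{1}{1-c}$, not $\frac{c}{1-c}$ (the latter is the sum starting at $k=1$, i.e.\ the $l=1$ case of the second identity), and since the fact is only invoked in the proof of Lemma~\ref{trm:virtual_potential} for a tail sum starting at $l\ge 8$, only the second, correctly stated identity is actually needed.
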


\medskip

\begin{fact}[Jensen's Inequality]
  \label{trm:jensen}
Let $f \colon \RR \rightarrow \RR$ be a convex function, and let
$a_1,\ldots,a_k,x_1,\ldots,x_k \in \RR$. Then
\[\begin{array}{crcl}
  & \displaystyle f \left( \frac{\sum_{i=1}^k a_i x_i}{\sum_{i=1}^k a_i} \right)
  & \le &
    \displaystyle \frac{\sum_{i=1}^k a_i f(x_i)}{\sum_{i=1}^k a_i} \enspace.
\end{array}\]

If $f(x) = x^2$, then    
\[\begin{array}{crcl}
  & \displaystyle \left( \frac{\sum_{i=1}^k a_i x_i}{\sum_{i=1}^k a_i} \right)^2
  & \le &
    \displaystyle \frac{\sum_{i=1}^k a_i (x_i)^2}{\sum_{i=1}^k a_i} \\[5ex]
  \Leftrightarrow
  & \displaystyle \frac{1}{\sum_{i=1}^k a_i} \cdot \left(\sum_{i=1}^k
  a_i x_i \right)^2
  & \le &
    \displaystyle \sum_{i=1}^k a_i f(x_i) \enspace.	
\end{array}\]
\end{fact}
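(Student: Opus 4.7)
The plan is to prove the general Jensen inequality by induction on the number of terms $k$, after rescaling the coefficients so that they form a probability distribution; the $f(x)=x^2$ case then drops out as an immediate specialization, since $x \mapsto x^2$ is convex.

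First I would assume without loss of generality that $a_1,\ldots,a_k \ge 0$ with $W := \sum_i a_i > 0$ (otherwise the statement is vacuous or undefined), and set $w_i := a_i/W$, so that $\sum_i w_i = 1$ and $w_i \ge 0$. The target becomes
\[
f\Bigl(\sum_{i=1}^k w_i x_i\Bigr) \;\le\; \sum_{i=1}^k w_i f(x_i).
\]
The base case $k=2$ is then exactly the defining property of a convex function: $f(w_1 x_1 + w_2 x_2) \le w_1 f(x_1) + w_2 f(x_2)$ whenever $w_1 + w_2 = 1$ and $w_1,w_2 \ge 0$.

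For the inductive step from $k-1$ to $k$, I would assume $w_1 < 1$ (the case $w_1 = 1$ being trivial since then all other $w_i$ vanish) and decompose the mean as
\[
\sum_{i=1}^k w_i x_i \;=\; w_1 x_1 + (1-w_1)\sum_{i=2}^k \frac{w_i}{1-w_1}\,x_i,
\]
noting that the coefficients $w_i/(1-w_1)$ for $i \ge 2$ again sum to one. Applying the two-point convexity bound to the outer split yields $f(\sum_i w_i x_i) \le w_1 f(x_1) + (1-w_1) f(\sum_{i\ge 2} \tfrac{w_i}{1-w_1} x_i)$, and then the induction hypothesis applied to the $k-1$ term average inside $f$ delivers the claim. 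The special case $f(x) = x^2$ is then obtained by specializing to the (clearly convex) function $x^2$; the two displayed inequalities in the Fact are algebraically equivalent, differing only by multiplication of both sides by $\sum_i a_i$.

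The main, and essentially only, subtlety here is the bookkeeping around normalization and the separation of the degenerate case $w_1 = 1$, so that the weights $w_i/(1-w_1)$ used in the inductive step are well defined and genuinely sum to one. No deep technical obstacle is anticipated, since Jensen's inequality is a classical fact whose proof by induction on the number of points is completely standard.
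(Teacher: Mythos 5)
Your proof is correct, and it is the standard textbook argument: normalize the weights, take the two-point convexity inequality as the base case, and induct by splitting off the first weight. The paper itself offers no proof of this Fact --- it is stated in the appendix as a classical result and used as a black box (via the $f(x)=x^2$ specialization) in the proofs of Lemma~\ref{trm:virtual_potential} and Theorem~\ref{trm:convergence_bicriteria} --- so there is no authorial argument to compare against; your write-up simply supplies the missing standard proof. One point in your favour worth keeping: you correctly observe that the hypothesis must include $a_i \ge 0$ with $\sum_i a_i > 0$, whereas the paper's statement allows arbitrary $a_1,\ldots,a_k \in \RR$, under which the inequality is false in general (e.g.\ negative weights can reverse it). In every application in the paper the weights are congestions or load fractions, hence nonnegative, so your added hypothesis is harmless and in fact necessary for the Fact to be true as a standalone statement.
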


\medskip

\begin{lemma}[\cite{Fischer/etal:WardropFinite:DC:08}]
  \label{trm:martingalelike}
  Let $X_0,X_1,\ldots$ denote a sequence of non-negative 
  random variables and assume that for all $i\ge 0$
  \[ \Ex{X_i \mid X_{i-1}=x_{i-1}} \le x_{i-1} - 1 \]
  and let $\tau$ denote the first time $t$ such that $X_t=0$. Then,
  \[ \Ex{\tau\mid X_0=x_0} \le x_0\enspace. \]
\end{lemma}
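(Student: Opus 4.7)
The plan is to augment the process by the elapsed time so that the drift condition turns into a true super-martingale property, and then to invoke optional stopping. Concretely, I would define
\[ Y_t \;=\; X_{t\wedge\tau} \;+\; (t\wedge\tau) \enspace. \]
On the event $\{\tau > t-1\}$ we have $Y_t - Y_{t-1} = X_t - X_{t-1} + 1$, so the hypothesis $\Ex{X_t\mid X_{t-1}=x_{t-1}} \le x_{t-1}-1$ yields $\Ex{Y_t\mid \mathcal{F}_{t-1}} \le Y_{t-1}$ on that event. On the complementary event $\{\tau\le t-1\}$ the process is stopped, so $Y_t=Y_{t-1}$ trivially. Together this gives that $(Y_t)_{t\ge 0}$ is a non-negative super-martingale.

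Iterating the super-martingale inequality yields $\Ex{Y_t\mid X_0=x_0} \le Y_0 = x_0$ for every $t\ge 0$. Since $X$ takes non-negative values, $Y_t \ge t\wedge\tau$, so
\[ \Ex{t\wedge\tau \mid X_0=x_0} \;\le\; x_0 \enspace. \]
As $t\to\infty$, the variables $t\wedge\tau$ increase monotonically to $\tau$, so by the monotone convergence theorem $\Ex{\tau\mid X_0=x_0} \le x_0$, which is the claim. (Note that the argument implicitly shows $\tau<\infty$ almost surely, since otherwise the left-hand side would diverge.)

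The only mildly delicate step is justifying the super-martingale inequality cleanly on the stopping event; I would handle this by writing $Y_t - Y_{t-1} = \mathbf{1}_{\{\tau\ge t\}}(X_t - X_{t-1} + 1)$ and taking conditional expectation, using that $\{\tau\ge t\}$ is $\mathcal{F}_{t-1}$-measurable. Beyond that, the argument is entirely standard and there is no real obstacle — the essence is simply that adding $+1$ per step to $X_t$ converts the drift condition into a super-martingale, and non-negativity plus monotone convergence do the rest.
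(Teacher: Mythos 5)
Your proof is correct. The paper itself gives no argument for this lemma---it is imported verbatim from the cited reference---so there is nothing in-paper to contrast with; your construction of the stopped process $Y_t = X_{t\wedge\tau} + (t\wedge\tau)$, the supermartingale bound $\Ex{Y_t}\le x_0$, and monotone convergence on $t\wedge\tau\uparrow\tau$ is exactly the standard proof of this statement. One small point worth noting: as literally written the hypothesis cannot hold when $x_{i-1}=0$ for a non-negative process (it would force $\Ex{X_i\mid X_{i-1}=0}\le -1$), and your proof handles this correctly by only invoking the drift condition on the event $\{\tau\ge t\}$, where $X_{t-1}>0$; you also implicitly read the hypothesis as conditioning on the full history $\mathcal{F}_{t-1}$ rather than on $X_{t-1}$ alone, which is the intended (Markovian) reading in the paper's application.
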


\medskip

\begin{lemma}[\cite{Fischer/etal:WardropFinite:DC:08}]
  \label{trm:martingalelike_factor}
  Let $X_0,X_1,\ldots$ denote a sequence of non-negative random
  variables and assume that for all $i\ge 0$
  $\Ex{X_i \mid X_{i-1} = x_{i-1}} \le x_{i-1} \cdot \alpha$
  for some constant $\alpha\in(0,1)$. Furthermore, fix some constant
  $x^*\in(0,x_0]$ and let $\tau$ be the random variable that describes
  the smallest $t$ such that $X_t \le x^*$. Then,
  \[
  \Ex{\tau\mid X_0=x_0}
  \le \frac{2}{\log(1/\alpha)}\cdot \log\left(\frac{x_0}{x^*}\right) 
  \enspace.
  \]
  Again, as a consequence of Lemma~\ref{trm:martingalelike} the
  expected time until the potential decreases from at most $\Phi$ to
  $\Phi$ can be found in the appendix, and which is proved, e.\,g.,
  in~\cite{Fischer/etal:WardropFinite:DC:08}.
\end{lemma}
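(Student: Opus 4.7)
The plan is to reduce to the setting of Lemma~\ref{trm:martingalelike} via a logarithmic change of variables combined with Jensen's inequality. Define the normalized log-potential $Y_t = \log(X_t/x^*)/\log(1/\alpha)$, so that $\tau$ is exactly the first time $Y_t \le 0$ and $Y_0 = \log(x_0/x^*)/\log(1/\alpha)$. Since $\log$ is concave, for any $t < \tau$ Jensen's inequality gives
\[
\Ex{\log X_{t+1} \mid X_t} \;\le\; \log \Ex{X_{t+1} \mid X_t} \;\le\; \log(\alpha \, X_t) \;=\; \log X_t - \log(1/\alpha),
\]
and dividing by $\log(1/\alpha)$ yields $\Ex{Y_{t+1} \mid Y_t = y} \le y - 1$ whenever $y > 0$. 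On the process stopped at $\tau$, the variables $Y_t$ remain non-negative and satisfy the drift hypothesis of Lemma~\ref{trm:martingalelike}, which then gives $\Ex{\tau} \le Y_0 = \log(x_0/x^*)/\log(1/\alpha)$.

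A more elementary route, which I would present in parallel as a sanity check and as a source of the factor $2$ in the statement, avoids Jensen entirely and uses Markov's inequality directly. Iterating the hypothesis gives $\Ex{X_t} \le \alpha^t x_0$, hence
\[
\Pr{\tau > t} \;\le\; \Pr{X_t > x^*} \;\le\; \alpha^t \, \frac{x_0}{x^*}.
\]
Summing via $\Ex{\tau} = \sum_{t\ge 0}\Pr{\tau>t}$ and splitting the sum at $t_0 = \lceil \log(x_0/x^*)/\log(1/\alpha)\rceil$, the first $t_0$ terms are bounded trivially by $1$ and the geometric tail from $t_0$ onward by $1/(1-\alpha)$, producing $\Ex{\tau} \le t_0 + 1/(1-\alpha)$. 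Using $1-\alpha \ge \tfrac{1}{2}\log(1/\alpha)$ in the regime $\alpha \ge 1/2$, and handling small $\alpha$ separately by noting that then $t_0$ itself is at most $2/\log(1/\alpha)$-scaled, collapses the two terms into the claimed bound $\tfrac{2}{\log(1/\alpha)}\log(x_0/x^*)$.

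The main technical obstacle is the non-negativity requirement of Lemma~\ref{trm:martingalelike}: the transformed variable $Y_t$ can jump to a very negative value in a single step once $X_t$ falls well below $x^*$, because Jensen only controls the conditional expectation and not the lower tail of $X_{t+1}$. The resolution is to work with the stopped process (equivalently, with $\max\{Y_t, 0\}$), which preserves both non-negativity and the one-step expected decrease of at least $1$ at every index with $Y_t > 0$; values below $0$ only help the stopping event and are never revisited. With this caveat handled, either the Jensen-based or the Markov-based route completes the proof by routine calculation.
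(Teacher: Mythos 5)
Neither of your two routes closes as written, and the paper itself offers no proof to compare against: the lemma is imported from \cite{Fischer/etal:WardropFinite:DC:08} (its closing sentence in the statement is even garbled), so your argument has to stand on its own. The Jensen route correctly derives $\Ex{Y_{t+1}\mid Y_t=y}\le y-1$, but your resolution of the non-negativity obstacle is wrong. Passing to $Z_t=\max\{Y_t,0\}$ can only \emph{increase} the conditional expectation, since $Z_{t+1}\ge Y_{t+1}$ pointwise; explicitly $\Ex{Z_{t+1}\mid Y_t=y}=\Ex{Y_{t+1}\mid Y_t=y}+\Ex{\max\{-Y_{t+1},0\}\mid Y_t=y}$, and that second term --- the overshoot below $x^*$, which is unbounded and even infinite if $X_{t+1}=0$ with positive probability --- is exactly the quantity you must control. (For $y\in(0,1)$ the truncated drift condition $\Ex{Z_{t+1}}\le y-1<0$ is outright impossible because $Z_{t+1}\ge0$.) The stopped process, conversely, keeps the drift but is not non-negative, since it freezes at a possibly very negative terminal value. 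So neither variant satisfies the hypotheses of Lemma~\ref{trm:martingalelike}; the overshoot needs a quantitative bound, not a remark, and it is plausibly the very reason the statement carries a factor $2$.

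The Markov route is sound and, in my view, the right one: iterating the drift gives $\Ex{X_t}\le\alpha^t x_0$, hence $\Pr{\tau>t}\le\Pr{X_t>x^*}\le\alpha^t x_0/x^*$, and summing yields $\Ex{\tau}\le t_0+1/(1-\alpha)$ with $t_0=\lceil\log(x_0/x^*)/\log(1/\alpha)\rceil$. But the final ``collapse'' into $\tfrac{2}{\log(1/\alpha)}\log(x_0/x^*)$ is not justified and cannot be in general: absorbing the additive $1+1/(1-\alpha)$ into the main term requires $\log(x_0/x^*)$ to exceed roughly $\log(1/\alpha)+2$, and for $x_0$ only slightly larger than $x^*$ one has $\Ex{\tau}\ge1$ (since $X_0>x^*$ forces $\tau\ge1$) while the claimed right-hand side tends to $0$, so the inequality as stated is itself false in that regime and no bookkeeping can rescue it. The honest conclusion of your argument is the additive bound $\Ex{\tau}\le\lceil\log_{1/\alpha}(x_0/x^*)\rceil+1/(1-\alpha)$, which your Markov computation fully delivers and which suffices for every application of the lemma in this paper; either prove that form, or state the multiplicative form with an explicit hypothesis such as $x_0/x^*\ge \e^2/\alpha$.
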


\end{appendix}

\end{document}